\title{Reachability in Biochemical Dynamical Systems by Quantitative Discrete Approximation}
\author{L.~Brim, J.~Fabrikov\'a, S. Dra\v{z}an, and D. \v{S}afr\'anek\thanks{The work has been supported by the Grant Agency of Czech Republic grant No. 201/09/1389 and by the Czech ministry of education intent No. MSM0021622419.}
\institute{Faculty of Informatics\\
Masaryk University\\
Botanick\'a 68a, Brno, Czech Republic}
\email{safranek@fi.muni.cz}
}
\newtheorem{theorem}{Theorem}[section]
\newtheorem{definition}{Definition}[section]
\newtheorem{example}{Example}[section]
\newtheorem{lemma}{Lemma}[section]
\newtheorem{corollary}{Corollary}[section]
\newtheorem{remark}{Remark}[section]
\newtheorem{notation}{Notation}[section]
\newenvironment{proof}[1][Proof]{\begin{trivlist}
 \item[\hskip \labelsep {\bfseries #1}]}{\end{trivlist}}
\newcommand{\qed}{$\Box$}
\newcommand{\ra}[1]{\stackrel{#1}{\rightarrow}}
\begin{document}
\maketitle

\begin{abstract}
In this paper a novel computational technique for finite discrete approximation of continuous dynamical systems suitable for a significant class of biochemical dynamical systems is introduced. The method is parameterized in order to affect the imposed level of approximation provided that with increasing parameter value the approximation converges to the original continuous system. By employing this approximation technique, we present algorithms solving the reachability problem for biochemical dynamical systems. The presented method and algorithms are evaluated on several exemplary biological models and on a real case study. This is a full version of the paper published in the proceedings of CompMod 2011.
\end{abstract}

\section{Introduction}
\label{sec:introduction}

Under the modern holistic paradigm provided by \emph{systems biology}~\cite{FSBKitano}, genome-scale knowledge of individual components is combined with knowledge of interactions underlying the physiology of living organisms. The central goal of systems biology is to integrate all available biological data and to reconstruct \emph{executable models}~\cite{FH07} which allow to investigate complicated behaviour emerging from the underlying biochemistry. 
An important dimension is the quantitative aspect of the data and processes being modeled. 

With respect to~\cite{Fe87}, we consider biological models to be captured by the notion of a \emph{biochemical dynamical system} consisting of variables describing a certain quantity of the respective species in time (e.g., number of molecules or molar concentration). Variable values evolve in time with respect to rules modeling the effect of reactions. The space of all possible configurations of variable values is referred as the \emph{state space}.

There exist several modeling approaches that differ in abstraction employed for modeling of time, variable values, and molecular interaction effects. 
The most commonly used approach concerns systems of ordinary differential equations (ODE)~\cite{OP74} where both time and model variables are interpreted as continuous quantities. Effects of interactions are modeled in terms of continuous deterministic updates of variables. Variable values represent molar concentrations of the species. In general, the ODE approach relies on many physical and chemical assumptions simplifying thermodynamic conditions under which particular biochemical phenomena can be modeled correctly~\cite{K70}. It is important to note that even simple interactions such as second order reactions lead to non-linear ODEs. However, under certain assumptions, biological systems make specific subclasses of general non-linear dynamical systems. Such a specialization motivated development of specific analysis techniques~\cite{Fe87,TNO96,BBW08,KB08}.

Nevertheless, dimensionality and complexity of biological models preclude satisfactory application of analysis methods implying that to explore the model dynamics the only practicable method is numerical simulation. Since numerical simulation generates an approximate solution (a trajectory) starting from a single initial point in the continuous state space, the scope of such exploration is limited to the particular trajectory only. This is sufficient for ``local'' analysis provided that initial conditions are precisely known. However, studied systems are typically under-determined in terms of uncertain quantitative parameters and initial conditions. Therefore generalization of the exploration scope is necessary to reveal and understand the complicated emergent behaviour. An important example of a problem which cannot be effectively solved by local methods is \emph{global temporal property} -- the problem to decide whether a given dynamical phenomenon, e.g., oscillation or variables correlation, is globally present/absent for all considered initial conditions~\cite{CES86,MRM+08}. 

In this paper we limit ourselves to a subclass of dynamical phenomena representing \emph{reachability} of a given portion of the state space. Example of a global temporal property problem that belongs to this subclass is to identify minimal or maximal concentration of species reachable from a particular set of initial conditions. 

In general, the reachability problem is undecidable due to unboundedness and uncountability of the state space. However, since concentrations of species cannot expand infinitely, state spaces of biological systems dynamics can be considered bounded in most cases. Analysis can be therefore considered indirectly on suitable finite discrete approximations of continuous state spaces~\cite{KB08,BRJ+08}. 


For a significant class of biochemical dynamical systems determined by multi-affine vector fields (i.e., affine in each variable), there has been developed an over-approximative abstraction technique based on partitioning the continuous state space by a finite set of \emph{rectangles}. Rectangles determine states of a \emph{rectangular transition system} representing the finite discrete (over)approximation of the continuous state space~\cite{BH06}, as shown in Figure~\ref{fig:abstscheme}a. The rectangular abstraction has been employed in~\cite{KB08} for reachability analysis and further elaborated by model checking methods in~\cite{HIBI09}. The results show that the extent of spurious behaviour introduced by the abstraction is typically very high thus limiting satisfactory application of the method. The problem is based mainly on the fact that a transition between any two individual rectangles over-approximates the vector field on the border between the rectangles (a so-called \emph{facet}, see Figure~\ref{fig:abstscheme}b) provided that the information regarding which trajectories starting in an entry facet evolve through a particular exit facet is abstracted out. This causes the rectangular transition system to generate many rectangle sequences in which there is no corresponding trajectory of the original continuous system embedded. Moreover, the extent of such spurious behaviour is not directly eliminated by increasing the partition density.

When analysing approximate models as in systems biology, the need for precise results critically required in systems verification can be relaxed provided that a suitable approximation of the solution can be even more efficient to obtain useful results. Henceforth, in the field of complex systems exhaustive techniques are often combined with approximative methods thus making a certain shift in the way of applying formal methods~\cite{RBF+09,GP06,JCL+09}.

\subsection{Our Contribution}

We present a new technique for discrete approximation of biochemical systems with dynamics given by a system of ODEs with multi-affine right hand side.
Our discrete approximation is not an exact abstraction wrt the original continuous system, but rather an approximation that approaches exact reachability with decreasing approximation granularity. 
While still assuming the rectangular partition at the background, we employ a \emph{
measure} that enables local quantification of the amount of trajectories evolving on a rectangle in a particular facet-to-facet direction. To this end, every rectangle is augmented with a local memory representing the information at which part (\emph{entry set}) of the entry facet it has been entered. On each entry set, we identify \emph{focal subsets} from which all trajectories lead to the same exit facet. 
In Figure~\ref{fig:abstscheme}c, there are two different states of a \emph{quantitative discrete approximation automaton} (QDAA) depicted. Both states share the same rectangle $[1,1.5]\times[1,1.5]$ and they differ in entry sets (marked yellow). The upper state with entry set $\{1.5\}\times[1,1.5]$ has only one focal set - all trajectories from its entry set exit the state through the facet $[1,1.5]\times\{1\}$. 
The second state with entry set $[1,1.5]\times\{1.5\}$ has two focal subsets made by the green and the red part of the entry facet, respectively. 

Transitions from a state with given entry set have weights assigned to themselves.
Consider a transition from a state $A$ to a state $B$.
The transition exists if there is a part $P$ of the entry set of $A$ such that the trajectories of ODE solutions go from $P$ to $B$.
Weight of a transition from $A$ to $B$ corresponds to the ($n-1$-dimensional) volume of $P$ divided by the volume of the entry set of $A$.
In this manner, the measure reflects amounts of trajectories proceeding in a particular direction. Rectangle regions related by weighted transitions make the QDAA which is 
a discrete-time Markov chain. (See Theorem~\ref{thm:mc} and its proof.)

From a computational viewpoint, the continuous volumes are finitely approximated by discretization on a uniform grid. Local numerical simulations are employed to identify the entry regions and focal subsets. The density of facet discretization grid is considered as the \emph{method parameter}. Because of combining numerical simulation with rectangular abstraction, the resulting QDAA makes neither an over- nor an under-approximation of the original continuous system. Since for every sequence of states the approximate volume measure converges to the continuous volume with increasing discretization parameter, the parameter indirectly affects the correspondence between the original continuous behaviour and its approximation. This makes the method sufficient for approximating reachability in complex biochemical dynamical systems.


In general, the following main contributions are brought by this paper.

\begin{enumerate}
\item A novel computational technique for finite discrete approximation of multi-affine dynamical systems by means of QDAA.

\item Showing that QDAA converges to the original continuous system behaviour. (See Theorem~\ref{thm:kappa_lebesgue} and its proof.)

\item A reachability algorithm for QDAA.

\item Evaluation on elementary models and an \emph{E. Coli} case study.
\end{enumerate}

Since the most common application of the considered systems class is the domain of biochemical dynamical systems modeled directly by rules of mass action kinetics~\cite{HJ72}, evaluation of the method and algorithms is realized on biological models fitting this framework.

\begin{figure*}
\begin{center}
\vspace{-2mm}
\includegraphics[scale=.3]{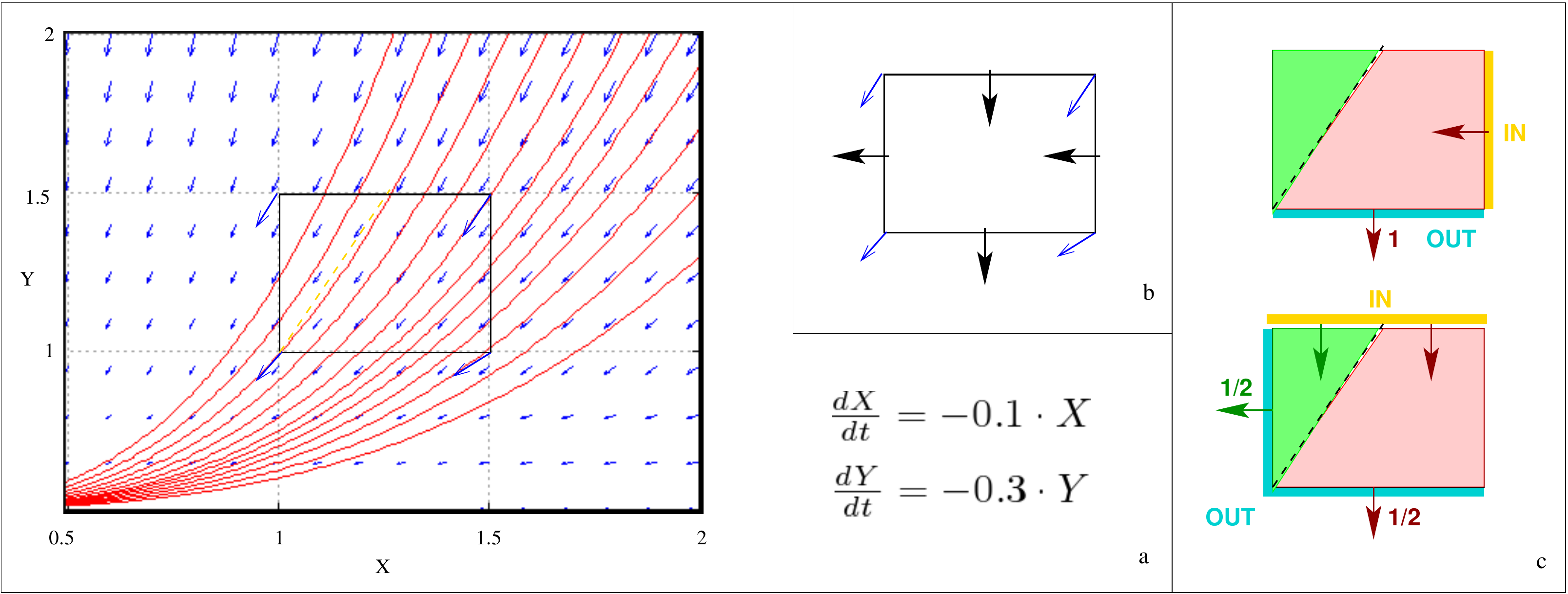}
\end{center}
\caption{(a) Vector field of a linear system partitioned by thresholds, (b) the principle of rectangular abstraction, (c) and quantification of the extent of over-approximation in terms of transition weights. The dashed line inside the rectangle demonstrates the approximate border separating trajectories exiting through different facets.}
\label{fig:abstscheme}
\end{figure*}

\subsection{Related Work}

Discrete approximation methods are commonly used in continuous and hybrid systems analysis (see~\cite{ADF+06} for an overview regarding reachability) to handle the uncountability of the state space. Direct methods work on the original system and rely on a successor operation iteratively computing the reachable set whereas indirect methods abstract from the continuous model by a finite structure for which the analysis is simpler. Our method belongs to the latter class, since 
it uses numerical simulations and creates the abstraction automaton. Considering a fixed set of initial conditions, there is a certain overhead with generating states of the automaton in comparison with simple numerical simulations. 
However, the advantage of constructing the automaton is obtaining a global view of the dynamics. Moreover, in addition to rectangular abstraction, the automaton is augmented with weighted transitions which represent quantitative information describing volumes of subsets of initial conditions belonging to attraction basins of different parts of the phase space.

An indirect method based on rectangular abstraction automaton making the finite quotient of the continuous state space has been employed, e.g., in~\cite{KB08,HKI+07,BRJ+08}. In general, these methods rely on results~\cite{BH06,HS04} and are applicable to (piece-wise) affine or (piece-wise) multi-affine systems.
Although not addressed formally in this paper, our technique can be considered as a refinement of~\cite{KB08}. However, we focus on obtaining satisfactory approximate results eliminating the extent of spurious behaviour coming from conservativeness of rectangular abstraction. 
Our technique can be employed for the recognition of spurious behaviour of the rectangular abstraction transition system. 

The technique presented in~\cite{MB08} employes timed automata for the finite quotient of a continuous system as an alternative to piece-wise linear approximations. Another indirect technique adapted to multi-affine biological models is~\cite{BHK07}. The approach also employes rectangular abstraction, but results in less conservative reachable sets by means of polyhedral operations. In~\cite{ADF+06,DHR05} there are techniques proposed for rectangular refinement that go towards reduction of over-conservativeness. These techniques work fine for linear systems while leaving the non-linear systems as a challenge.

Direct methods are mostly based on hybridization realized by partitioning the system state space into domains where the local continuous behaviour is linearized~\cite{ADG07}. This method, in an improved form, has been applied to non-linear biochemical dynamical systems~\cite{DLM+09}. In general, direct methods give good results for low-dimensional systems and small initial sets. In comparison with indirect approaches, they are computationally harder. From this viewpoint, our approach lies between both extremes.



\section{Preliminaries}
\label{sec:preliminaries}

\subsection{Basic definitions and facts}

Let $\mathbb{N}$ denote the set of positive integers, $\mathbb{N}_0$ the set $\mathbb{N}\cup\{0\}$, and $\mathbb{R}^+_0$ the set of nonnegative real numbers. For $n \in \mathbb{N}$, denote $\mathbb{R}^n$ the standard $n$-dimensional Euclidean space with standard topology and Euclidean norm $\left|\cdot\right|: \mathbb{R}^n \rightarrow \mathbb{R}^+_0$. For an arbitrary function $f$ we use the common notation $\mathit{dom}(f)$ for the domain of $f$.

For every $i\in\{1,\dotsc, n\}$ assume $a_i,b_i \in \mathbb{R}$ such that $a_i \leq b_i$. Denote $I = \prod_{i=1}^n [a_i,b_i]$ an \emph{$n$-dimensional closed interval in} $\mathbb{R}^n$ and $\mathit{vol}(I)$ the \emph{$n$-dimensional volume of $I$} defined as $\mathit{vol}(I)= \prod_{i=1}^n (b_i - a_i)$. Further denote $\mathit{Inter}(I)$ the \emph{interior of $I$}, defined as the cartesian product of open intervals $\prod_{i=1}^n (a_i,b_i)$.

For any $X \subseteq \mathbb{R}^n$ denote $\lambda^*_n(X)$ the \emph{Lebesgue outer measure} (on $\mathbb{R}^n$) of the set $X$. Basically $\lambda^*_n(X)$ is the minimal nonnegative real number such that whenever $X$ can be covered by a sequence of closed intervals in $\mathbb{R}^n$ the sum of volumes of these intervals is greater then or equal to $\lambda^*_n(X)$.
(For precise definitions see~\cite{WalterRudinMA}.)
Note that $\lambda^*_n(X) < \infty$ for every bounded set $X$ and $\lambda^*_n(I)=\mathit{vol}(I)$ for every $n$-dimensional interval $I$.

Let $n\geq 2, i \leq n, c \in \mathbb{R}$. We use
$\mathbb{R}^{n-1}_i(c)$ to denote the hyper-plane $\mathbb{R}^{n-1}_i(c)=\{\langle x_1,\dotsc,x_n \rangle \in \mathbb{R}^n \mid x_i = c\}$. Denote $\hat{\pi}_i:\mathbb{R}^n\rightarrow \mathbb{R}^{n-1}$ the projection omitting the $i$th variable,
$\hat{\pi}_i(\langle x_1, \dotsc, x_n \rangle) = \langle x_1, \dotsc, x_{i-1}, x_{i+1}, \dotsc x_n\rangle$. Let $X \subseteq \mathbb{R}^{n-1}_i(c)$. We extend the notion of the $(n-1)$-dimensional Lebesgue outer measure to such sets $X$ and denote $\lambda^*_{n-1}(X)$ the $(n-1)$-dimensional Lebesgue outer measure of $\hat{\pi}_i(X)$.
 

Let $f: \mathbb{R}^{n} \rightarrow \mathbb{R}^{n}$ be a continuous function (an autonomous vector field). We say that 
\begin{equation}
 \label{eq:autsystem}
\dot{x} = f(x)
\end{equation}
is an \emph{autonomous ODE system}. 
An important property of autonomous systems is the fact that if $y(t)$ is a solution of~(\ref{eq:autsystem}) on an open interval $(a,b)$, then $y(t+t_0)$ is also a solution (defined on interval $(a-t_0,b-t_0)$).

 A function $f:\mathbb{R}^n \rightarrow \mathbb{R}^n$ \emph{satisfies the Lipschitz condition locally} on $\mathbb{R}^n$, if for every $x\in \mathbb{R}^n$ there exists an open set $U \subseteq \mathbb{R}^n$, $x\in U$ and a constant $L \in \mathbb{R}$ such that for every two points $x_1, x_2 \in U$ the inequality 
$\left|f(x_1) - f(x_2) \right| \leq L \cdot \left| x_1 - x_2 \right|$ holds.

\begin{theorem}[Trajectories of solutions of an autonomous system]
 \label{thm:autonomous_traj}
Let~(\ref{eq:autsystem})
be an autonomous ODE system, where $f$ is defined on $\mathbb{R}^n$ and let $f$ satisfy the Lipschitz condition locally on $\mathbb{R}^n$. Let $x$ be an inextendible solution of system~(\ref{eq:autsystem}). Then $dom(x)$ is an open interval, 
and for every point $\alpha\in\mathbb{R}^n$ there exists exactly one trajectory of an inextendible solution $x(t)$ of system~(\ref{eq:autsystem}) coming through $\alpha$.
%
\end{theorem}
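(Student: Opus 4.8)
The plan is to reduce the statement to the classical Picard--Lindel\"of existence--uniqueness theorem for the initial value problem, whose hypotheses are met exactly by the local Lipschitz condition imposed on $f$, and then to combine local uniqueness with the time-translation invariance of autonomous systems recalled just before the theorem to obtain the global statement about trajectories. First I would dispose of the claim that $\mathit{dom}(x)$ is an open interval. Since solutions of an ODE are by definition defined on intervals, $\mathit{dom}(x)$ is connected, so it suffices to show it is open. If $\mathit{dom}(x)$ contained its right endpoint $b$, one could apply the local existence theorem to the IVP $\dot{y}=f(y)$, $y(b)=x(b)$ to obtain a solution on a neighbourhood of $b$ and splice it with $x$, yielding a proper extension and contradicting inextendibility; the symmetric argument rules out a left endpoint. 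Hence $\mathit{dom}(x)$ is open.

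For existence of a trajectory through a prescribed $\alpha\in\mathbb{R}^n$, I would invoke Picard--Lindel\"of: because $f$ is locally Lipschitz, the IVP $\dot{x}=f(x)$, $x(0)=\alpha$ admits a solution on some open interval around $0$, and by the standard maximal-extension construction this solution extends to an inextendible solution whose trajectory passes through $\alpha$. This furnishes at least one such trajectory.

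The core of the argument is uniqueness of the trajectory. Suppose $x_1$ and $x_2$ are inextendible solutions whose trajectories both contain $\alpha$, say $x_1(t_1)=\alpha=x_2(t_2)$. Using the autonomy property, the time-shift $\tilde{x}_2(t):=x_2(t+t_2-t_1)$ is again an inextendible solution, and it satisfies $\tilde{x}_2(t_1)=\alpha=x_1(t_1)$. I would then show $x_1\equiv\tilde{x}_2$ by a connectedness argument on the (interval) intersection of their domains: the coincidence set $\{t:x_1(t)=\tilde{x}_2(t)\}$ is nonempty since it contains $t_1$, closed by continuity, and open by the local uniqueness clause of Picard--Lindel\"of applied at each point of coincidence; being open, closed, and nonempty in a connected set, it is the whole intersection, and inextendibility then forces the two domains to agree. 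Consequently $x_2$ is merely a time-reparametrisation of $x_1$, so the two share the same trajectory, proving that the trajectory through $\alpha$ is unique.

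The main obstacle is precisely this last step: Picard--Lindel\"of delivers only \emph{local} uniqueness, so the real work lies in promoting it to equality on the full overlap via the open--closed dichotomy, and in correctly arranging the time-shift so that the comparison of $x_1$ and $x_2$ is made at a common parameter value at which both pass through $\alpha$. Relating uniqueness of the geometric trajectory (an orbit in $\mathbb{R}^n$) to uniqueness of the solution-as-function, modulo time translation, is the conceptual point that the autonomy property is there to supply.
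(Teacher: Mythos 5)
Your proof is correct and is the standard argument. Note that the paper itself does not prove this theorem: it is stated as a recalled classical result and attributed to Hartman's \emph{Ordinary Differential Equations}, so there is no in-paper proof to diverge from. Your reduction to Picard--Lindel\"of, the open--closed connectedness argument promoting local uniqueness to agreement on the full overlap, and the use of time-translation invariance to identify the two orbits are exactly the ingredients of the textbook proof the citation points to, and each step as you outline it goes through.
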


\begin{theorem} [Continuous dependency on initial conditions]
\label{thm:continuous_dependence}
Let $f:\mathbb{R}^n \rightarrow \mathbb{R}^n$ be continuous on an open set $E \subseteq \mathbb{R}^n$
with the property that for every $y_0\in E$, the initial value problem
$\dot{x} = f(x), x(0)=y_0$
has a unique solution $y(t)= \eta(t,y_0)$ ($\eta$ is a function of variables $t,y_0$ ). 
Let $w_{\bot}, w_{\top}\in \mathbb{R}$ such that $(w_{\bot},w_{\top})$ is 
the maximal interval of existence of $y(t)=\eta(t,y_0)$.

Then the bounds $w_{\bot}, w_{\top}$  are (lower, resp. upper semicontinuous) functions of $y_0$ in $E$ and 
$\eta(t,y_0)$ is continuous on the set
$\{\langle t, y_0\rangle\mid y_0 \in E, w_{\bot}(y_0) < t < w_{\top}(y_0) \} \subseteq \mathbb{R}^{n+1}$.
\end{theorem}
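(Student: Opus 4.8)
The plan is to prove the two assertions together through the integral formulation of the initial value problem combined with a compactness argument, since $f$ is assumed only continuous (not Lipschitz), so Gronwall's inequality is not directly available and the \emph{uniqueness} hypothesis must carry the argument. First I would establish joint continuity of $\eta$ at an arbitrary interior point $\langle t^*, y^* \rangle$ of its domain, and then read off the semicontinuity of the escape times $w_\bot, w_\top$ --- equivalently, the openness of the domain set $\{\langle t, y_0\rangle \mid w_\bot(y_0) < t < w_\top(y_0)\}$ --- essentially as a by-product of the same construction.

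For the continuity, fix $y^*$ and a compact time interval $[a,b] \subset (w_\bot(y^*), w_\top(y^*))$ with $0, t^* \in (a,b)$. The trajectory $K = \{\eta(t, y^*) \mid t \in [a,b]\}$ is compact, so it admits a compact neighbourhood $N$ on which the continuous field $f$ is bounded, say $|f| \le M$. The key preliminary step is a continuation argument: for $y_0$ sufficiently close to $y^*$ the solution $\eta(\cdot, y_0)$ is defined on all of $[a,b]$ and its graph stays inside $N$. This holds because, as long as a solution remains in $N$, its slope is bounded by $M$ and it cannot leave $N$ abruptly; the only way for it to stop being extendable within $[a,b]$ would be to reach $\partial N$, and a standard maximal-interval contradiction rules this out for $y_0$ near $y^*$.

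Granting this, take any sequence $y_k \to y^*$. On $[a,b]$ the functions $\eta(\cdot, y_k)$ are uniformly bounded (they lie in $N$) and equicontinuous (slopes bounded by $M$), so by the Arzel\`a--Ascoli theorem every subsequence has a uniformly convergent sub-subsequence. Writing each solution in integral form $\eta(t, y_k) = y_k + \int_0^t f(\eta(s, y_k))\,ds$ and passing to the uniform limit (using continuity of $f$), the limit function solves $\dot{x} = f(x)$ with value $y^*$ at $t=0$; by the uniqueness hypothesis it must coincide with $\eta(\cdot, y^*)$. Since every subsequence therefore has a sub-subsequence converging to the \emph{same} limit, the whole sequence $\eta(\cdot, y_k)$ converges to $\eta(\cdot, y^*)$ uniformly on $[a,b]$. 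Joint continuity at $\langle t^*, y^*\rangle$ then follows by the triangle inequality, combining this uniform-in-$t$ convergence in $y_0$ with the continuity of $\eta(\cdot, y^*)$ in $t$.

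Finally, the same tube argument yields the semicontinuity of the escape times: since for every $b < w_\top(y^*)$ solutions starting near $y^*$ survive at least up to $b$, we obtain $\liminf_{y_0 \to y^*} w_\top(y_0) \ge w_\top(y^*)$, and symmetrically $\limsup_{y_0 \to y^*} w_\bot(y_0) \le w_\bot(y^*)$; together these say precisely that the domain of $\eta$ is open. I expect the continuation/tube step to be the main obstacle, since that is where the bare continuity of $f$, the a priori bound $M$, and the uniqueness assumption must be made to interact, whereas everything after it is routine Arzel\`a--Ascoli bookkeeping. I would also remark that in the setting actually used in this paper $f$ is multi-affine, hence locally Lipschitz as in Theorem~\ref{thm:autonomous_traj}, so one may alternatively invoke Gronwall's inequality to obtain a quantitative bound of the form $|\eta(t,y_0) - \eta(t,y^*)| \le |y_0 - y^*|\, e^{L|t|}$ on compact intervals, which delivers both the continuity and the escape-time semicontinuity more directly and without appealing to compactness.
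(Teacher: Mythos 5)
The paper never proves this statement: it is quoted as a background theorem attributed to Hartman's book \cite{HartmanODE}, and the surrounding text only checks that its hypotheses are met by multi-affine vector fields. So there is no in-paper argument to compare against. What you have written is the standard textbook proof for merely-continuous $f$ with the uniqueness hypothesis (Arzel\`a--Ascoli to extract limits, uniqueness to identify them, sub-subsequence trick to upgrade to full-sequence convergence), which is the right approach here, and your closing remark is also apt: for the multi-affine fields the paper actually uses, $f$ is locally Lipschitz and one can shortcut everything with Gronwall.

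One substantive caveat about your own argument: the continuation/tube step cannot be established \emph{before} the Arzel\`a--Ascoli step in the order you present. With $f$ only continuous, the bound $|f|\le M$ on $N$ controls the speed of nearby solutions but not their distance from the reference trajectory, so nothing prevents a solution starting $\delta$-close to $y^*$ from drifting out of a thin tube $N$ long before time $b$; the ``standard maximal-interval contradiction'' you invoke at that point is in fact the compactness-plus-uniqueness argument you only carry out afterwards. The standard repair is to interleave the two steps: for each $y_k \to y^*$ let $\tau_k \le b$ be the first exit time of $\eta(\cdot,y_k)$ from $N$ (or $\tau_k = b$ if there is none), apply Arzel\`a--Ascoli to the solutions restricted to $[0,\tau_k]$, and use uniqueness to show that if $\liminf_k \tau_k < b$ then a subsequential limit would be a solution through $y^*$ meeting $\partial N$ at a time where the true trajectory lies in the interior of $N$ --- a contradiction. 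With that reordering, the rest of your argument (passage to the limit in the integral equation, identification via uniqueness, and the resulting semicontinuity of $w_\bot$ and $w_\top$, i.e.\ openness of the domain of $\eta$) goes through as written.
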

%

We restrict ourselves to multi-affine autonomous systems. That is, systems of the form~(\ref{eq:autsystem}), such that the vector field $f$ is a \emph{multi-affine} function, defined as a polynomial of variables $x_1,\dotsc,x_n\in\mathbb{R}^n$ of degree at most one in every variable.
 The assumptions of  Theorems~\ref{thm:autonomous_traj} and \ref{thm:continuous_dependence} 
(from~\cite{HartmanODE}) are satisfied for systems of this class, therefore the properties stated in the above theorems can be used for reasoning about autonomous systems with multi-affine vector fields.





\subsection{Biochemical dynamical system}
According to~\cite{Fe87}, by a biochemical dynamical system we understand a collection of $n$ biochemical species interacting in biochemical reactions. Species concentrations are represented by variables $x_1,\dotsc, x_n$ attaining values from $\mathbb{R}_0^+$. If the stoichiometric coefficients in reactions do not exceed one and the reaction dynamics respects the law of mass action kinetics~\cite{HJ72}, the dynamical system can be described by a multi-affine autonomous system in the form~(\ref{eq:autsystem}).

In a biochemical dynamical system we are typically interested in a bounded part ($n$-dimensional interval) of the phase space in $\mathbb{R}^n$.
Further, we consider the phase space partitioned by a (non-uniform) rectangular grid. In particular, for each variable there is defined a finite set of \emph{thresholds}, making the system \emph{partition}. Thresholds determine $(n-1)$-dimensional hyper-planes in $\mathbb{R}^{n}$ and can be freely specified according to particular questions that should be addressed by the model analysis, e.g., specification of unsafe or attracting sets. Cells laid out by $2n$ adjacent threshold hyper-planes (cells are again intervals in $\mathbb{R}^n$) are called \emph{hyper-rectangles}, for short we refer to them as \emph{rectangles}.

\begin{definition}Define a \emph{biochemical dynamical system} (\emph{biochemical system} for short) as a tuple
$\mathcal{B} = \langle n,f,\mathcal{T},\mathcal{I}_C\rangle$, where
\begin{itemize}
\item $n\in \mathbb{N}$ is the \emph{dimension} of $\mathcal{B}$,
\item $f:\mathbb{R}^n\rightarrow \mathbb{R}^n$ is the multi-affine vector field of $\mathcal{B}$, 
\item $\mathcal{T}=\langle T_1,\dotsc,T_n \rangle$ is the \emph{partition} of $\mathcal{B}$ where each $T_i$ is a finite subset of $\mathbb{R}^+_0$, and define the \emph{set of rectangles} given by $\mathcal{T}$ as 
$$\mathit{Rect}(\mathcal{T})=\{ \prod_{j=1}^n I_j\mid \forall j \exists a,b \in T_j : I_j=[a,b], \forall c \in T_j: c \leq a \vee c \geq b \},$$
\item 
$\mathcal{I}_C \subseteq \mathit{Rect}(\mathcal{T})$ is the  set of \emph{initial conditions} (\emph{initial set}) of $\mathcal{B}$.
\end{itemize}
\end{definition}

%

\begin{definition}
Let $\mathcal{B}=\langle n,f,\mathcal{T}, \mathcal{I}_C\rangle$ be a biochemical system and let $H \in Rect(\mathcal{T})$ be a rectangle such that $H=I_1\times \ldots \times I_n$, where
$I_i = [a_i, b_i]$. 
For every $i \in \{1,\ldots, n\} $ define
the \emph{lower} (resp. \emph{upper}) \emph{facet} of $H$ wrt the $i$th variable:
$$\begin{array}{l}
\mathit{Facet}^{\bot}_i(H) = \{ \langle x_1, \dotsc, x_n \rangle \in H \mid x_i = a_i\},\\ 
\mathit{Facet}^{\top}_i(H) = \{ \langle x_1, \dotsc, x_n \rangle \in H \mid x_i = b_i\}.
\end{array}$$

Denote $Facets_i(H)$ the \emph{set of $i$th dimension facets of H}, $Facets_i(H)=Facet^{\bot}_i(H)\cup Facet^{\top}_i(H)$, and 
$Facets(H)$
the \emph{set of (all) facets of $H$}, $Facets(H) = \bigcup_{i=1}^n Facets_i(H)$.
\end{definition}


\begin{definition}
Let $H$, $H'\in\mathit{Rect}(\mathcal{T})$. We say that $H$ is a \emph{neighbour of} $H'$, denoted $H\bowtie H'$,
if there exists $F \in \mathrm{Facets}(H)$ such that $H \cap H' = F$. 
\end{definition}

\section{Quantitative Discrete Approximation}
\label{sec:abstraction}

\begin{figure}[!h]
\raisebox{1.8cm}{\scalebox{.8}{\parbox{1cm}{
$$\begin{array}{c}
A \stackrel{0.5}{\rightarrow} B\\
B \stackrel{0.8}{\rightarrow} A\\
\end{array}
$$
}}}
\hspace{0.5cm}
\raisebox{1.8cm}{\scalebox{.8}{\parbox{4.5cm}{%
$$\begin{array}{c}
\frac{d[A]}{dt}=-0.5\cdot [A] + 0.8\cdot [B]\\[1mm]
\frac{d[B]}{dt}=0.5\cdot [A] - 0.8\cdot [B]\\[4mm]
\text{thresholds on } [A]: \{0,2.5,5\}\\
\text{thresholds on } [B]: \{0,2.5,5\}
\end{array}
$$
}}}
\hspace{1.5cm}
\includegraphics[scale=.25]{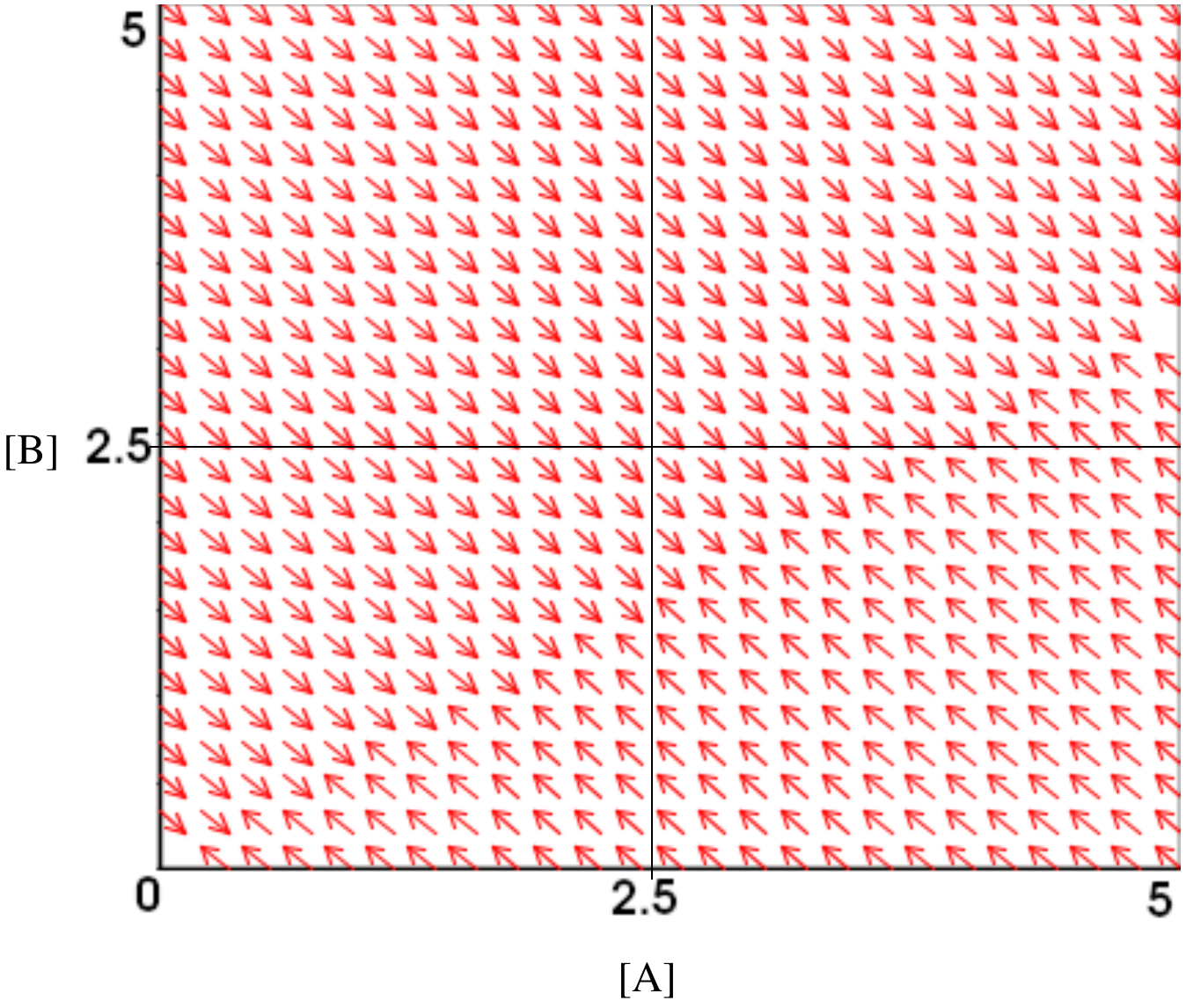}
\hspace{1.5cm}
\includegraphics[scale=.25]{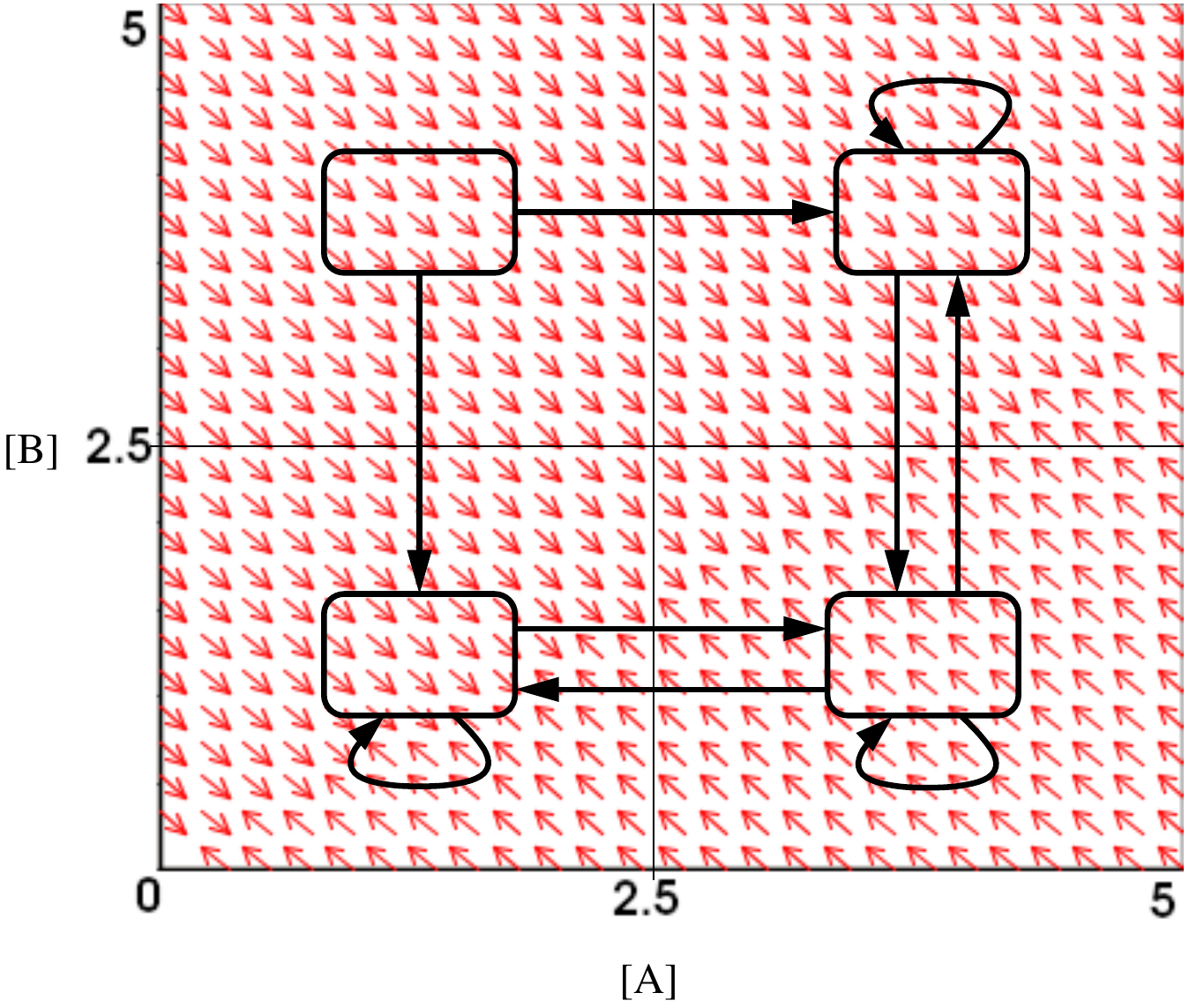}
\caption{Example of a biochemical system with two species and two reactions. Dynamics given by a system of two ODEs and the system of thresholds are in the left part of the figure.
Vector field is visualized in the middle, and its Rectangular Abstraction Transition System on the right.}
\label{fig:example}
\end{figure}

Given a biochemical system $\mathcal{B}=\langle n,f,\mathcal{T},\mathcal{I}\rangle $, we aim to define a finite automaton reflecting the behaviour of $\mathcal{B}$, and for each state, to assign every transition a weight quantifying probability of proceeding to a particular successor.




A state is defined as a pair $\langle H,E\rangle$ -- a rectangle $H$, and a subset $E$ of a particular facet of $H$. The set $E$ represents a so-called \emph{entry set}, a region through which trajectories of the system~(\ref{eq:autsystem}) enter the interior of $H$. Intuitively, we can say that $E$ encodes the history of previous evolution of the system from initial set $\mathcal{I}_C$ to $H$.  
Entry sets are either subsets of $(n-1)$-dimensional facets of $H$ or (in case of initial states) the whole $n$-dimensional rectangle $H$.

Since entry sets can be arbitrary sets in Euclidean space, we approximate them by a finite discrete structure. Each facet is provided with a uniform grid on which we approximate any subset of the facet by the set of rectangular fragments, so-called \emph{tiles} (Figure~\ref{fig:tiles}). The grid is $n$-dimensional or $(n-1)$ dimensional depending on the dimension of approximated entry sets. When following the trajectories of solutions of differential equations of the models dynamics in time, entry sets are identified by trajectories of solutions passing through them on their way from preceding rectangles. In following definitions we treat this intuitive perception of entry sets formally.

Let $\kappa \in \mathbb{N}$, let $\mathcal{B}=\langle n,f,\mathcal{T},\mathcal{I}_C\rangle$ be a biochemical system, $H\in \mathit{Rect}(\mathcal{T})$, and $F \in \mathit{Facets}(H)$ for all definitions and theorems from this section.

\begin{definition}
Let $H$ be of the form $H=\prod_{j=1}^n I_j$, where $\forall j: I_j=[a_j,b_j]$.
Let $B \in \{ H\} \cup \mathit{Facets}(H)$. Set either $n'=n$, if $B=H$, or $n'=(n-1)$, if $B \in \mathit{Facets}_i(H)$ for some $1 \leq i \leq n$ (in this case $\exists c \in \{a_i,b_i\}: B \subset \mathbb{R}^{n-1}_i(c)$).

Define the \emph{set of $\kappa$-tiles} of $B$ as 
$\mathit{Tiles}^{\kappa}_{n'}(B)=\{ A \subseteq B \mid A=\prod_{j=1}^n A_j\}$, where 
$A_i = \{c\}$, if $B \in \mathit{Facets}_i(H)$, and otherwise ($j \neq i$ or $B = H$) $A_j$ is a closed interval in $\mathbb{R}^+_0$ of the form 
$[a_j + \frac{k_j}{\kappa}(b_j-a_j), a_j + \frac{k_j + 1}{\kappa} (b_j - a_j) ] $, where for all
 $j \in \{1,\dotsc,n\},  j \neq i$ the nonnegative integer $k_j \in \mathbb{N}_0$ satisfies  $k_j < \kappa$. 
\end{definition}


The following definition introduces the notion of general entry sets.

\begin{definition}
\label{def:entry}
Define the \emph{set of entry points into a rectangle $H$ through facet $F$}, 
as the set 
$\mathit{Entry}(F,H) =$\newline
\noindent
$\bigl\{  y_0 \in F \mid\exists \mathrm{\ a\ trajectory\ }y(t) \mathrm{\ of\ a\ solution\ of~(\ref{eq:autsystem})}
\mathrm{\ such\ that\ }y(0) = y_0\mathrm{\ and\ }\exists \epsilon >0 : y(t)\in H\mathrm{\ for\ }\forall t \in (0,\epsilon) \bigr\}.$
\end{definition}

Next we define the approximation of entry sets on a grid of $\kappa$-tiles. Additionally, we define the respective (discrete) volume measure of a set (see Figure~\ref{fig:tiles}~c),d)).

\begin{definition}
Let $X \subset H$. Let $n'=n-1$, if there exists $i \in \{1,\dotsc, n\}, F \in Facets_i(H)$ such that $X \subseteq F$, and let $n'=n$, otherwise. Let $M=F$, if $X \subseteq F$, and let $M=H$, if there is no such facet $F$.
Define
\itemize
\item the \emph{set of $\kappa$-tiles} approximating the set $X$ as 
$$\mathit{Tiles}^{\kappa}_{n'}(X) = \Biggl\{ A \in \mathit{Tiles}^{\kappa}_{n'}(M) \mid 
\frac{\lambda^*_{n'}(A \cap X)}{\lambda^*_{n'} (A) } \geq \frac{1}{2}\Biggr\},$$
\item the \emph{rectangular $\kappa$-grid measure} of the set $X$ as $\lambda_{n'}^{\kappa}(X)= \sum_{A \in \mathit{Tiles}^{\kappa}_{n'}(X)} \mathit{vol}(A).$
\end{definition}

\begin{figure}
\vspace{-3mm}
\includegraphics[scale=.18]{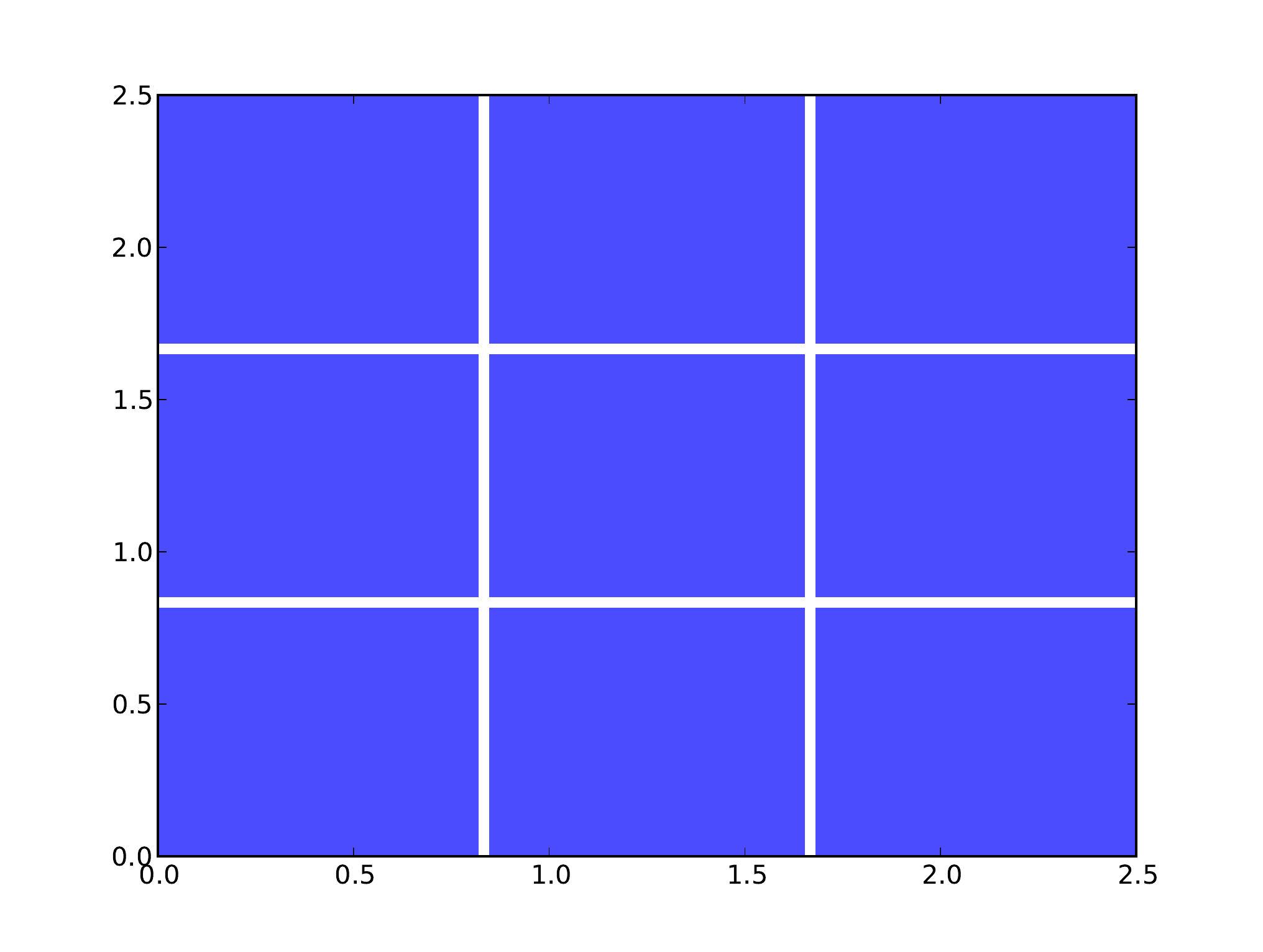}
\includegraphics[scale=.18]{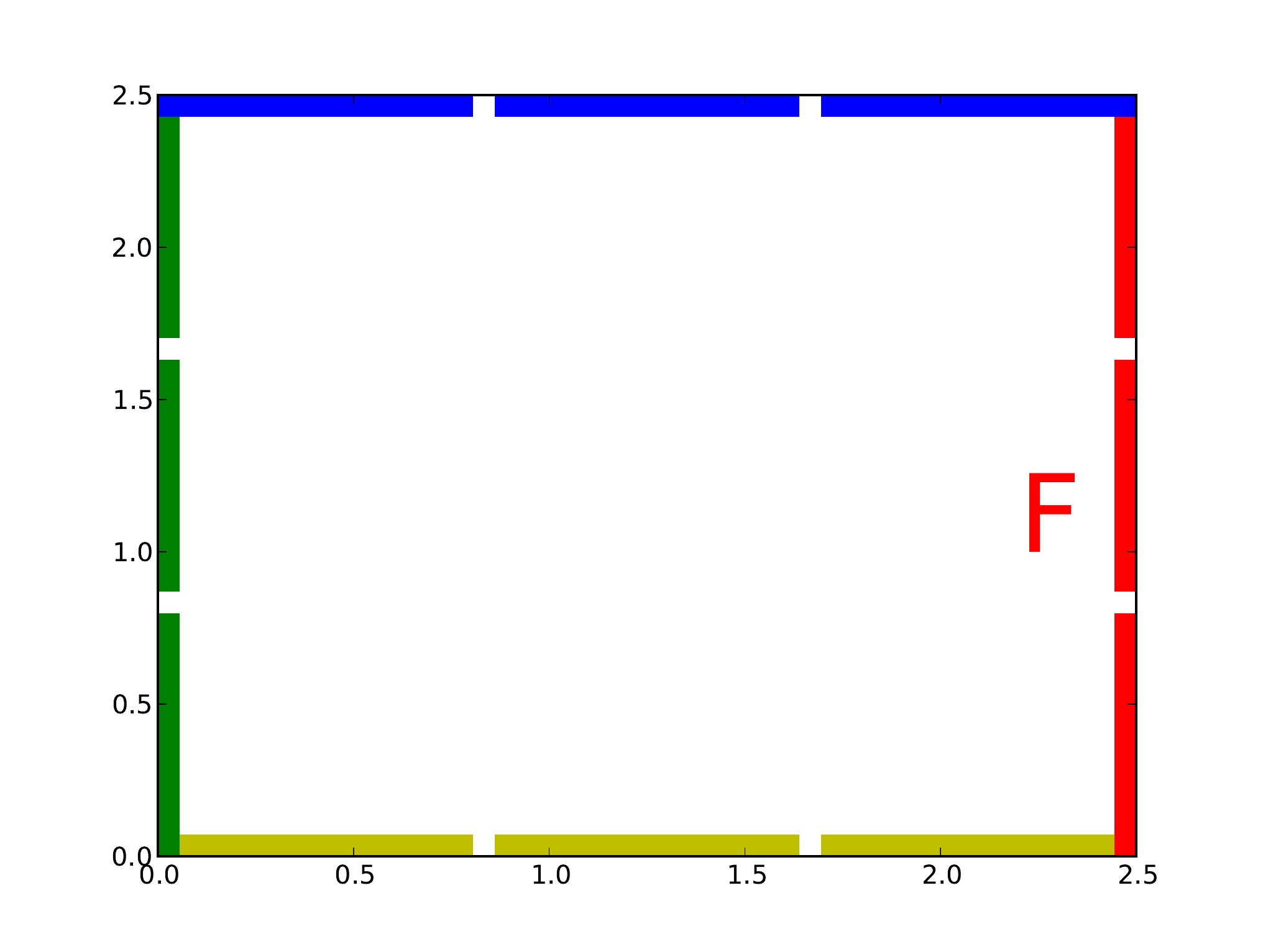}
\includegraphics[scale=.18]{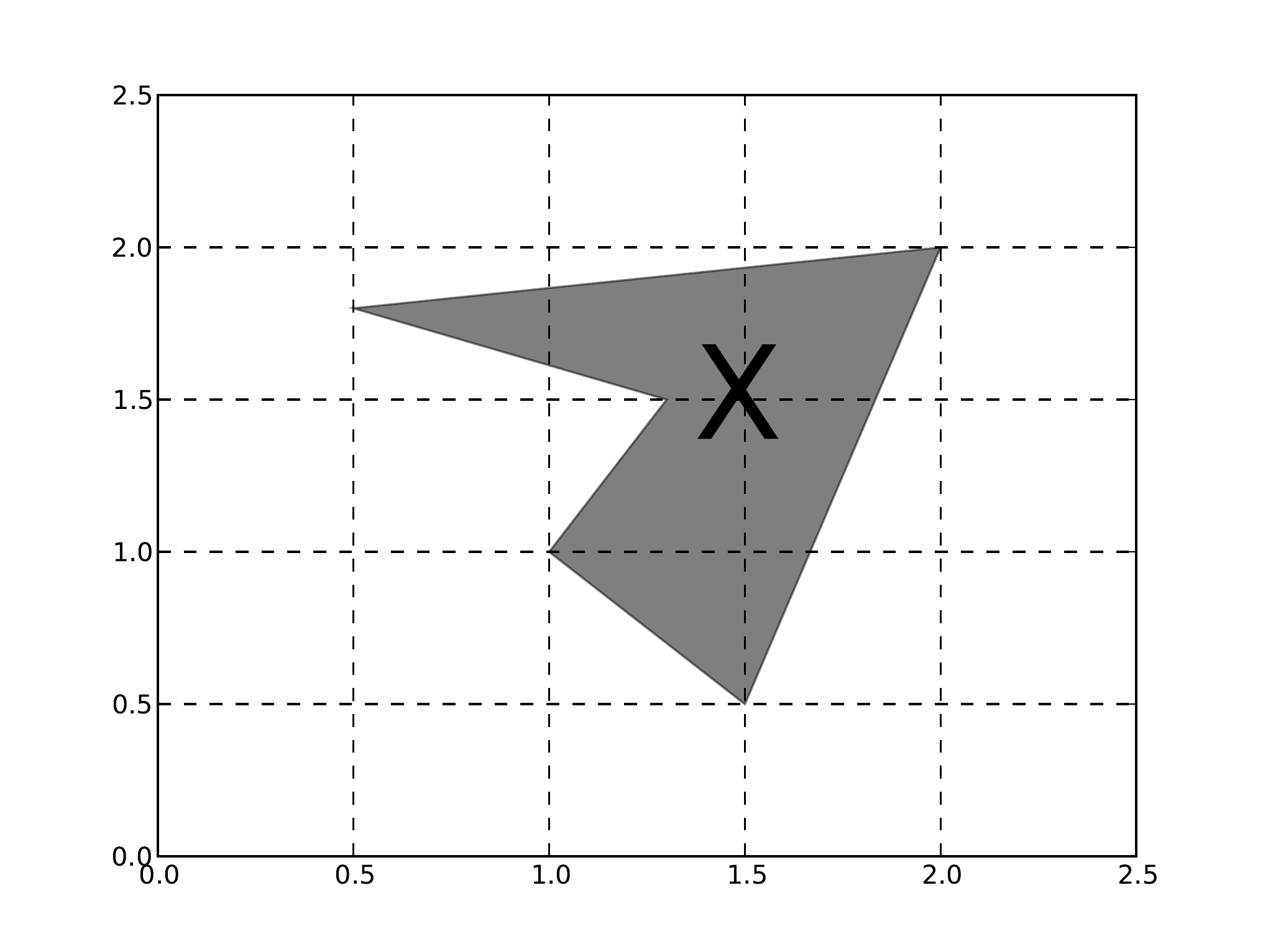}
\includegraphics[scale=.18]{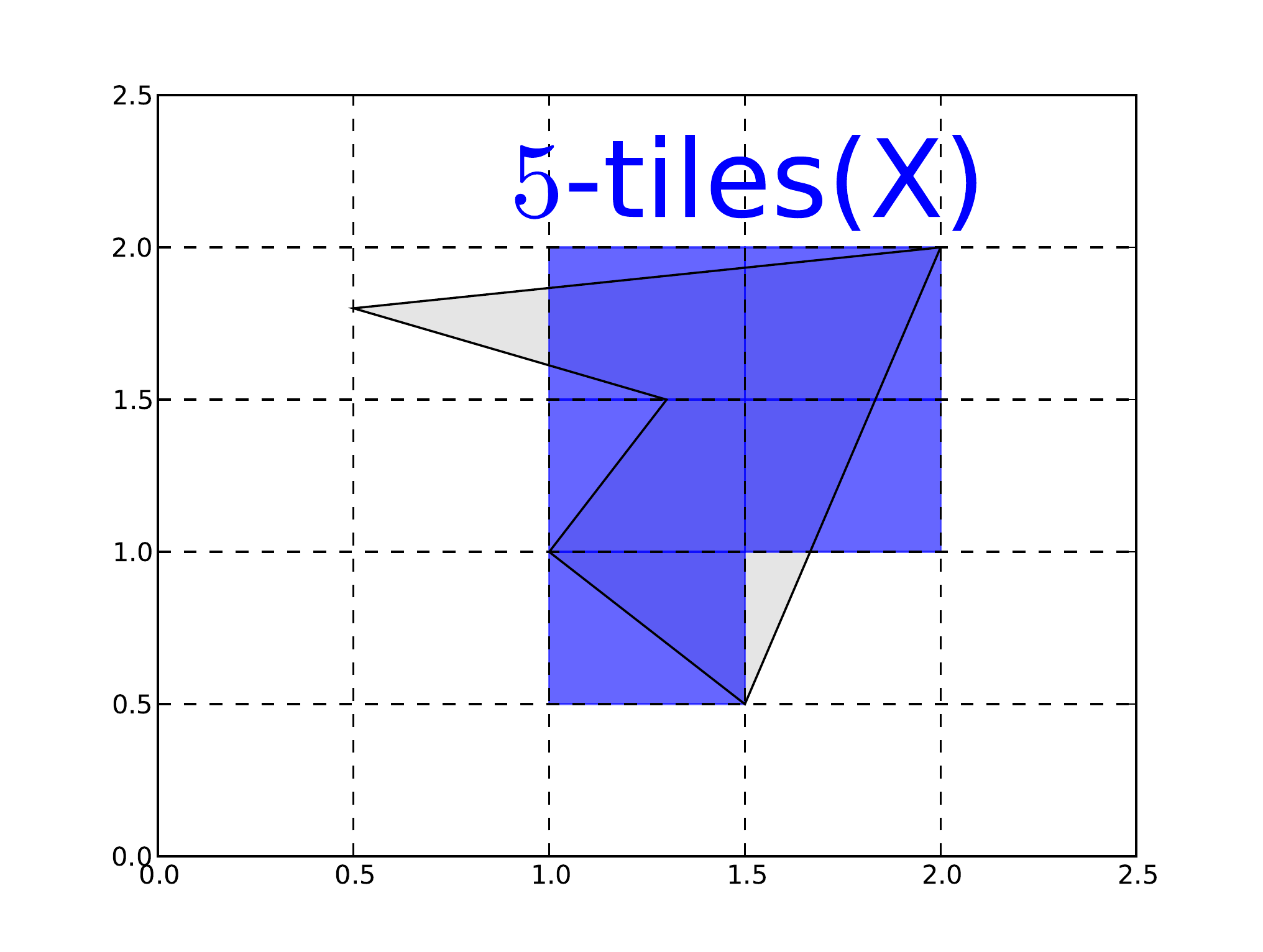}
\vspace{-4mm}
\hspace{1.5cm}
a)
\hspace{3.5cm}
b)
\hspace{3.5cm}
c)
\hspace{3.5cm}
d)
\caption{
a)
Let $H=[0,2.5]\times[0,2.5]$ be a rectangle.
The blue areas depict elements of $\mathit{Tiles}^3_2(H)$.
\newline 
b) 
Let $F=\mathit{Facet}^{\top}_1(H) = \{2.5\}\times[0,2.5]$.
The red line segments are elements of $\mathit{Tiles}^3_1(F)$.
\newline
The set $\mathit{EntrySets}_3(H)$ has $2+4\cdot(1+\binom{3}{2}+\binom{3}{1})=30$ elements:
$\emptyset, H,$ and $7$ for every facet of $H$ (the~facet itself, $3$ segments and $3$ unions of pairs of segments of the facet).
\newline
c) 
Let $X$ be a subset of $H$ (the shaded polygon). Let $\kappa=5$.
\newline
d) 
The set of $\kappa-$tiles approximating $X$ is the set of five blue intervals (each satisfying that at least half of its area is in $X$).
The cardinality of $\mathit{Tiles}^{\kappa}_2(X)$ is $5$. Thus $\lambda^{\kappa}_2(X)=5 \cdot (0.5 \cdot 0.5) = 1.25.$
}
\label{fig:tiles}
\end{figure}


The following definition declares the set of all discretized entry sets for a given rectangle. 

\begin{definition}
\label{def:entrysets}
For $H$, define \emph{set of (approximate) entry sets} $\mathit{EntrySets}_{\kappa}(H) = $\\[-2mm]
$$\Bigl\{ E \subseteq H \mid E=\emptyset \vee E=H \vee \exists F \in \mathit{Facets}(H), \mathcal{E} \subseteq \mathit{Tiles}_{\kappa}\bigl(\mathit{Entry}(F,H)\bigr) : E = \bigcup \mathcal{E} \Bigr\}.$$
\end{definition}


For an example of a set of (approximate) entry sets of a rectangle see Figure~\ref{fig:tiles}~a),b).
Note that set of approximate entry sets is always finite. Further note that also the empty set and the entire rectangle are considered as entry sets. These represent singular cases needed in the subsequent construction of the automaton. In particular, states with the empty entry set approximate fixed point behaviour not leaving the rectangle (steady state memory) whereas the rectangle-form entry set is employed for initial rectangles.

\begin{definition}
 \label{def:startsets}
Let $E \in \mathit{EntrySets}_{\kappa}(H), H' \in \mathit{Rect}(\mathcal{T}), F' \in \mathit{Facets}(H)$ such that $H' \bowtie H, F'=H\cap H'$.

Define the \emph{focal subset of $E$ on $H$ targeting $F'$}, denoted $\mathit{Focal}(H,E,F')$, as the set of all $y_0 \in E$ such that there exist $\epsilon,\epsilon',c >0$
and a trajectory of a solution $y(t)$ of system~(\ref{eq:autsystem}) with inital conditions $y(0)=y_0$ satisfying
$y(t) \in H$ for $t \in (0,c), y(t) \in \mathit{Inter}(H)$
for $t \in (0,\epsilon)$,
$y(c) \in F'$, 
and $y(t) \in \mathit{Inter}(H')$ for $t \in (c, c+\epsilon')$.
Let $\mathit{ExitSet}(H,E,F')$ denote the set of all such (targeted) points $y(c) \in F'$.

Define \emph{focal subset of $E$ on $H$ not leaving $H$}, $\mathit{Focal}(H,E,\emptyset)$, as the set of all points $y_0 \in E$ such that there exists a trajectory of a solution $y(t)$ of system~(\ref{eq:autsystem}) with initial conditions $y(0)=y_0$ satisfying $y(t) \in H$ for all $t>0$.
\end{definition}

Next we define the successor function for any pair $\langle H,E\rangle$ and subsequently the quantitative discrete approximation automaton.

\begin{definition}
\label{def:successors}
Let $E \in \mathit{EntrySets}_{\kappa}(H)$.  
Define the \emph{successors of} $\langle H,E \rangle$ as the set of pairs $\langle H',E'\rangle$ with $H'\in \mathit{Rect}(\mathcal{T}), E' \in \mathit{EntrySets}_{\kappa}(H')$ such that 
$$
\mathit{Succs}(\langle H, E\rangle) = \bigl\{ \langle H',E'\rangle \mid  H',E'  \mathrm{\ satisfy\ one\ of\ conditions\ 1.-3.\ below }  \bigr\}
$$
\begin{enumerate}
\item $H'\bowtie H$, $E \neq \emptyset$.
Denote $F'$ the facet of $H$ satisfying $F'=H\cap H'$. 
Let $n'=n$, if $E=H$, and $n'=(n-1)$, otherwise.
Moreover, $E'=\bigcup \mathit{Tiles}_{\kappa}\bigl(\mathit{ExitSet}(H,E,F')\bigr)$ and 
 $\lambda^{\kappa}_{n'}\bigl(\mathit{Focal}(H,E,\emptyset)\bigr) > 0$.

\item $H'=H$, $E\neq \emptyset$, and $E'=\emptyset$. Further, it holds that either $E \subseteq F$ and $\lambda^{\kappa}_{n-1}\bigl(\mathit{Focal}(H,E,\emptyset)\bigr) > 0$, or
$E = H$ and $\lambda^{\kappa}_{n}\bigl(\mathit{Focal}(H,E,\emptyset)\bigr) > 0$.
\item $H'=H$ and $E'=E=\emptyset$.
\end{enumerate}
\end{definition}

\begin{definition}[The Quantitative Discrete Approximation Automaton]
\label{def:qdaa}
Let $\kappa, \mathcal{B}$ be as above. 
The \emph{quantitative abstraction automaton} $QDAA_{\kappa}(\mathcal{B})$  of a biochemical system $\mathcal{B}$ with parameter $\kappa$ is a tuple $QDAA_{\kappa}(\mathcal{B})=\langle S, \mathcal{I}_C, \delta, p \rangle$, 
where
\itemize
\item the \emph{set of states} $S = \{\langle H, E\rangle \mid H \in \mathit{Rect}(\mathcal{T}), E \in \mathit{EntrySets}_{\kappa}(H)\},$
\item the \emph{set of initial conditions} $I_C = \left\{\langle H,H\rangle\mid H \in \mathcal{I}_C \right\}$,
\item the \emph{transition function} $\delta: S \rightarrow 2^S$ 
is defined as $\delta(\langle H, E \rangle) = \mathit{Succs}(\langle H, E \rangle)$,
\item the \emph{weight function} $p: S\times S \rightarrow  [0,1]$ is defined by the following expression, where $S=\langle H,E\rangle, S'=\langle H',E'\rangle$. Suppose $n'= n$, in case $E=H,$ and $n'=n-1$, otherwise.
{\small
$$p(S, S') =
\begin{cases}
1, & \mathrm{if\ }H=H',\,E=E'=\emptyset, \\
\vspace{3mm}
\dfrac{\lambda^*_{n'}\bigl(\mathit{Focal}(H,E,\emptyset)\bigr)}
{ \sum_{A \in \mathit{Facets}(H) \cup \{\emptyset\} }
\lambda^*_{n'}\bigl(\mathit{Focal}(H,E,A)\bigr)}, &
\mathrm{if\ }H=H',\,E\neq \emptyset,E'=\emptyset,\\
\dfrac{\lambda^*_{n'}(\mathit{Focal}(H,E,F'))}
{ \sum_{A \in \mathit{Facets}(H) \cup \{\emptyset\} }
\lambda^*_{n'}\bigl(\mathit{Focal}(H,E,A)\bigr)}, &
\mathrm{if\ }H \bowtie H', E' \subseteq F'=H\cap H', \\
0, & \mathrm{otherwise.}
\end{cases}$$}
\end{definition}

\begin{example} 
Assume the biochemical system from Figure~\ref{fig:example}.
See Figure~\ref{fig:qdaaint}~a) for an example of focal subsets described below.
Let $R=[0,2.5]\times[2.5,5]$ be a rectangle and let
$F_0 = \mathit{Facet}^{\top}_2(R), 
F_1 = \mathit{Facet}^{\top}_1(R),
F_2 = \mathit{Facet}^{\bot}_2(R),
F_3 = \mathit{Facet}^{\bot}_1(R).$
For the state $\langle R, F_0 \rangle$ the focal set of $F_1$ equals $F_0$, whereas $Focal(F_0)=Focal(F_2)=Focal(F_3)=\emptyset$.

Let $H=[0,2.5]\times[0,2.5]$ and 
$F = \mathit{Facet}^{\top}_1(R).$ 
For the state $\langle H, H \rangle$ 
the set $\mathit{Focal}(F)$ is the blue area inside $H$ and 
$\mathit{Focal}(\emptyset)$ is the yellow area.
All the solutions of the biochemical systems dynamics 
with initial conditions in $\mathit{Focal}(\emptyset)$ approach the yellow line of fixed points and stay in $H$ forever. All the solutions starting in the blue area leave $H$
in finite time through $F$. 

In the right part of Figure~\ref{fig:qdaaint} is the set of reachable states of the quantitative discrete approximation automaton (QDAA) obtained from the biochemical system described in Figure~\ref{fig:example} with initial conditions $\mathcal{I}_C=\{[0,2.5]\times[0,2.5]\}$. 

Let $H, R$ be the same as above. Let $S=[2.5,5]\times[0,2.5]$ and let $\mathcal{I}_C=\{H\}$.
The QDAA successor states of $\langle H,H \rangle$ are $\langle H, \emptyset \rangle$ 
(a selfloop state) and $\langle S,E \rangle$ (where $E$ denotes the $\kappa$-tiles approximation of the red segment in $\mathit{Facet}^{\bot}_1(S)$).
For $\kappa \rightarrow \infty$ the weights of these two transitions approach the area ratios of yellow and blue regions of $H$ respectively.
The only successor of $\langle H, \emptyset \rangle$ is (by definition) itself.
The state $\langle S,E \rangle$ has one successor $\langle S, \emptyset \rangle$, since
all the trajectories beginning in $E$ approach the line of fixed points and stay inside $S$ forever.

Therefore the set of concentrations reachable from initial rectangle $H$ is 
$[0,5]\times[0,2.5]$. See the rectangular abstraction transition system from Figure~\ref{fig:example} where the set reachable from $H$ is 
$[0,5]\times[0,2.5] \cup [2.5,5]\times[2.5,5],$ although there exists no trajectory of a solution of the biochemical systems dynamics that starts in $H$ and reaches a point inside $[2.5,5]\times[2.5,5].$

On the other hand, if $\kappa$ is too small, some behaviours of the system are not reflected in QDAA, because the set of $\kappa$-tiles corresponding to the entry set may be empty. With finer partition into $\kappa$-tiles smaller entry sets can be captured and approximation of the biochemical system by a QDAA is more realistic.
\end{example}

\begin{figure}[!h]
\vspace{-3mm}
\includegraphics[scale=.3]{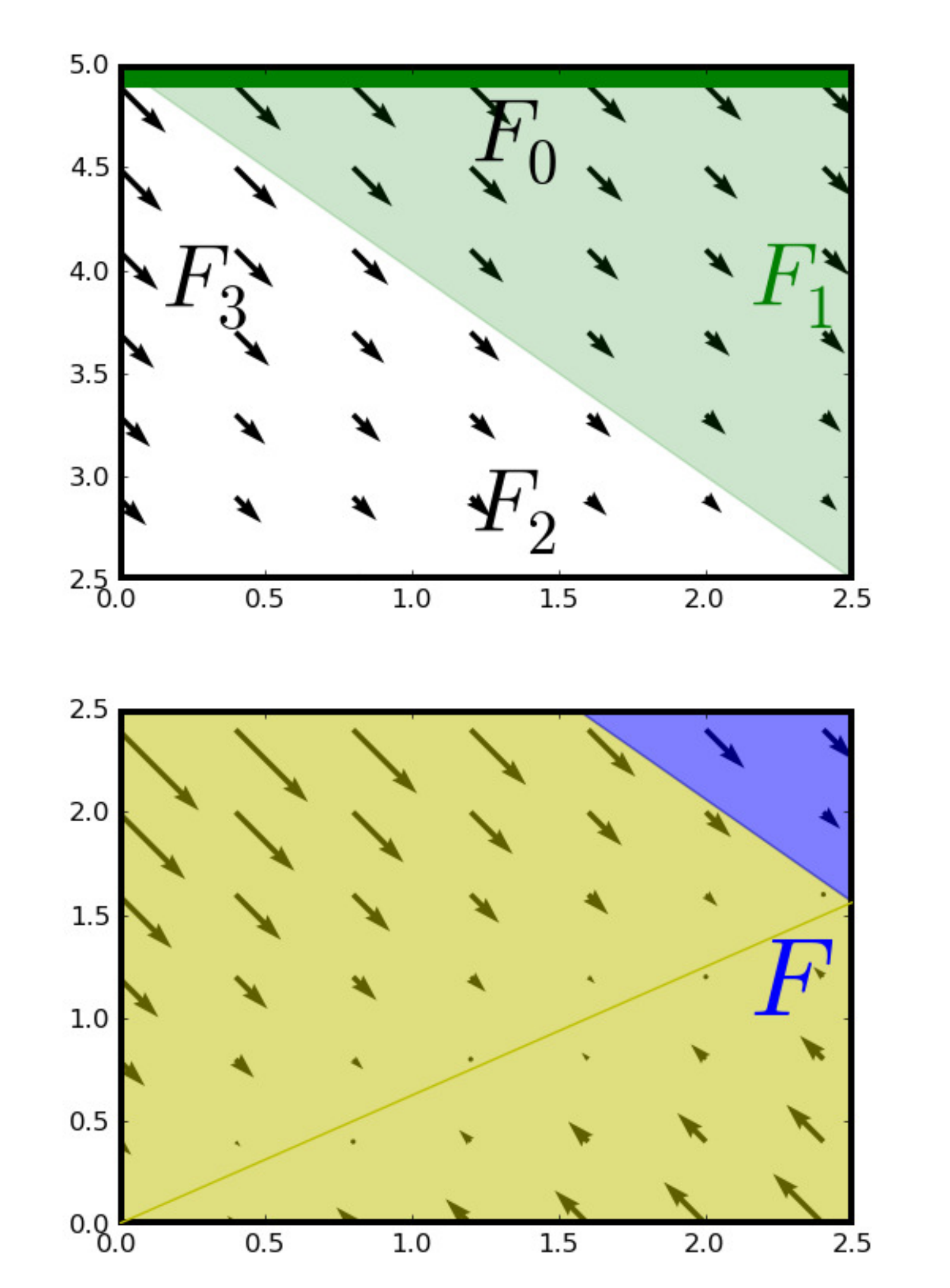}
\hspace{1cm}
\includegraphics[scale=.28]{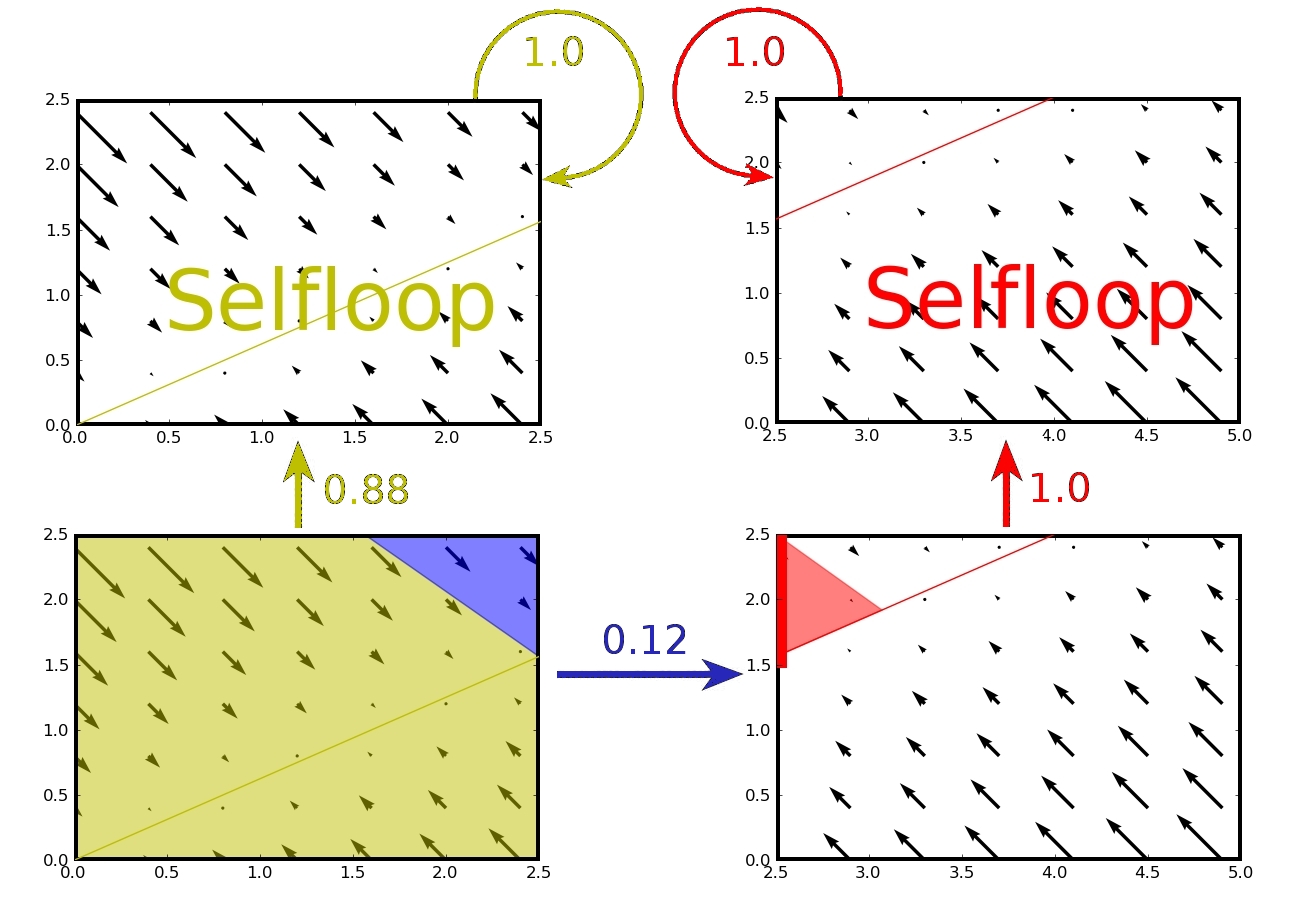}
\vspace{-4mm}
\newline
a)\hspace{5.5cm}b)
\caption{
a) Focal sets examples, 
b) QDAA example.} 
\label{fig:qdaaint}
\end{figure}

In the Theorem~\ref{thm:zero_measure} we ensure correctness of using the Lebesque measure  in Definition~\ref{def:qdaa}. We ensure that there is no non-zero volume entry set such that all trajectories from this set lead to a facet without entering the interior of a neighbouring rectangle.

Following notations and lemmas provide the background that is used in the proof of Theorem~\ref{thm:zero_measure}.

\begin{notation}Let us denote $B_{\epsilon}(x)$ the \emph{open sphere} with centre $x\in\mathbb{R}^n$ and radius $\epsilon>0$  ($B_{\epsilon}(x) = \{y \in \mathbb{R}^n | \left|x-y\right| < \epsilon\}$).
\end{notation}

The following theorems from mathematical analysis and differential topology (see for example~\cite{HirschDT}, \cite{HartmanODE}) will be used in the proofs.

\begin{lemma}[Peano]
\label{thm:c1_solutions} 
Let $f:\mathbb{R}^n \rightarrow \mathbb{R}^n$ be continuous on open set $E \subseteq \mathbb{R}^n$ such that $f$ possesses continuous first order partial derivatives.
Then for every $y_0\in E$, the initial value problem
$\dot{x} = f(x), x(0)=y_0$
has a unique solution $y(t)= \eta(t,y_0)$
and the unique solution has continuous first order derivative wrt the variable $t$ (is of class $C^1$) on its open domain of definition.
\end{lemma}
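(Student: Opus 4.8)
The plan is to recognise this statement as the classical Picard--Lindel\"of existence and uniqueness theorem, supplemented by a short regularity bootstrap for the dependence on $t$. I would split the argument into three stages: first deduce from the continuity of the first order partial derivatives that $f$ satisfies the Lipschitz condition locally on $E$; then run the standard contraction-mapping argument to obtain a unique local, and hence maximal, solution; and finally read off the $C^1$ regularity in $t$ directly from the integral form of the equation.

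For the first stage I would fix $y_0 \in E$ and choose $r>0$ with $\overline{B_r(y_0)} \subseteq E$. Because the partial derivatives of $f$ are continuous, the Jacobian of $f$ is bounded in operator norm by some $L$ on the compact set $\overline{B_r(y_0)}$, and $|f|$ is bounded there by some $M$. Applying the mean value theorem along segments, which remain inside the convex ball, would then give $|f(x_1)-f(x_2)| \leq L\,|x_1-x_2|$ for all $x_1,x_2 \in B_r(y_0)$ --- precisely the local Lipschitz condition introduced in the preliminaries.

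The second stage is the technical heart of the proof. I would rewrite the initial value problem in integral form, $y(t) = y_0 + \int_0^t f(y(s))\,ds$, and define the Picard operator $T$ by $(Ty)(t) = y_0 + \int_0^t f(y(s))\,ds$ on the complete metric space of continuous functions $y:[-\delta,\delta] \to \overline{B_r(y_0)}$ under the supremum norm. Choosing $\delta>0$ small enough in terms of $r$, $M$ and $L$ makes $T$ map this space into itself and turns it into a contraction, so the Banach fixed point theorem yields a unique fixed point, i.e. the unique solution on $[-\delta,\delta]$. Patching such local solutions over overlapping intervals and continuing maximally produces a unique inextendible solution $\eta(t,y_0)$ on an open interval, in agreement with Theorem~\ref{thm:autonomous_traj}. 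The delicate book-keeping of the self-map and contraction constants for $T$ is the step I expect to be the main obstacle.

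For the final stage I would note that a solution $y$ is by definition differentiable, hence continuous, so the composition $s \mapsto f(y(s))$ is continuous on the domain of $y$. The fundamental theorem of calculus applied to the integral equation then gives $\dot{y}(t) = f(y(t))$ with a right-hand side that is continuous in $t$; thus $y$ has a continuous first derivative and is of class $C^1$ on its open domain, which completes the argument. \qed
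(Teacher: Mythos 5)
Your proof is correct, but note that the paper does not actually prove this lemma at all: it is stated as one of several ``theorems from mathematical analysis and differential topology'' imported from the literature (the surrounding text points to \cite{HirschDT} and \cite{HartmanODE}), so there is no in-paper argument to compare against. What you supply is the standard self-contained Picard--Lindel\"of proof, and all three stages are sound: continuity of the partial derivatives gives a bounded Jacobian on a compact ball and hence a local Lipschitz constant via the mean value inequality; the contraction-mapping argument with $\delta \leq r/M$ and $L\delta < 1$ yields local existence and uniqueness, and the usual patching/connectedness argument extends this to a unique inextendible solution on an open interval; and the $C^1$ regularity in $t$ follows from the integral equation using only continuity of $f$. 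This buys the reader a proof where the paper offers only a citation, at the cost of re-deriving textbook material. One small but worthwhile observation: the lemma is labelled ``Peano,'' which is a misnomer --- Peano's theorem assumes only continuity of $f$ and delivers existence \emph{without} uniqueness, whereas the statement here (and your proof) is the Picard--Lindel\"of/Cauchy--Lipschitz theorem; you correctly identify it as such, and indeed the local Lipschitz property you establish in your first stage is exactly the hypothesis under which the paper's Theorem~\ref{thm:autonomous_traj} operates, so your argument also makes explicit why the multi-affine systems considered in the paper fall within the scope of both results.
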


\begin{lemma}
\label{thm:smooth_diffeo_zero} The following statements about zero measure sets hold. (A set has Lebesgue outer measure zero iff it has Lebesgue measure zero.)
\enumerate
\item If $S\subset \mathbb{R}^n, \lambda_n^*(S)=0$ and $f:\mathbb{R}^n \rightarrow \mathbb{R}^n$ is a smooth map (with continuous first order derivatives), then $\lambda_n^*(f(S))=0$.
\item Being of zero measure is a diffeomorphism-invariant property of subsets of $\mathbb{R}^n$.
\end{lemma}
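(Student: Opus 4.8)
The plan is to prove the first statement and then obtain the second as an immediate corollary. For the first statement the key observation is that a $C^1$ map is locally Lipschitz and that Lipschitz maps distort volumes in a controlled way. Since $\mathbb{R}^n$ is not compact I would first localize: write $\mathbb{R}^n = \bigcup_{k\in\mathbb{N}} K_k$ with $K_k = \overline{B_k(0)}$ the closed (compact, convex) balls, so that $S = \bigcup_k (S \cap K_k)$ with each $S \cap K_k$ of outer measure zero. Because a countable union of zero-measure sets has measure zero and $f(S) = \bigcup_k f(S\cap K_k)$, it suffices to show $\lambda^*_n(f(S\cap K_k)) = 0$ for each fixed $k$.

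On the compact set $K_k$ the derivative $Df$ is continuous, hence bounded, and since $K_k$ is convex the mean value inequality gives that $f$ restricted to $K_k$ is Lipschitz with constant $L = L_k = \sup_{K_k}\|Df\|$. The heart of the argument is then the Lipschitz covering estimate. Fixing $\epsilon > 0$, since $\lambda^*_n(S \cap K_k) = 0$ I would cover $S \cap K_k$ by countably many closed cubes $Q_j$ with $\sum_j \mathit{vol}(Q_j) < \epsilon$. If $Q_j$ has side $s_j$, its diameter is $\sqrt{n}\, s_j$, so $f(Q_j)$ has diameter at most $L\sqrt{n}\, s_j$ and is therefore contained in a cube $Q_j'$ of side $2L\sqrt{n}\, s_j$, whence $\mathit{vol}(Q_j') \leq (2L\sqrt{n})^n \mathit{vol}(Q_j)$. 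The cubes $Q_j'$ cover $f(S\cap K_k)$, so $\lambda^*_n(f(S\cap K_k)) \leq (2L\sqrt{n})^n \epsilon$. Letting $\epsilon \to 0$ yields $\lambda^*_n(f(S\cap K_k)) = 0$, and the union over $k$ finishes the first statement.

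For the second statement, recall that a diffeomorphism $\phi: \mathbb{R}^n \to \mathbb{R}^n$ is a smooth bijection whose inverse $\phi^{-1}$ is also smooth; in particular both $\phi$ and $\phi^{-1}$ are $C^1$. Applying the first statement to $\phi$ shows that $\lambda^*_n(S) = 0$ implies $\lambda^*_n(\phi(S)) = 0$. Applying it to $\phi^{-1}$ and using $\phi^{-1}(\phi(S)) = S$ gives the converse, so $S$ has zero measure if and only if $\phi(S)$ does, which is exactly the asserted diffeomorphism-invariance.

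The main obstacle, such as it is, lies in the first statement: the Lipschitz constant is only local, which forces the localization over the exhaustion $\{K_k\}$ and the appeal to countable subadditivity of the outer measure. Once the factor $(2L\sqrt{n})^n$ in the covering estimate is pinned down, the argument is routine; the parenthetical remark in the statement (outer measure zero iff measure zero) is the standard fact that lets us phrase everything in terms of $\lambda^*_n$ without worrying about the measurability of the image $f(S)$.
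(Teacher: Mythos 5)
Your proof is correct and is the standard textbook argument. There is, however, nothing in the paper to compare it against: the paper states this lemma among the ``theorems from mathematical analysis and differential topology'' and justifies it only by citation to \cite{HirschDT} and \cite{HartmanODE}, giving no proof of its own. So your contribution is to supply the omitted argument, and it does the job. One small point worth tightening: the cubes $Q_j$ covering $S \cap K_k$ need not lie inside $K_k$, which is where your Lipschitz constant $L_k$ was obtained, so the bound $\mathrm{diam}\,f(Q_j) \leq L\sqrt{n}\,s_j$ is not literally justified as written. Either replace $f(Q_j)$ by $f(Q_j \cap K_k)$ (the intersection of two convex sets is convex, so the mean value inequality still applies and the diameter bound survives), or assume without loss of generality that each cube has side at most $1$ and use the Lipschitz constant of $f$ on the larger ball $K_{k+2}$. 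With that adjustment the covering estimate $\lambda_n^*\bigl(f(S\cap K_k)\bigr) \leq (2L\sqrt{n})^n\epsilon$, the passage to the countable union over $k$, and the deduction of statement 2 from statement 1 applied to $\phi$ and $\phi^{-1}$ are all exactly right.
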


\begin{lemma}[A version of Fubini theorem]
\label{thm:compact_zero}
Let $U$ be a compact subset of $\mathbb{R}^n$. Denote by $\mathbb{R}_t^n$ the subset $\{t\}\times \mathbb{R}^{n-1} \subset \mathbb{R}^n$, and denote $U_t = U \cap \mathbb{R}_t^n$. If every set $U_t$ satisfies $\lambda_{n-1}^*(U_t) = 0$, then $\lambda_n^*(U) = 0$.
\end{lemma}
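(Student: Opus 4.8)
The plan is to work directly with the definition of the $n$-dimensional Lebesgue outer measure via coverings by closed intervals, using compactness of $U$ to upgrade the hypothesis on individual slices into a uniform covering of a whole tube around each slice. Since $U$ is compact it is bounded, so I fix a box $[a,b]\times Q$ with $Q\subset\mathbb{R}^{n-1}$ containing $U$; here the distinguished first coordinate plays the role of $t$, matching the slices $U_t=U\cap(\{t\}\times\mathbb{R}^{n-1})$. Fixing an arbitrary $\epsilon>0$, the goal is to exhibit a finite family of intervals covering $U$ whose volumes sum to at most $\epsilon$; letting $\epsilon\to 0$ then gives $\lambda^*_n(U)=0$.

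First, for each $t\in[a,b]$ I use the hypothesis $\lambda^*_{n-1}(U_t)=0$ to choose an open set $V_t\subseteq\mathbb{R}^{n-1}$ with $\hat\pi_1(U_t)\subseteq V_t$ and $\lambda^*_{n-1}(V_t)<\frac{\epsilon}{2(b-a)}$, obtained as the union of a countable cover of the slice by small open boxes. The key geometric step is a tube lemma: there exists $\delta_t>0$ such that every point of $U$ whose first coordinate lies in $J_t:=(t-\delta_t,t+\delta_t)$ already lies in $J_t\times V_t$. I would prove this by contradiction — if no such $\delta_t$ worked, there would be points $(s_k,x_k)\in U$ with $s_k\to t$ and $x_k\notin V_t$; compactness of $U$ yields a convergent subsequence with limit $(t,x)\in U_t$, so $x\in V_t$, while $x_k\to x$ with each $x_k$ in the closed complement of $V_t$ forces $x\notin V_t$, a contradiction.

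The intervals $\{J_t\}_{t\in[a,b]}$ form an open cover of the compact set $[a,b]$, so I extract a finite subcover and then thin it to a minimal subcover. Using the elementary one-dimensional fact that in a minimal cover of an interval by open intervals every real point lies in at most two of them, the selected intervals $J_{t_1},\dots,J_{t_m}$ satisfy $\sum_{i=1}^m \mathrm{length}(J_{t_i}\cap[a,b])\le 2(b-a)$. Since $[a,b]\subseteq\bigcup_i J_{t_i}$ and each $U\cap(J_{t_i}\times\mathbb{R}^{n-1})\subseteq J_{t_i}\times V_{t_i}$, the tubes cover $U$, and intersecting with $[a,b]\times\mathbb{R}^{n-1}$ gives $U\subseteq\bigcup_i\bigl((J_{t_i}\cap[a,b])\times V_{t_i}\bigr)$. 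Finite subadditivity of the outer measure together with the product estimate $\lambda^*_n(J\times V)\le \mathrm{length}(J)\cdot\lambda^*_{n-1}(V)$ then yields $\lambda^*_n(U)\le\sum_i \mathrm{length}(J_{t_i}\cap[a,b])\cdot\frac{\epsilon}{2(b-a)}\le 2(b-a)\cdot\frac{\epsilon}{2(b-a)}=\epsilon$, as required.

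The main obstacle is controlling the overlap of the tubes. The widths $\delta_t$ are delivered by the compactness argument and are not under my control, so naively summing the tube volumes over a finite subcover of $[a,b]$ does not give a bound in terms of $b-a$. The resolution is precisely the multiplicity-at-most-two property of minimal interval covers, which caps the total clipped length by $2(b-a)$ independently of how the $\delta_t$ behave; pairing this with the freedom to make each $\lambda^*_{n-1}(V_t)$ as small as desired is what closes the estimate. I note that no measurability of $U$ or of the slice function is needed, since the whole argument runs on outer measures and their subadditivity; alternatively, one could invoke the full Fubini--Tonelli theorem directly, as $U$ is measurable with finite measure and $t\mapsto\lambda_{n-1}(U_t)$ vanishes identically, but the self-contained covering argument matches the differential-topology setting of the cited references.
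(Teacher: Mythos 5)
Your proof is correct. Note, though, that the paper does not actually prove this lemma: it lists it among background results quoted from the literature (the references to Hirsch's \emph{Differential Topology} and Hartman), so there is no in-paper argument to compare against. What you have written is essentially the standard self-contained proof of the ``Fubini theorem for measure-zero sets'' as it appears in the differential-topology literature: enlarge each null slice to an open set $V_t$ of small $(n-1)$-dimensional outer measure, use compactness of $U$ to get a tube $J_t\times V_t$ capturing all of $U$ over $J_t$, pass to a finite subcover of $[a,b]$, and control the total length of the covering intervals. All the individual steps check out: the contradiction argument for the tube lemma correctly uses that the complement of the open set $V_t$ is closed; the multiplicity-at-most-two property of a minimal interval cover is the right tool (three open intervals sharing a point always contain a redundant one), and it gives $\sum_i \mathrm{length}(J_{t_i}\cap[a,b])\le 2(b-a)$ independently of the uncontrolled widths $\delta_t$; and the product bound $\lambda^*_n(J\times V)\le \mathrm{length}(J)\cdot\lambda^*_{n-1}(V)$ holds for open $V$ by covering $V$ with boxes. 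Your closing remark is also apt: since a compact set is measurable with finite measure, one could instead invoke Fubini--Tonelli directly, but the covering argument is more in the spirit of the outer-measure framework the paper sets up and requires no measurability of the slice function. The only cosmetic point is the degenerate case where $U$ lies in a single hyperplane $\{a\}\times\mathbb{R}^{n-1}$ (so that $b-a$ could be taken to be $0$); this is handled by simply choosing the enclosing box with $a<b$, as your construction implicitly allows.
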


\begin{lemma}
\label{lem:zer:6}
Let $g:\mathbb{R}^n\rightarrow \mathbb{R}^n$ be a multiaffine function. Let $F \in \mathit{Facets}_i(H)$. The following statements hold.
\enumerate
\item If there exists a set $U \subseteq H$ such that $\lambda_n^*(U) > 0$ and $\forall x \in U : g(x)=0$, then $g \equiv 0$ on $H$ (and on $\mathbb{R}^n$).
\item If there exists a set $V \subseteq F$ such that $\lambda_{n-1}^*(V) > 0$ and $\forall x \in V : g(x)=0$, then $g \equiv 0$ on $F$ (and on the hypherplane $\mathbb{R}_i^n(c)$ containing $F$).
\end{lemma}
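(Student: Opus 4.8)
\emph{Proof plan.} The plan is to reduce both statements to one scalar fact: a multi-affine function $h:\mathbb{R}^n\to\mathbb{R}$ that is not identically zero has a zero set $Z_h=\{x\in\mathbb{R}^n \mid h(x)=0\}$ of $n$-dimensional Lebesgue measure zero. Since $g=(g_1,\dots,g_n)$ vanishes at a point iff each component $g_k$ does, and each $g_k$ is scalar multi-affine, the hypothesis that $g\equiv 0$ on $U$ with $\lambda_n^*(U)>0$ forces, by the contrapositive of this fact, each $g_k$ to be the zero polynomial; hence $g\equiv 0$ on all of $\mathbb{R}^n$, in particular on $H$. This settles the first statement once the scalar fact is established.

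I would prove the scalar fact by induction on $n$ using the Fubini-type Lemma~\ref{thm:compact_zero}. For $n=1$ an affine $h$ is either identically zero or has at most one root, so $Z_h$ is finite. For the inductive step write $h(x_1,\dots,x_n)=x_1\,p(x_2,\dots,x_n)+q(x_2,\dots,x_n)$, with $p,q$ multi-affine in $n-1$ variables; since $h\not\equiv 0$, at least one of $p,q$ is nonzero. For a fixed first coordinate $x_1=t$ the restriction $h(t,\cdot)=t\,p+q$ is multi-affine in $(x_2,\dots,x_n)$, and a short comparison of the coefficients in the monomial basis shows it vanishes identically for at most one value $t_0$ of $t$ (and for no value at all when $p\equiv 0$, since then $h=q\not\equiv 0$). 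For every $t\neq t_0$ the restriction is a nonzero multi-affine function in dimension $n-1$, so the induction hypothesis gives that the slice $Z_h\cap H\cap\{x_1=t\}$ has $(n-1)$-measure zero.

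The main obstacle is precisely the single degenerate slice $t_0$, where the entire slice may lie in $Z_h$, so Lemma~\ref{thm:compact_zero} (which requires \emph{every} slice to be null) cannot be invoked directly. I would handle it by splitting $Z_h\cap H$ into its intersection with the hyperplane $\{x_1=t_0\}$ and the remainder. The intersection with $\{x_1=t_0\}$ lies in an $(n-1)$-dimensional hyperplane and so has $\lambda_n^*=0$; the remainder is covered by the increasing family of sets $(Z_h\cap H)\cap\{\,|x_1-t_0|\ge 1/m\,\}$ for $m\in\mathbb{N}$, each of which is compact (a closed subset of the compact rectangle $H$, closedness of $Z_h$ coming from continuity of $h$) and has all slices null, so Lemma~\ref{thm:compact_zero} yields $\lambda_n^*=0$ for each, and countable subadditivity finishes the bound. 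Thus $\lambda_n^*(Z_h\cap H)=0$, which contradicts $\lambda_n^*(U)>0$ unless $h\equiv 0$, completing the induction.

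Finally, the second statement reduces to the first one in dimension $n-1$. Restricting $g$ to the hyperplane $\mathbb{R}^{n-1}_i(c)$ containing $F$ produces $\tilde{g}(x_1,\dots,x_{i-1},x_{i+1},\dots,x_n)=g(x_1,\dots,x_{i-1},c,x_{i+1},\dots,x_n)$, which is multi-affine in the $n-1$ remaining variables and vanishes on $\hat{\pi}_i(V)$. By the convention adopted in the preliminaries, $\lambda_{n-1}^*(V)$ is exactly the $(n-1)$-measure of $\hat{\pi}_i(V)$, so $\hat{\pi}_i(V)$ has positive $(n-1)$-measure; applying the first statement in dimension $n-1$ to each component of $\tilde{g}$ gives $\tilde{g}\equiv 0$. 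Equivalently $g\equiv 0$ on the whole hyperplane $\mathbb{R}^{n-1}_i(c)$, and in particular on $F$, as required. \qed
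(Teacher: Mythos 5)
Your proof is correct and follows essentially the same route as the paper's: induction on the dimension via the Fubini-type Lemma~\ref{thm:compact_zero} for statement 1, and reduction of statement 2 to statement 1 in dimension $n-1$ by restricting to the hyperplane containing $F$. Your version is in fact more complete than the paper's, whose inductive step consists only of the phrase ``By Theorem~\ref{thm:compact_zero}'' --- in particular, your explicit isolation and handling of the single degenerate slice $x_1=t_0$ (on which the restriction may vanish identically, so the every-slice-null hypothesis of the Fubini lemma fails) resolves an issue the paper's sketch leaves entirely unexamined.
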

\begin{proof}
The proof of statement \emph{1.}  will be done 
 by mathematical induction wrt $n$.

For $n=1$ is $H$ a line segment, and the function $g$ is a linear function of one variable. Either there is one point (a set of measure zero in $\mathbb{R}^1$) where the function attains zero value, or the function is zero on the whole line.

For $n=2$, $H$ is a rectangle in plane. The multi-affine function of two variables is either zero on whole plane, or on two intersecting lines, or a hyperbolic curve or empty set. Therefore either the function is zero on the whole plane, or on a subset of plane of measure zero. Only the second case is compatible with $g$ being nonzero on a set of non-zero Lebesgue outer measure.
Assume that the statement holds for $1,2,\dotsc, n-1$ and let us prove it for $n$.
By Theorem~\ref{thm:compact_zero}.

\emph{Proof of 2.} is analogous, we can identify $\mathbb{R}_i^n(c)$ with $\mathbb{R}^{n-1}$ and assume that $g$ is a multiaffine function of $n-1$ variables (with $x_i=c$ constant).
\qed
\end{proof}

\begin{definition}
\label{def:hrouda}
Let $U \subset H$, let $I$ be an (bounded or unbounded) interval in $\mathbb{R}$. Define the \emph{$I$-trajectories unoin} of $U$ as the set
$$\Xi_I(U)\bigl\{ x(s) | s \in I, x(t)\mathrm{\ is\ a\ solution\ of\ system~(\ref{eq:autsystem})\ defined\ on\ }I, x(0) \in U\bigr\}.$$
\end{definition}

\begin{lemma}
\label{lem:zer:7}
Let $U\subset H$ be a $(n-1)$-dimensional closed disc in $H$ (i.e.
$$\exists i \in \{1,\dotsc,n\} \exists c \in \mathbb{R} \exists \delta > 0 \exists x \in H: U=\{y \in \mathbb{R}_i^n(c) | \left| x - y \right| \leq \delta \},$$
 we do not assume there is a facet $F$ of $H$ with $U \subseteq F$). 

Let  $\lambda_{n-1}^*(U)>0$ and let either $\forall y \in U: f(y)\cdot e_i > 0$ or $\forall y \in U: f(y)\cdot e_i < 0$. 

Then either $\exists U' \subseteq U$ with $\lambda_{n-1}^*(U')>0$ such that all the solutions of system~(\ref{eq:autsystem}) stay forever in $H$, or there exists $F' \in \mathit{Facets}(H)$ such that $\lambda_{n-1}^*(F' \cap \Xi_{[0,+\infty)}(U)) > 0$.

Moreover, in the second case,  $\lambda_{n-1}^*(F' \cap W) > 0,$ where $W$ is the set of points of the boundary of $H$ where the trajectories leave the rectangle $H$ for the first time (a subset of the connected component of $\Xi_{[0,+\infty)}(U) \cap H$ containing $U$).
\end{lemma}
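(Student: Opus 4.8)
The plan is to use the transversality built into the hypothesis to set up a flow box across $U$ and then to locate, via a Poincar\'e-type section-to-facet correspondence, where the resulting tube of trajectories first leaves $H$. I would first reduce to the case $f(y)\cdot e_i>0$ on $U$, the opposite sign being symmetric under time reversal. Since $f$ is multi-affine it is $C^\infty$, so Lemma~\ref{thm:c1_solutions} gives unique $C^1$ solutions, Theorem~\ref{thm:continuous_dependence} gives joint continuity of the flow $\eta(t,y_0)$, and Theorem~\ref{thm:autonomous_traj} (uniqueness of trajectories) will be used throughout. Removing from $U$ its intersections with the other facets of $H$ and with its own boundary discards only an $(n-1)$-null set, so I may assume each $y_0\in U$ has every coordinate but the $i$th strictly inside $(a_j,b_j)$; the degenerate configuration $c=b_i$, in which $U\subseteq\mathit{Facet}^{\top}_i(H)$ and the trajectories leave $H$ immediately, already yields the second alternative with $F'=\mathit{Facet}^{\top}_i(H)$, so I assume trajectories remain in $H$ for a short positive time.

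Next I would study the map $\Phi(y_0,t)=\eta(t,y_0)$ on $U\times[0,\tau]$. At $t=0$ the columns of $D\Phi$ are a basis of the tangent space to $\mathbb{R}^{n-1}_i(c)$ together with $f(y_0)$, whose $i$th component is nonzero by hypothesis; hence $D\Phi$ is invertible, $\Phi$ is a local $C^1$-diffeomorphism near $U\times\{0\}$, and after shrinking to a positive-measure sub-disc $U_0\subseteq U$ and a small $\tau>0$ it is a diffeomorphism of $U_0\times[0,\tau]$ onto a subset of $H$. By Lemma~\ref{thm:smooth_diffeo_zero} this elementary tube has positive $n$-dimensional measure, it lies in $\Xi_{[0,+\infty)}(U)\cap H$, and (being connected and meeting $U$) in the connected component of that set containing $U$; this is what will eventually place $W$ in the right component.

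I would then split $U_0$ by the first-exit time $\tau(y_0)=\sup\{t>0:\eta([0,t],y_0)\subseteq H\}\in(0,+\infty]$ into $U_\infty=\{\tau=+\infty\}$ and $U_{fin}=\{\tau<+\infty\}$. If $\lambda^*_{n-1}(U_\infty)>0$ I take $U'=U_\infty$ and obtain the first alternative. Otherwise $\lambda^*_{n-1}(U_{fin})=\lambda^*_{n-1}(U_0)>0$, and for $y_0\in U_{fin}$ the first-leaving point $\chi(y_0)=\eta(\tau(y_0),y_0)$ lies on $\partial H$; the set of these points is exactly $W$. As there are finitely many facets and their pairwise intersections are $(n-1)$-null, subadditivity lets me fix a single facet $F'\subseteq\{x_j=d\}$ through which a positive-measure set $U_{F'}\subseteq U_{fin}$ exits first, and it then suffices to prove $\lambda^*_{n-1}\bigl(\chi(U_{F'})\bigr)>0$.

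The crux, and the step I expect to be the main obstacle, is controlling where this exit map is regular. At a \emph{transversal} first exit, i.e.\ where $f\cdot e_j\neq0$ at $\chi(y_0)$, the implicit function theorem applied to $(\eta(t,y_0))_j-d=0$ makes $\tau$ locally $C^1$, so $\chi$ is a local $C^1$-diffeomorphism from the transversal disc $U$ to the transversal facet $F'$; covering by charts and applying Lemma~\ref{thm:smooth_diffeo_zero} to the local inverses, any positive-measure set of transversal-exit starting points has image of positive $(n-1)$-measure inside $F'\cap W$ (note that global injectivity of $\chi$ is not needed, since a local diffeomorphism already sends positive measure to positive measure). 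It therefore remains to rule out that almost every trajectory from $U_{F'}$ grazes $F'$ \emph{tangentially}, with $f\cdot e_j=0$ at the exit point. By Lemma~\ref{lem:zer:6}(2) the tangential locus $\{p\in F':(f(p))_j=0\}$ is either $(n-1)$-null in $F'$ or all of $F'$; the latter forces $f\cdot e_j\equiv0$ on $\{x_j=d\}$, so trajectories never cross $F'$ and it cannot be a genuine first-exit facet. Transferring the nullness of the tangential locus in $F'$ back to nullness of the tangential-exit starting set in $U$ is the delicate point: here I would combine the $C^1$ dependence of the flow with the rigidity of Lemma~\ref{lem:zer:6} and a Fubini slicing (Lemma~\ref{thm:compact_zero}) to exclude a positive-measure family of grazing first exits. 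Once this is done, the transversal-exit starting set carries positive measure, its image gives $\lambda^*_{n-1}(F'\cap W)>0$, and since every point of $W$ is the endpoint of a trajectory arc issuing from $U$ and lying in $H$, $W$ sits in the connected component of $\Xi_{[0,+\infty)}(U)\cap H$ containing $U$, completing the ``moreover'' clause.
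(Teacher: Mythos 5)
Your proposal and the paper's proof share the same first half: both build the flow-box diffeomorphism between a tube $\Xi_{[0,\tau]}(U_0)\cap H$ and a positive-measure subset of $U\times\mathbb{R}$ (using $f\cdot e_i\neq 0$ on $U$ for transversality and Lemma~\ref{thm:smooth_diffeo_zero} to transport positive $n$-dimensional measure), and both then work under the assumption that the first alternative fails. You diverge in the second half. You push forward: you analyse the first-exit (Poincar\'e) map $\chi$ from $U$ to $\partial H$, show it is a local $C^1$-diffeomorphism at transversal exit points, and must then separately exclude a positive-measure family of tangential (``grazing'') exits via Lemma~\ref{lem:zer:6}. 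The paper pulls back: it covers the forward tube by the finitely many backward saturations $\Xi_{(-\infty,0]}(F'\cap W)\cap H$ and observes that if every $F'\cap W$ were $(n-1)$-null, each backward saturation would be $n$-null (a smooth image of a null product set, by Lemma~\ref{thm:smooth_diffeo_zero} and Fubini), contradicting the positive $n$-measure of the tube. The pull-back direction only needs that smooth images of null sets are null, which holds with no transversality at the exit, so the grazing problem you correctly identify as the crux of your route simply does not arise in the paper's. Your plan for that step (``combine $C^1$ dependence with Lemma~\ref{lem:zer:6} and Fubini'') is the one genuinely incomplete point of your write-up, since $\chi$ need not even be continuous at grazing points; the cleanest way to discharge it is in effect the paper's trick: the backward saturation of the tangential locus $Z\subseteq F'$ is $n$-null, so by the tube diffeomorphism the set of starting points in $U$ whose trajectories meet $Z$ before exiting is $(n-1)$-null. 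With that substitution your argument closes, and it has the small advantage of locating the positive-measure exit set constructively rather than by contradiction; both arguments also need (and yours states more carefully than the paper's) that the starting points of trajectories staying forever in $H$ form a null set, so that the exit sets really account for almost all of the tube.
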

\begin{proof}
Consider the case, when there is no $ U' \subseteq U$ with $\lambda_{n-1}^*(U')>0$ such that all the solutions of system~(\ref{eq:autsystem}) stay forever in $H$.

Since all compactly supported vector fields are complete, we can restrict ourselves on the multi-affine vector field $f$ defined on a compact neighbourhood of $H$ and assume that
flow of this (complete) vector field defines a one-parametric group of diffeomorphisms $Fl_t^f : (t,y_0) \rightarrow y(t),$ where $y(t)$ is the respective solution of system~\ref{eq:autsystem}. 

We consider a diffeomorphism $\varphi_U$ between $\Xi_{[0,+\infty)}(U) \cap H$ and a special subset $V$ of $U \times \mathbb{R}$.
We prove that for $U$ satisfying the assumptions of Lemma~\ref{lem:zer:7} the set $V$ is of nonzero measure. Therefore its image by diffeomorphism $\varphi$ must be of nonzero measure (due to Theorem~\ref{thm:smooth_diffeo_zero}). 
If on the other hand the set $\Xi_{(-\infty,0]}(F' \cap W) \cap H$ was of measure zero, its image in $\varphi_{F' \cap W}$ must be of zero measure also, but since
$\Xi_{[0,+\infty)}(U) \cap H$ is the union of finitely many such sets $\Xi_{(-\infty,0]}(F' \cap W) \cap H$, there has to exist $F'$ such that the set $\Xi_{(-\infty,0]}(F' \cap W) \cap H$ has nonzero measure.
\qed
\end{proof}

\begin{lemma}
\label{lem:zer:8}
Let $E \in \mathit{EntrySets}_{\kappa}(F,H), E \neq \emptyset$. 

Then either $\exists U' \subseteq E$ with $\lambda_{n-1}^*(U')>0$ such that all the solutions of system~(\ref{eq:autsystem}) stay forever in $H$, or there exists $F' \in \mathit{Facets}(H)$ such that
$$\lambda_{n-1}^*(F' \cap \Xi_{[0,+\infty)}(E \cap \mathit{Entry}(F,H))) > 0.$$ 

Moreover, in the second case, $\lambda_{n-1}^*(F' \cap W) > 0,$ where $W$ is the set of points of the boundary of $H$ where the trajectories leave the rectangle $H$ for the first time (a subset of the connected component of $\Xi_{[0,+\infty)}(E \cap \mathit{Entry}(F,H)) \cap H$ containing the set $E \cap \mathit{Entry}(F,H)$).
\end{lemma}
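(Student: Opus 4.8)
The plan is to \emph{reduce} Lemma~\ref{lem:zer:8} to Lemma~\ref{lem:zer:7}, which already resolves the case of a single $(n-1)$-dimensional disc on which the normal component of $f$ keeps one strict sign. Since $E \in \mathit{EntrySets}_{\kappa}(F,H)$ with $F \in \mathit{Facets}_i(H)$, the set $E$ lies in some hyperplane $\mathbb{R}^{n-1}_i(c)$ with $c \in \{a_i,b_i\}$. It therefore suffices to exhibit a closed $(n-1)$-disc $U \subseteq E \cap \mathit{Entry}(F,H)$ with $\lambda^*_{n-1}(U)>0$ on which $f(\cdot)\cdot e_i$ is of one strict sign, apply Lemma~\ref{lem:zer:7} to $U$, and transport the resulting dichotomy back to $E$.

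First I would check that $E \cap \mathit{Entry}(F,H)$ has positive $(n-1)$-measure. By Definition~\ref{def:entrysets}, the nonempty $E$ is a union of tiles $A \in \mathit{Tiles}_{\kappa}(\mathit{Entry}(F,H))$, and the tile-selection rule forces $\lambda^*_{n-1}(A \cap \mathit{Entry}(F,H)) \ge \tfrac12 \lambda^*_{n-1}(A) > 0$ for at least one such $A$, so $\lambda^*_{n-1}(E \cap \mathit{Entry}(F,H)) > 0$. Next I would pin down the normal sign. On $F$ a point $y_0$ is an entry point essentially by the \emph{inward} condition on $f \cdot e_i$ (for a lower facet $y_0$ with $f(y_0)\cdot e_i < 0$ has $x_i(t)<a_i$ for small $t>0$ and leaves $H$ at once, so cannot be an entry point; symmetrically for an upper facet), a condition that is relatively open since $f$ is continuous. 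The function $x \mapsto f(x)\cdot e_i$ restricted to $\mathbb{R}^{n-1}_i(c)$ is multi-affine, so by Lemma~\ref{lem:zer:6}(2) its zero set in $F$ is either null or all of $F$. In the non-degenerate case (null zero set), $\mathit{Entry}(F,H)$ coincides, off a null set and off the relative boundary of $F$, with the relatively open inward region $\{x \in F \mid f(x)\cdot e_i>0\}$ (or $<0$); hence $E \cap \mathit{Entry}(F,H)$ contains a relatively open set of positive measure, inside which a closed $(n-1)$-disc $U$ with the required strict sign can be inscribed.

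Applying Lemma~\ref{lem:zer:7} to this $U$ then gives one of its two outcomes. If it yields $U' \subseteq U$ of positive measure whose solutions remain in $H$ forever, then $U' \subseteq E$ delivers the first alternative of Lemma~\ref{lem:zer:8}. Otherwise it yields $F' \in \mathit{Facets}(H)$ with $\lambda^*_{n-1}(F' \cap \Xi_{[0,+\infty)}(U)) > 0$; since $U \subseteq E \cap \mathit{Entry}(F,H)$ we have the inclusion $\Xi_{[0,+\infty)}(U) \subseteq \Xi_{[0,+\infty)}(E \cap \mathit{Entry}(F,H))$, which immediately upgrades to the second alternative. For the ``moreover'' clause, the first-exit set $W$ produced by Lemma~\ref{lem:zer:7} for $U$ is, trajectory by trajectory, contained in the first-exit set for $E \cap \mathit{Entry}(F,H)$, so $\lambda^*_{n-1}(F' \cap W)>0$ persists.

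The step I expect to be the main obstacle is the degenerate tangency case $f \cdot e_i \equiv 0$ on $\mathbb{R}^{n-1}_i(c)$, where no disc with a strict normal sign exists and Lemma~\ref{lem:zer:7} cannot be invoked as stated. Here the hyperplane $\{x_i=c\}$ is flow-invariant, so trajectories issuing from $E$ never cross $F$ and stay within $\{x_i=c\}$; consequently $\Xi_{[0,+\infty)}(E \cap \mathit{Entry}(F,H))$ meets every side facet $F' \in \mathit{Facets}_j(H)$, $j\neq i$, only in the codimension-two edge $F' \cap \{x_i=c\}$, which carries zero $(n-1)$-measure, so the second alternative is automatically unavailable. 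One must then argue \emph{within} the invariant hyperplane — an $(n-1)$-dimensional multi-affine subsystem on the facet $F$ — that the surviving trajectories either stay in $H$ (i.e.\ in $F \subseteq H$) forever, yielding the first alternative with $U' \subseteq E$, or escape; the delicate point is reconciling this with the purely measure-theoretic bookkeeping of Definition~\ref{def:entry} (whose entry condition only requires $y(t)\in H$ rather than entry into $\mathit{Inter}(H)$), and one most cleanly disposes of it by observing that under the interior-entry reading consistent with Definition~\ref{def:startsets} this degenerate situation makes $\mathit{Entry}(F,H)$ null, so the case is vacuous.
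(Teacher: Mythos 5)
Your proposal follows essentially the same route as the paper's proof: both reduce the claim to Lemma~\ref{lem:zer:7} by exhibiting a closed $(n-1)$-disc inside $E\cap\mathit{Entry}(F,H)$ on which the normal component $f\cdot\nu_H(F)$ has one strict sign --- via continuity of $f$ in the generic case, and via Lemma~\ref{lem:zer:6} to control the degenerate case where $f\cdot\nu_H(F)$ vanishes on a positive-measure subset of $F$. The one place you genuinely diverge is that degenerate case, and there your treatment is the more coherent of the two. The paper, having just used Lemma~\ref{lem:zer:6} to force $f\cdot\nu_H(F)\equiv 0$ on the whole hyperplane containing $F$ (which is then flow-invariant), nevertheless asserts that the ``definitorial property'' of $\mathit{Entry}(F,H)$ yields a point $z\in E\cap\mathit{Entry}(F,H)$ with $f(z)\cdot\nu_H(F)<0$ --- contradicting the case hypothesis --- and proceeds to Lemma~\ref{lem:zer:7} from there; as written this step does not go through. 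You instead observe that invariance of the hyperplane prevents any trajectory from $F$ entering $\mathit{Inter}(H)$, so under the interior-entry reading consistent with Definition~\ref{def:startsets} the set $\mathit{Entry}(F,H)$ is empty, no tiles are selected, and the case is vacuous. Your worry about the literal wording of Definition~\ref{def:entry} (which requires only $y(t)\in H$) is legitimate: for $n=2$, $H=[0,1]^2$, $F=\{0\}\times[0,1]$, $f=(x_1,1)$, the facet is invariant and has positive-measure entry set under the literal reading, yet no positive-measure subset stays in $H$ forever and the first-exit set $W$ is the single corner point, so the ``moreover'' clause fails; the lemma needs the interior-entry convention (or an explicit exclusion of tangential facets) to be true. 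Your additional bookkeeping --- positive measure of $E\cap\mathit{Entry}(F,H)$ from the tile-selection rule, and monotonicity of $\Xi_{[0,+\infty)}$ and of $W$ under enlarging the initial set --- is left implicit in the paper but is correct and worth recording.
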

\begin{proof}
Consider again the case, when there is no $ U' \subseteq U$ with $\lambda_{n-1}^*(U')>0$ such that all the solutions of system~(\ref{eq:autsystem}) stay forever in $H$.

The situation is easier if there exists a point $x \in E$ with $f(x) \cdot \nu_H(F) < 0$. The multi-affine function $f$ is continuous, therefore there exists am open sphere containing $x$ such that $f(y) \cdot \nu_H(F) < 0$ for all points $y$ from this sphere. As a subset of the intersection of this sphere and $F$ there exists a closed disc like $U$ from the assumption of Lemma~\ref{lem:zer:7} and the statement of this lemma holds.

Otherwise the whole set $E \cap \mathit{Entry}(F,H)$ is a subset of $F$ with nonzero $(n-1)$-dimensional outer Lebesgue measure on which $f\equiv 0$.
By Lemma~\ref{lem:zer:6} $f\equiv 0$ on the whole $(n-1)$-dimensional hypher-plane containing $F$.

The idea of proof for this configuration is to use the definitorial property of $\mathit{Entry}(F,H)$, that insures existence of $\epsilon_z$ and a point $y \in \mathit{Inter}(H)$ reachable by a trajectory from a point $z$ in $E \cap \mathit{Entry}(F,H)$ in time $\epsilon_z$ with $f(z) \cdot \nu_H(F) < 0$ (that is implied by the properties of multi-affine function $f$). By continuous dependency on the initial conditions we find a disc around $z$ that cosists of points reachable from $E \cap \mathit{Entry}(F,H)$ (continuous dependence) and satisfies assumptions of Lemma~\ref{lem:zer:7}. Then we use Lemma~\ref{lem:zer:7} and this lemma is proved.
\qed
\end{proof}

\begin{theorem}
\label{thm:zero_measure}
Let $E \in \mathit{EntrySets}_{\kappa}(H), E \neq \emptyset$. Further, let $n'=n$, if $E=H$, and $n'=(n-1)$, otherwise.
Then
\begin{equation} \sum_{A \in \mathit{Facets}(H) \cup \{\emptyset\} }
\lambda^*_{n'}\bigl(\mathit{Focal}(H,E,A)\bigr) > 0,
\end{equation}
\end{theorem}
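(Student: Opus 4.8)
The plan is to prove that the entry set $E$ cannot, up to a null set, consist entirely of points whose trajectories neither stay in $H$ forever nor cross a single facet cleanly into a neighbouring rectangle; some focal subset must carry positive $n'$-dimensional measure. I would run the argument through the dichotomy of Lemma~\ref{lem:zer:8}, splitting according to the two possible values of $n'$.

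First I would settle the principal case $E \subseteq F$ with $n'=n-1$, where Lemma~\ref{lem:zer:8} applies verbatim. Its first alternative gives a set $U' \subseteq E$ with $\lambda^*_{n-1}(U')>0$ all of whose trajectories remain in $H$ forever; by Definition~\ref{def:startsets} these lie in $\mathit{Focal}(H,E,\emptyset)$, so that summand alone is positive and we are done. Its second alternative yields a facet $F'$ together with a positive-measure set $W \subseteq F'$ of first-exit points of trajectories originating in $E \cap \mathit{Entry}(F,H)$. I would transport this positivity back to $E$ through the first-exit map $\Phi\colon y_0 \mapsto y(c(y_0))$ that sends an origin to the point where its trajectory first meets $F'$. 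Since $f$ is multi-affine, its flow is $C^1$ in time and initial data (Lemma~\ref{thm:c1_solutions}), and at a transversal crossing the first-hitting time $c(y_0)$ depends $C^1$ on $y_0$ by the implicit function theorem applied to the coordinate cutting out $F'$; hence $\Phi$ is a $C^1$ map between subsets of the two $(n-1)$-dimensional hyperplanes carrying $F$ and $F'$. By Lemma~\ref{thm:smooth_diffeo_zero} a smooth map cannot send a null set onto a set of positive measure, so the positive-measure image $W$ forces $\lambda^*_{n-1}(\mathit{Focal}(H,E,F'))>0$. The points to discard are those meeting $F'$ tangentially, where $f\cdot\nu_H(F')=0$; by Lemma~\ref{lem:zer:6} this tangency locus on $F'$ is either null or forces $f$ to vanish on the entire hyperplane, and the latter is incompatible with a genuine first-exit through $F'$.

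Second I would dispose of the case $E=H$, where $n'=n$ and $\lambda^*_n(E)=\mathit{vol}(H)>0$. Here it suffices to show that the union $\bigcup_{A}\mathit{Focal}(H,H,A)$ is co-null in $H$, since then subadditivity gives $\sum_{A}\lambda^*_n(\mathit{Focal}(H,H,A)) \ge \lambda^*_n\bigl(\bigcup_A \mathit{Focal}(H,H,A)\bigr)=\mathit{vol}(H)>0$. Every point of $\mathit{Inter}(H)$ either has a trajectory that stays in $H$ for all $t>0$, landing in $\mathit{Focal}(H,H,\emptyset)$, or leaves $H$ at a first time $c$. Among the latter, the only origins missing from the union are those whose first contact with $\partial H$ is tangential or occurs on the relative boundary of a facet (an edge or corner). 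The tangency contacts sit above the zero-set of the relevant coordinate of $f$, which is null by Lemma~\ref{lem:zer:6}; flowing such a null $(n-1)$-dimensional base backward sweeps an $n$-dimensional tube whose every time-slice is again null, so the tube is $n$-null by the Fubini statement of Lemma~\ref{thm:compact_zero}. The edge/corner contacts start from even lower-dimensional strata and contribute nothing. Hence the bad set is $n$-null and the union is co-null, as required.

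I expect the crux to be exactly the confinement of the degenerate behaviour to a null set: one must be sure that no positive-measure chunk of $E$ is made up of trajectories that only graze the boundary or funnel into edges and corners without transversally entering a neighbour. This hinges entirely on the rigidity of multi-affine fields through Lemma~\ref{lem:zer:6}, which makes every tangency locus a proper, hence null, subset of its facet; the flow regularity of Lemma~\ref{thm:c1_solutions}, the smooth-map invariance of null sets of Lemma~\ref{thm:smooth_diffeo_zero}, and the Fubini estimate of Lemma~\ref{thm:compact_zero} then do the bookkeeping. A secondary point is a facet $F'$ lying on the outer boundary of the phase-space interval, where no neighbour $H'$ exists and no focal subset is defined; here one invokes that on the chosen invariant concentration region the field is inward-pointing, so no positive-measure first-exit set can sit on such a facet and the facet delivered by Lemma~\ref{lem:zer:8} may be taken to admit a neighbour.
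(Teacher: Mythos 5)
Your argument is correct, and for the principal case $E\subseteq F$ it follows essentially the paper's own route: the paper's proof of that case is literally one line invoking the dichotomy of Lemma~\ref{lem:zer:8}, and what you write out explicitly (transporting the positive measure of the first-exit set $W\subseteq F'$ back to $E$ via the flow and Lemma~\ref{thm:smooth_diffeo_zero}, and disposing of the tangency locus via Lemma~\ref{lem:zer:6}) is exactly the content that the paper buries inside the proofs of Lemmas~\ref{lem:zer:7} and~\ref{lem:zer:8}. Where you genuinely diverge is the case $E=H$. The paper argues locally: it takes a single trajectory that leaves $H$, uses continuous dependence on initial conditions (Theorem~\ref{thm:continuous_dependence}) to fatten it to an open ball $B$, applies Lemma~\ref{lem:zer:7} to a disc in $H\cap B$ to locate a facet $F'$ receiving a positive-measure exit set, and then shrinks to a ball $B'$ all of whose trajectories exit through $\mathit{Inter}(F')$, giving $\lambda^*_n(\mathit{Focal}(H,H,F'))\geq\lambda^*_n(B')>0$. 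You instead prove the stronger statement that $\bigcup_A\mathit{Focal}(H,H,A)$ is co-null in $H$ (tangential and edge/corner exits sweep out an $n$-null tube by Lemma~\ref{lem:zer:6} plus the Fubini/smooth-image lemmas) and conclude by subadditivity that the sum is at least $\mathit{vol}(H)$. Your version costs more bookkeeping but yields a quantitative lower bound and extra information (almost every initial point contributes to some focal set) that the paper's single-ball argument does not. One caveat applies to both proofs equally: when the receiving facet $F'$ lies on the outer boundary of the phase space there is no neighbour $H'\in\mathit{Rect}(\mathcal{T})$ and hence no focal subset targeting $F'$; you patch this with an inward-pointing assumption that the paper never states, and the paper simply ignores the issue, so this is a shared lacuna rather than a defect of your proposal.
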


\begin{proof}
The proof will be divided into two parts.

First, for $E \subseteq F, E\neq \emptyset.$ Second, for $E=H$.

\emph{Proof~of~1.} The statement of theorem is obtained by using Lemma~\ref{lem:zer:8}.

\emph{Proof~of~2.}
Let $E=H$. In the case when all the trajectories with initial points in $H$ stay forever in $H$ 
the inequality $\lambda^*_{n}\bigl(\mathit{Focal}(H,E,\emptyset)\bigr) > 0$ holds.

In the case when there exists a point $x_0 \in H$ and a trajectory of a solution $x(t)$ of system~(\ref{eq:autsystem}) with $x(0)=x_0, \exists c, \epsilon>0: \forall t\in [0,c], x(t)\in H, \forall t \in (c, c+\epsilon) x(t) \notin H.$
Let us denote $y$ the point $x(c+\frac{\epsilon}{2})\notin H$. From Theorem~\ref{thm:continuous_dependence} (continuous dependency on initial conditions) there exists an $n$-dimensional sphere $B$ with centre $x_0$ such that trajectories of all solutions of~(\ref{eq:autsystem}) with initial point in the sphere $B$ leave the rectangle $H$ and continue into a neighbourhood of point $y$. The intersection $H \cap B$ surely contains a disc satisfying assumptions of Lemma~\ref{lem:zer:7} and therefore there exists $F' \in \mathit{Facets}(H)$ such that the intersection of $\Xi_{[0,+\infty)(B)}$ and $F'$ is of nonzero measure, i.e. it has a subset of nonzero $(n-1)$-dimensional measure in the interior of the facet $F'$. Then there exists an open sphere $B'$ contained in $B$ such that all trajectories of solutions with initial points in $B'$ leave the rectangle $H$ through the interior of facet $F'$, and $\lambda^*_{n}\bigl(\mathit{Focal}(H,E,F')\bigr) \geq \lambda_{n}^*(B') > 0$. 
\qed
\end{proof}



\begin{theorem}
\label{thm:mc}
The quantitative abstraction automaton $\mathrm{QDAA}_{\kappa}(\mathcal{B})$ of a biochemical system $\mathcal{B}$ is a discrete time Markov chain.
\end{theorem}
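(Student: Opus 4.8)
The plan is to check, directly from Definition~\ref{def:qdaa}, the two ingredients that make $QDAA_\kappa(\mathcal B)=\langle S,\mathcal I_C,\delta,p\rangle$ a time-homogeneous discrete-time Markov chain: a finite state space carrying a fixed one-step transition law, and row-stochasticity of the weight matrix $p$, i.e.\ $p(s,s')\in[0,1]$ for all $s,s'$ and $\sum_{s'\in S}p(s,s')=1$ for every $s$. Finiteness of $S$ is immediate, since $\mathit{Rect}(\mathcal T)$ is finite and, as noted after Definition~\ref{def:entrysets}, each $\mathit{EntrySets}_\kappa(H)$ is finite. The Markov (memorylessness) property then holds by construction: although the entry set $E$ is designed to \emph{encode} the history of how a trajectory reached $H$, this history is recorded \emph{inside} the state, so the value $p(\langle H,E\rangle,\langle H',E'\rangle)$ is a fixed function of the source and target states alone and does not depend on the path by which $\langle H,E\rangle$ was reached. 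Hence the only genuine content is row-stochasticity of $p$.

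First I would dispose of the steady-state rows. If $E=\emptyset$, then by Definition~\ref{def:successors} the sole successor is $\langle H,\emptyset\rangle$ and the first line of the weight formula gives $p(\langle H,\emptyset\rangle,\langle H,\emptyset\rangle)=1$, so the row sum is $1$. For $E\neq\emptyset$ I would exploit the fact that every outgoing weight of $s=\langle H,E\rangle$ carries the same denominator
$$D=\sum_{A\in\mathit{Facets}(H)\cup\{\emptyset\}}\lambda^*_{n'}\bigl(\mathit{Focal}(H,E,A)\bigr),$$
with $n'=n$ if $E=H$ and $n'=n-1$ otherwise. The positive numerators appearing in the second and third lines of the formula are precisely the individual summands of $D$: the summand for $A=\emptyset$ is the numerator of the weight to the steady-state successor $\langle H,\emptyset\rangle$, and the summand for each facet $A=F'=H\cap H'$ with $H'\bowtie H$ is the numerator of the weight to the neighbour successor $\langle H',E'\rangle$. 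Summing the row therefore telescopes to $D/D$. That this is legitimate — that $D>0$, so $p$ is well defined and every ratio lies in $[0,1]$, being a single nonnegative summand divided by the total — is exactly the content of Theorem~\ref{thm:zero_measure}, applied in each of the two cases $n'=n$ and $n'=n-1$ it was stated to cover.

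The step I expect to be the main obstacle is establishing the exact bijection between the summands of $D$ and the outgoing transitions, so that the telescoping is lossless. Here I would invoke Theorem~\ref{thm:autonomous_traj}: uniqueness of the inextendible trajectory through each point gives almost every $y_0\in E$ a well-defined fate — it either remains in $H$ for all positive times (contributing to $\mathit{Focal}(H,E,\emptyset)$) or first crosses into the interior of exactly one neighbour through a single facet (contributing to exactly one $\mathit{Focal}(H,E,F')$) — while the points escaping through lower-dimensional edges form a set that does not affect any $\lambda^*_{n'}$. Two consistency points then need checking. I must verify that the pairs assigned positive weight by the case split in Definition~\ref{def:qdaa} coincide with the successors $\delta(s)=\mathit{Succs}(s)$ of Definition~\ref{def:successors}, so that summing $p(s,\cdot)$ over all of $S$ is the same as summing over $\delta(s)$. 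And I must restrict to rectangles each of whose facets is shared with a neighbour in $\mathit{Rect}(\mathcal T)$ — equivalently, assume the dynamics never exits the bounded region of interest through a boundary facet — since a facet without a neighbour still contributes to $D$ yet supports no transition, which would make the row sum strictly less than $1$. Under the standard biochemical assumption of a flow-invariant bounded phase space this leakage does not occur, the telescoping gives $D/D=1$, and $p$ is row-stochastic, completing the identification of $QDAA_\kappa(\mathcal B)$ as a discrete-time Markov chain.
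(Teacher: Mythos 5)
Your proof follows essentially the same route as the paper's: finiteness of the state space, row-stochasticity obtained by observing that for a state with $E\neq\emptyset$ every outgoing weight shares the common denominator $D$ whose summands are exactly the numerators so the row telescopes to $D/D$, with $D>0$ supplied by Theorem~\ref{thm:zero_measure}, and memorylessness holding by construction since the history is encoded in the entry set of the state itself. You are in fact more careful than the paper's three-line argument: the two consistency points you flag --- possible leakage through a facet of $H$ that has no neighbour in $\mathit{Rect}(\mathcal{T})$, and the need to match the support of $p$ with $\mathit{Succs}(\langle H,E\rangle)$ (whose conditions are stated via $\lambda^{\kappa}$ rather than $\lambda^{*}$) --- are genuine gaps that the paper's own proof passes over in silence.
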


\begin{proof}
The number of states is finite, bounded by $\left|\mathit{Rect}(\mathcal{T})\right|
\Big(2 + 2n \cdot \big( 2^{\kappa^{n-1}}-1\big)\Big)$.

Sum of probabilities of transitions from one state $\langle H,E \rangle$ equals
$1$ for $E=\emptyset$ and 
$\frac{\sum_{A \in \mathit{Facets}(H) \cup \{\emptyset\} }
\lambda^*_{n'}\bigl(\mathit{Focal}(H,E,A)\bigr)}{
\sum_{A \in \mathit{Facets}(H) \cup \{\emptyset\} }
\lambda^*_{n'}\bigl(\mathit{Focal}(H,E,A)\bigr)}$. The later sum equals $1$, whenever its denominator is nonzero (it is the case because of Theorem~\ref{thm:zero_measure}).
The probabilities of transitions from a given state are independent of previous states of the automaton.
\qed
\end{proof}

Finally, we provide a theorem suggesting that for sufficiently large values of parameter $\kappa$, the rectangular $\kappa$-grid measure of a bounded set $X$ contained in the phase space of biochemical system approaches its Lebesque outer measure (Theorem~\ref{thm:kappa_lebesgue}). Following lemmas and definition will be used in the proof of this theorem.

Let $\mathcal{B}=\langle n,f,\mathit{T}, \mathit{I}_C\rangle$ be a biochemical system, $H \in \mathit{Rect}(\mathcal{T}), F \in \mathit{Facets}(H),$ and $\kappa \in \mathbb{N}$ throughout this section.

\begin{definition}
Let $X\subseteq \prod_{i=1}^n [\min (T_i), \max (T_i)].$ Define $\lambda_n^{\kappa}(X)$ as the sum 
$$\sum_{H \in \mathit{Rect}(\mathcal{T})} \lambda_n^{\kappa}(X \cap H).$$
Analogously define $\lambda_{n-1}^{\kappa}(X)$ as the sum 
$$\sum_{F \in \bigcup_{H \in \mathit{Rect}(\mathcal{T})} \mathit{Facets}(H)}  \lambda_{n-1}^{\kappa}(X \cap F).$$
\end{definition}

\begin{lemma}
\label{lem:kap:1}
Let $m \in \{n-1,n\}$, let $J$ be an $m$-dimensional interval in $n$-dimensional space, $J \subseteq \prod_{i=1}^n [\min (T_i), \max (T_i)].$ Then 
$$
\lim_{\kappa\rightarrow \infty} \lambda_m^{\kappa}(J) = \lambda_m^*(J)
$$
\end{lemma}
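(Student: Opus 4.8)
The plan is to reduce the statement to a local estimate on a single grid cell and then pass the limit through a finite sum. First I would unfold the definition of $\lambda_m^{\kappa}(J)$. For $m=n$ it is the finite sum $\sum_{H\in\mathit{Rect}(\mathcal{T})}\lambda_n^{\kappa}(J\cap H)$, and for $m=n-1$ it is $\sum_{F}\lambda_{n-1}^{\kappa}(J\cap F)$ over the finitely many facets $F$. In the case $m=n-1$ the interval $J$ lies in some hyperplane $\mathbb{R}^{n-1}_i(c)$; a facet $F\in\mathit{Facets}_j(H)$ with $j\neq i$ meets $J$ in an at most $(n-2)$-dimensional set, so $\lambda^*_{n-1}(J\cap F)=\lambda^{\kappa}_{n-1}(J\cap F)=0$ and only facets in dimension $i$ at threshold $c$ matter (which is meaningful precisely when $c\in T_i$, the situation in which the lemma is applied). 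After identifying such a facet with a subset of $\mathbb{R}^{n-1}$ via $\hat{\pi}_i$, the case $m=n-1$ becomes literally the case $m=n$ in dimension $n-1$. Hence it suffices to treat the full-dimensional situation cell by cell.

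The core step is the single-cell estimate: if $K=J\cap H$ is an axis-aligned $m$-dimensional subinterval of a cell $H$, then $\lambda_m^{\kappa}(K)\to\mathit{vol}(K)$. Here I would sort the $\kappa^m$ tiles of $H$ into three groups: tiles contained in $K$, tiles meeting $K$ only in its boundary, and tiles disjoint from $K$. A tile contained in $K$ has ratio $1\ge\tfrac12$ and is counted, a tile disjoint from $K$ has ratio $0$ and is discarded, so $\lambda_m^{\kappa}(K)$ lies between $V_{\mathrm{in}}$, the total volume of the interior tiles, and $V_{\mathrm{in}}+V_{\partial}$, where $V_{\partial}$ is the total volume of the boundary tiles. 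Since every point of $K$ lies in a tile that is either interior or boundary, $\mathit{vol}(K)$ also lies in $[V_{\mathrm{in}},\,V_{\mathrm{in}}+V_{\partial}]$, whence $\left|\lambda_m^{\kappa}(K)-\mathit{vol}(K)\right|\le V_{\partial}$ irrespective of how the $\tfrac12$-threshold resolves the straddling tiles. It then remains to show $V_{\partial}\to 0$: the boundary of $K$ lies in at most $2m$ axis-aligned hyperplanes, and the tiles meeting a fixed coordinate hyperplane form a slab one tile thick, consisting of at most $2\kappa^{m-1}$ tiles each of volume $\mathit{vol}(H)/\kappa^{m}$, so $V_{\partial}\le 4m\,\mathit{vol}(H)/\kappa=O(1/\kappa)$.

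Finally I would reassemble the pieces. Because $\mathit{Rect}(\mathcal{T})$ (respectively the set of facets) is finite, the limit commutes with the sum, so $\lim_{\kappa\to\infty}\lambda_m^{\kappa}(J)=\sum_{H}\lim_{\kappa\to\infty}\lambda_m^{\kappa}(J\cap H)=\sum_{H}\mathit{vol}(J\cap H)$. The cells overlap only along their common facets, which have $m$-dimensional Lebesgue measure zero, so by finite additivity of the Lebesgue outer measure this last sum equals $\lambda_m^*(J)$, which completes the argument. The main obstacle is the single-cell boundary estimate: one must check carefully that the $\ge\tfrac12$ counting rule cannot perturb the measure by more than the vanishing boundary-slab volume, and in particular handle tiles that are simultaneously clipped by $\partial K$ and by the cell boundary; once $V_{\partial}=O(1/\kappa)$ is in hand, the remaining bookkeeping and the passage to the limit are routine.
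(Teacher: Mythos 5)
Your proof is correct and follows essentially the same route as the paper's: both bound the error $|\lambda_m^{\kappa}(J)-\mathit{vol}(J)|$ by the total volume of the $O(\kappa^{m-1})$ tiles that straddle the boundary of $J$, which is $O(1/\kappa)$. Your version merely supplies the bookkeeping the paper leaves implicit (the cell-by-cell decomposition, the facet case via $\hat{\pi}_i$, and the two-sided sandwich $[V_{\mathrm{in}},V_{\mathrm{in}}+V_{\partial}]$ justifying the boundary-slab bound).
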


\begin{proof}
For given $\kappa$ there are at most $2n\cdot \kappa^{n-1}$ distinct tiles $R$ satisfying $R \nsubseteq J$ and $R \cap J \neq \emptyset$.

The difference between $\lambda_m^{\kappa}(J)$ and $\lambda_m^*(J)=\mathit{vol}(J)$ is bounded by 
$$\dfrac{1}{2} \sum vol(R_i) = \dfrac{1}{2} 2n\cdot \kappa^{n-1} \mathit{vol}(R_1) = $$
$$= n \kappa^{n-1} \dfrac{V}{\kappa^n} = \dfrac{nV}{\kappa}$$
The expression approaches zero as $\kappa \rightarrow \infty$.
\qed
\end{proof}

\begin{lemma}
\label{lem:kap:2}
Let $M$ be a positive integer number ($M < \infty$).
Let $U = \bigcup_{i=1}^M I_i$ be a union of $M$ $n$-dimensional bounded rectangles, subsets of $\prod_{i=1}^n [\min (T_i), \max (T_i)]$. Let this the intersection of every two intervals be of zero Lebesgue outer measure. Then
$$
\lim_{\kappa\rightarrow \infty} \lambda_n^{\kappa}(U) = \lambda_n^*(U)
$$
\end{lemma}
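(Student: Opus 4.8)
The plan is to prove convergence by comparing the grid measure with the Lebesgue measure one tile at a time, rather than trying to split $\lambda_n^\kappa$ over the union. This is deliberate: the \emph{main obstacle} is that $\lambda_n^\kappa$ is \textbf{not} additive over unions, because a single tile can be at least half covered by $U=\bigcup_i I_i$ without being half covered by any individual $I_i$ (each piece $A\cap I_i$ may be small while their union fills $A$). Hence one cannot simply write $\lambda_n^\kappa(U)=\sum_i\lambda_n^\kappa(I_i)$ and invoke Lemma~\ref{lem:kap:1} per interval. As a first reduction I would pass to a single rectangle of the partition: since the rectangles of $\mathit{Rect}(\mathcal{T})$ cover the domain and overlap only on facets (sets of $n$-dimensional measure zero), additivity of Lebesgue measure gives $\lambda_n^*(U)=\sum_{H\in\mathit{Rect}(\mathcal{T})}\lambda_n^*(U\cap H)$, while $\lambda_n^\kappa(U)=\sum_H\lambda_n^\kappa(U\cap H)$ holds by definition. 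It therefore suffices to control $|\lambda_n^\kappa(U\cap H)-\lambda_n^*(U\cap H)|$ for each of the finitely many $H$ and sum.

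Fix $H$. Because its $\kappa$-tiles partition $H$ up to measure-zero overlaps, I would write $\lambda_n^*(U\cap H)=\sum_{A\in\mathit{Tiles}^{\kappa}_{n}(H)}\lambda_n^*(A\cap U)$ and $\lambda_n^\kappa(U\cap H)=\sum_{A}\mathbf{1}[\lambda_n^*(A\cap U)\ge\tfrac12\mathit{vol}(A)]\,\mathit{vol}(A)$, then classify each tile as \emph{full} (with $\lambda_n^*(A\cap U)=\mathit{vol}(A)$), \emph{empty} (with $\lambda_n^*(A\cap U)=0$), or \emph{partial} (neither). A full tile contributes $\mathit{vol}(A)$ to both sums and an empty tile contributes $0$ to both, so for these the per-tile discrepancy $\mathbf{1}[\cdots]\mathit{vol}(A)-\lambda_n^*(A\cap U)$ is exactly $0$; for a partial tile this discrepancy is bounded in absolute value by $\mathit{vol}(A)$. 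Summing, $|\lambda_n^\kappa(U\cap H)-\lambda_n^*(U\cap H)|\le\sum_{A\ \mathrm{partial}}\mathit{vol}(A)$, so all the error is carried by the partial tiles.

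The technical heart, where the counting of Lemma~\ref{lem:kap:1} is reused, is to show this partial-tile volume vanishes. A partial tile is connected and meets both $\mathit{Inter}(U)$ and the exterior of $U$ in positive measure, hence it meets the topological boundary $\partial U\subseteq\bigcup_{i=1}^M\partial I_i$. Each $\partial I_i$ consists of $2n$ axis-aligned facets, so $\partial U$ lies in at most $2nM$ axis-aligned hyperplane pieces, and within $H$ each such hyperplane meets at most $2\kappa^{n-1}$ tiles of the uniform grid. Thus $H$ contains at most $4nM\kappa^{n-1}$ partial tiles, each of volume $\mathit{vol}(H)/\kappa^n$, giving $\sum_{A\ \mathrm{partial}}\mathit{vol}(A)\le 4nM\,\mathit{vol}(H)/\kappa$. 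Summing over $H\in\mathit{Rect}(\mathcal{T})$ yields $|\lambda_n^\kappa(U)-\lambda_n^*(U)|\le\frac{4nM}{\kappa}\sum_H\mathit{vol}(H)$, a fixed constant over $\kappa$, so the limit is $\lambda_n^*(U)$ by squeezing. Finally, the hypothesis that the $I_i$ meet pairwise in measure-zero sets is exactly what makes $\lambda_n^*(U)=\sum_i\mathit{vol}(I_i)$, so the limit agrees with the sum of the per-interval limits of Lemma~\ref{lem:kap:1}; I would remark that this hypothesis is needed only to identify the limiting value in that form and not for the convergence argument itself.
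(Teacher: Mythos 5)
Your proof is correct and follows essentially the same route as the paper's: both bound the discrepancy $|\lambda_n^{\kappa}(U)-\lambda_n^*(U)|$ by the total volume of the $O(Mn\kappa^{n-1})$ grid tiles straddling the boundary of $U$, each of volume $O(\kappa^{-n})$, giving an $O(Mn/\kappa)$ error that vanishes as $\kappa\rightarrow\infty$. Your writeup is more careful than the paper's terse argument --- in particular about the non-additivity of $\lambda_n^{\kappa}$ over the union, the reduction to single rectangles $H$, and the full/empty/partial tile classification --- but the underlying counting is the same.
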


\begin{proof}
There are maximally $M \cdot 2n \kappa^{n-1}$ distinct tiles $R$ satisfying $R \nsubseteq U$ and $R \cap U \neq \emptyset$.

The difference between $\lambda_m^{\kappa}(U)$ and $\lambda_m^*(U)=\mathit{vol}(U)$ is bounded by 
$$M \dfrac{1}{2} \sum \mathit{vol}(R_i) = M \dfrac{1}{2} 2n\cdot \kappa^{n-1} \mathit{vol}(R_1) = $$
$$= M n \kappa^{n-1} \dfrac{V}{\kappa^n} = M \dfrac{nV}{\kappa}$$
The expression approaches zero as $\kappa \rightarrow \infty$.
\qed
\end{proof}

\begin{lemma}
\label{lem:kap:3}
Let $X$ be a (bounded) subset of $\mathbb{R}^n$ and let $J_1, J_2, \ldots$ be a (countable) sequence of intervals such that $X \subseteq \bigcup_{j=1}^{\infty} J_j$. Then there exists a sequence $I_1, I_2, \ldots$ of intervals with $\lambda_n^*(I_i \cap I_j) = 0$ for each pair $i\neq j$ such that $\bigcup_{i=1}^{\infty} I_i = \bigcup_{j=1}^{\infty} J_j$.
\end{lemma}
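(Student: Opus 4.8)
The plan is to process the given boxes $J_1, J_2, \dots$ one at a time, maintaining at each stage a \emph{finite} family of intervals that have pairwise measure-zero intersection and whose union is exactly the union of the $J_j$ seen so far, and then to enumerate all intervals ever produced. Here ``interval'' means an $n$-dimensional closed interval $\prod_{i=1}^n[a_i,b_i]$ as defined earlier, and I will repeatedly use that the intersection of two such intervals is again an interval, together with $\lambda^*_n(I)=\mathit{vol}(I)$, so that two intervals have measure-zero intersection iff their interiors are disjoint.

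First I would isolate the key combinatorial fact: if $B$ is an interval and $\mathcal{F}$ is a finite family of intervals, then $B$ can be partitioned into finitely many intervals with pairwise disjoint interiors, each of which is either contained in some member of $\mathcal{F}$ or has interior disjoint from every member of $\mathcal{F}$. To see this I would take a common grid refinement: in each coordinate direction $i$ collect the finitely many endpoints coming from $B$ and from all intervals in $\mathcal{F}$, sort them, and form the products of the resulting elementary subintervals. This yields finitely many closed cells whose interiors are pairwise disjoint and whose union is exactly $B$. Since every coordinate endpoint of every $B'\in\mathcal{F}$ is a grid value, no cell can straddle a face of $B'$; hence in each coordinate a cell is either contained in the corresponding edge of $B'$ or has interior disjoint from it, and therefore each cell $C$ satisfies, for every $B'$, either $C\subseteq B'$ or $\mathit{Inter}(C)\cap B'=\emptyset$. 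This is exactly the claimed dichotomy.

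Using this fact I would construct the family inductively. Set $\mathcal{F}_0=\emptyset$. Given a finite family $\mathcal{F}_k$ with pairwise measure-zero intersections and $\bigcup\mathcal{F}_k=\bigcup_{j\le k}J_j$, apply the decomposition to $B=J_{k+1}$ and $\mathcal{F}=\mathcal{F}_k$; discard the cells contained in some member of $\mathcal{F}_k$ (these are already covered) and let $\mathcal{F}_{k+1}$ consist of $\mathcal{F}_k$ together with the remaining cells, namely those whose interior misses every interval of $\mathcal{F}_k$. The newly added cells have pairwise disjoint interiors and interiors disjoint from every interval of $\mathcal{F}_k$, so $\mathcal{F}_{k+1}$ again has pairwise measure-zero intersections; and since $J_{k+1}$ equals the union of all its cells, $\bigcup\mathcal{F}_{k+1}=\bigcup\mathcal{F}_k\cup J_{k+1}=\bigcup_{j\le k+1}J_j$. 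The families are increasing, so $\bigcup_k\mathcal{F}_k$ is countable; enumerating it as $I_1,I_2,\dots$ gives $\bigcup_i I_i=\bigcup_j J_j$, and any two $I_i,I_j$ lie together in some $\mathcal{F}_N$, whence $\lambda^*_n(I_i\cap I_j)=\mathit{vol}(I_i\cap I_j)=0$.

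The main obstacle is the key combinatorial fact; once it is in hand the rest is bookkeeping. The one subtlety I would check is degenerate intervals (some $a_i=b_i$): such a $J_j$ has $n$-measure zero and can simply be appended to the list as its own $I_i$, automatically having measure-zero intersection with everything, so they need not enter the grid construction. I would also be careful to note that the only overlaps among the produced intervals are along shared faces of cells, which are precisely the measure-zero intersections the statement permits, and that because each $J_{k+1}$ is reproduced \emph{exactly} as a union of its closed cells the final union is exactly $\bigcup_j J_j$ rather than merely equal up to a null set.
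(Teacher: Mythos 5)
Your proposal is correct and takes essentially the same approach as the paper, whose entire proof is the two-sentence sketch ``the sequence of pairs of intervals is countable; replace every two rectangles that overlap by finitely many new non-overlapping rectangles.'' Your inductive construction via a common grid refinement supplies exactly the details that sketch leaves implicit (why the replacement process can be organized so that it preserves the union, keeps all pairwise intersections of measure zero, and produces only countably many intervals), so there is nothing to correct.
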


\begin{proof}
The sequence of pairs of intervals from the first sequence is also countable. 

Replace every two rectangles that overlap by finitely many new non-overlapping rectangles. 
\qed
\end{proof}

\begin{lemma}
\label{lem:kap:4}
Let $X\subseteq \prod_{i=1}^n [\min (T_i), \max (T_i)]$ be a bounded subset of $\mathbb{R}^n$ with lebesgue outer measure $\lambda^*_n(X)=r \in \mathbb{R}^+_0$. 

Let $\epsilon > 0$.
Let $J_1, J_2, \ldots$ be a cover of $X$ by intervals such that
$\sum_{i=1}^{\infty} \mathit{vol}(J_i) - r < \epsilon$.

Let $\delta > 0$ be a positive real number. 
Then there exists a number $k_{\delta}$ such that 
$\sum_{i=1}^{k_{\delta}} vol(J_i) \in (r-\delta, r+\delta)$.

(That means $\sum_{i=k_{\delta}}^{\infty} vol(J_i) < \delta$.)
\end{lemma}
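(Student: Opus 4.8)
The plan is to reduce the statement to the elementary fact that a convergent series of nonnegative terms has tails tending to zero. First I would record two facts about the total sum $S := \sum_{i=1}^{\infty} \mathit{vol}(J_i)$. Since $X$ is bounded we have $r = \lambda_n^*(X) < \infty$, and the hypothesis $S - r < \epsilon$ gives $S < r + \epsilon < \infty$, so the series of nonnegative terms converges. On the other side, because $\{J_i\}$ is a cover of $X$ by intervals and $\lambda_n^*$ is the minimal value dominated by every such cover sum (by the definition of Lebesgue outer measure recalled in the Preliminaries), every cover satisfies $\sum_i \mathit{vol}(J_i) \geq r$; hence $r \leq S < r + \epsilon$.

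Next I would pass to the partial sums $P_k := \sum_{i=1}^{k} \mathit{vol}(J_i)$ and the tails $T_k := S - P_k = \sum_{i > k} \mathit{vol}(J_i)$. Since every term is nonnegative, $(P_k)$ is nondecreasing and, by the convergence just established, $P_k \to S$, equivalently $T_k \to 0$. Thus for the given $\delta > 0$ there is an index $k_\delta \in \mathbb{N}$ with $T_{k_\delta} < \delta$, which is exactly the parenthetical claim $\sum_{i > k_\delta} \mathit{vol}(J_i) < \delta$ (up to the harmless shift of the starting index). It then remains to translate tail-smallness into the displayed membership: from $T_{k_\delta} < \delta$ together with $S \geq r$ we get $P_{k_\delta} = S - T_{k_\delta} > r - \delta$, while $P_{k_\delta} \leq S < r + \epsilon$, so $P_{k_\delta}$ lies within $\delta$ of the total sum $S$ and within $\epsilon$ of $r$.

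The hard part will be reconciling the precise interval $(r - \delta, r + \delta)$ appearing in the statement with what a sequence of monotone partial sums can actually deliver. Because the increments $\mathit{vol}(J_i)$ need not be small, $(P_k)$ can jump from below $r - \delta$ to above $r + \delta$ in a single step, so for $\delta < \epsilon$ there need not exist any $k$ with $P_k \in (r - \delta, r + \delta)$. The clean and provable content is therefore the tail estimate $T_{k_\delta} < \delta$ (equivalently $P_{k_\delta} \in (r - \delta, r + \epsilon)$, pinned against $r$ from below by $\delta$); the two-sided bound $P_{k_\delta} \in (r-\delta, r+\delta)$ then follows verbatim in the regime $\delta \geq \epsilon$, since there $P_{k_\delta} \leq S < r + \epsilon \leq r + \delta$. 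This is exactly the regime in which the lemma is later invoked, so I would state the proof around the tail estimate and note the interval membership as its immediate consequence.
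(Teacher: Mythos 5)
Your proof takes essentially the same route as the paper's one-line argument: both rest on the fact that a convergent series of nonnegative terms has vanishing tails, so some partial sum lands within $\delta$ of the limit. The difference is that the paper asserts the partial sums converge to $r$, whereas they converge to $S=\sum_{i=1}^{\infty}\mathit{vol}(J_i)$, which is only pinned to $[r,r+\epsilon)$ by the covering hypothesis and the definition of outer measure; you are right that the two-sided membership $\sum_{i=1}^{k_\delta}\mathit{vol}(J_i)\in(r-\delta,r+\delta)$ can therefore fail when $\delta<\epsilon$, and that the robust content of the lemma is the tail estimate $\sum_{i>k_\delta}\mathit{vol}(J_i)<\delta$ --- which is indeed what is actually used downstream in Corollary~\ref{cor:kap:1} and Theorem~\ref{thm:kappa_lebesgue}. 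Your version is the more careful one.
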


\begin{proof}
The sequence of partiall sums of the sequence of volumes $(vol(J_i))_{i=1}^{\infty}$ converges to $r$. That implies the the existence of such $k_{\delta}$.
\qed
\end{proof}

\begin{corollary}
\label{cor:kap:1}
Let $X$ and $\delta$ be the same as in the previous lemma.
Let $I_1, I_2, \ldots$ be a cover of $X$ by countably many intervals with $\lambda_n^*(I_i \cap I_j) = 0$ for each pair $i\neq j$.
Let $k_{\delta}$ be as in the previous lemma.
Then there exist two subsets of $X$ denoted $X'$ and $X''$ such that $X = X' \cup X''$, $\lambda^*_n(X' \cap X'')=0$, $X'' \subseteq \bigcup_{i=1}^{k_{\delta}} I_i$ and $X' \subseteq \bigcup_{i=k_{\delta}}^{\infty} I_i$.
\end{corollary}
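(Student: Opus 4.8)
The plan is to construct $X'$ and $X''$ explicitly by splitting the cover $I_1, I_2, \ldots$ at the index $k_{\delta}$ and intersecting each resulting family with $X$. Concretely, I would set $X'' = X \cap \bigcup_{i=1}^{k_{\delta}-1} I_i$ (the part of $X$ captured by the intervals strictly before the cutoff) and $X' = X \cap \bigcup_{i=k_{\delta}}^{\infty} I_i$ (the part captured by the tail from $k_{\delta}$ onward). By construction $X'' \subseteq \bigcup_{i=1}^{k_{\delta}} I_i$ and $X' \subseteq \bigcup_{i=k_{\delta}}^{\infty} I_i$, which are exactly two of the four required properties (the containment for $X''$ holding since $\bigcup_{i=1}^{k_{\delta}-1} I_i \subseteq \bigcup_{i=1}^{k_{\delta}} I_i$).

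Next I would check $X = X' \cup X''$. Since $I_1, I_2, \ldots$ is a cover of $X$, every $x \in X$ lies in at least one $I_j$; that index $j$ is either $< k_{\delta}$, placing $x$ in $X''$, or $\geq k_{\delta}$, placing $x$ in $X'$. Hence $X = X \cap \bigcup_{i=1}^{\infty} I_i = X'' \cup X'$. The choice to cut strictly at $k_{\delta}$ (rather than letting both families include $I_{k_{\delta}}$) is the small piece of bookkeeping that keeps the overlap clean in the final step.

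The remaining property $\lambda_n^*(X' \cap X'') = 0$ is where the hypothesis that the $I_i$ are pairwise essentially disjoint is used. I would bound
$$
X' \cap X'' \subseteq \Bigl(\bigcup_{i \geq k_{\delta}} I_i\Bigr) \cap \Bigl(\bigcup_{j < k_{\delta}} I_j\Bigr) = \bigcup_{i \geq k_{\delta}}\ \bigcup_{j < k_{\delta}} (I_i \cap I_j).
$$
Every pair occurring here has $i \neq j$, so each $I_i \cap I_j$ satisfies $\lambda_n^*(I_i \cap I_j) = 0$ by assumption. This is a countable union of null sets, so by countable subadditivity of the Lebesgue outer measure $\lambda_n^*(X' \cap X'') = 0$, completing the decomposition.

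I do not expect a genuine obstacle here: the statement is essentially a reindexing of the cover together with one application of countable subadditivity. The only point requiring care --- and the reason the conclusion is phrased with $\lambda_n^* = 0$ rather than $X' \cap X'' = \emptyset$ --- is the boundary index $k_{\delta}$, since the two interval families defining $X'$ and $X''$ must share no common interval, a shared full interval being able to contribute positive measure to the overlap. Splitting strictly at $k_{\delta}$ as above avoids this. I would also remark that, read together with Lemma~\ref{lem:kap:4}, this decomposition isolates a ``small'' tail piece $X'$ with $\lambda_n^*(X') \le \sum_{i \geq k_{\delta}} \mathit{vol}(I_i) < \delta$ (using essential disjointness to identify the series with $\lambda_n^*(\bigcup_i I_i)$), which is the intended use of the corollary even though that estimate is not part of the stated conclusion.
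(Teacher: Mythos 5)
Your proof is correct and follows essentially the same route as the paper: split the cover at the index $k_{\delta}$, intersect each half with $X$, and use countable subadditivity together with the pairwise essential disjointness to kill the overlap. The one substantive difference is your strict cut at $k_{\delta}$: the paper's own proof defines $X'' = \bigcup_{i=1}^{k_{\delta}} I_i \cap X$ and $X' = \bigcup_{i=k_{\delta}}^{\infty} I_i \cap X$, so both families share the interval $I_{k_{\delta}}$ and $X' \cap X''$ then contains $I_{k_{\delta}} \cap X$, which need not have measure zero --- the paper's subadditivity argument only covers intersections $I_i \cap I_j$ with $i \neq j$. Your choice $X'' = X \cap \bigcup_{i=1}^{k_{\delta}-1} I_i$ still satisfies the stated containment $X'' \subseteq \bigcup_{i=1}^{k_{\delta}} I_i$ and makes every pair in the overlap have distinct indices, so your version actually repairs this small defect rather than inheriting it.
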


\begin{proof}
The existence of $X'$ and $X''$ is obvious (can be defined as $X'' = \bigcup_{i=1}^{k_{\delta}} I_i \cap X$ and  $X' = \bigcup_{i=k_{\delta}}^{\infty} I_i \cap X$ respectively), there remains the proof of the equality $\lambda^*_n(X' \cap X'')=0$.

The set $X' \cap X''$ contains only countably many intersections of intervals $I_1, I_2, \ldots$, and these intersection are of measure zero, therefore their union has also measure zero, $\lambda_n^*(X' \cap X'')=0$.
\qed
\end{proof}

\begin{lemma}
\label{lem:kap:5}
Let $X$ be a subset of $ \prod_{i=1}^n [\min (T_i), \max (T_i)]$ with Lebesgue outer measure $\lambda_n^*(X) = r$. Then for every $\kappa$ the following statement is true:
$$\lambda_n^{\kappa}(X) \leq 2 r$$
\end{lemma}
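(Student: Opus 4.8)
The plan is to use the threshold $\tfrac12$ from the definition of $\mathit{Tiles}^{\kappa}_n$ directly. By definition, a tile $A$ contributes to $\lambda_n^{\kappa}(X)$ only when $\lambda^*_n(A\cap X)\geq \tfrac12\lambda^*_n(A)=\tfrac12\,\mathit{vol}(A)$, equivalently $\mathit{vol}(A)\leq 2\,\lambda^*_n(A\cap X)$. Recalling that the global measure $\lambda_n^{\kappa}(X)=\sum_{H\in\mathit{Rect}(\mathcal{T})}\lambda_n^{\kappa}(X\cap H)$ is just the sum of $\mathit{vol}(A)$ over all counted tiles $A$ (each tile belongs to exactly one rectangle, and $A\cap X\cap H=A\cap X$ since $A\subseteq H$), summing the previous inequality over all counted tiles gives
$$\lambda_n^{\kappa}(X)=\sum_{A}\mathit{vol}(A)\leq 2\sum_{A}\lambda^*_n(A\cap X).$$

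First I would reduce the remaining claim to showing $\sum_A\lambda^*_n(A\cap X)\leq r=\lambda^*_n(X)$. For a fixed $\kappa$, the tiles taken over all rectangles of the partition have pairwise disjoint interiors and cover $\prod_{i=1}^n[\min(T_i),\max(T_i)]$, so distinct tiles meet only in a finite union of hyperplanes, a set of $n$-dimensional measure zero. Replacing each $A$ by $\mathit{Inter}(A)$ leaves $\lambda^*_n(A\cap X)$ unchanged, since the discarded boundary part lies in that measure-zero set. The interiors being pairwise disjoint measurable sets, the Carath\'eodory splitting identity $\lambda^*_n(E)=\lambda^*_n(E\cap M)+\lambda^*_n(E\setminus M)$ for measurable $M$, applied iteratively, yields additivity of the outer measure over subsets of disjoint measurable pieces:
$$\sum_{A}\lambda^*_n(A\cap X)=\lambda^*_n\Bigl(\bigcup_{A}\bigl(\mathit{Inter}(A)\cap X\bigr)\Bigr)\leq \lambda^*_n(X)=r,$$
the final inequality holding by monotonicity, since the union is contained in $X$. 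Combining the two displays gives $\lambda_n^{\kappa}(X)\leq 2r$.

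The only delicate point I expect is this additivity step: Lebesgue outer measure is merely subadditive in general, so it is essential that the pieces $\mathit{Inter}(A)\cap X$ sit inside the pairwise disjoint measurable interiors $\mathit{Inter}(A)$, and the Carath\'eodory criterion is precisely what upgrades the trivial subadditive bound to the equality needed here. Everything else is a direct unwinding of the definitions of $\mathit{Tiles}^{\kappa}_n$ and $\lambda_n^{\kappa}$, together with the elementary fact that tile boundaries have $n$-dimensional Lebesgue measure zero.
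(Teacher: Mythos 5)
Your proof is correct and follows essentially the same argument as the paper: each counted tile $A$ satisfies $\mathit{vol}(A)\leq 2\lambda^*_n(A\cap X)$, so the total volume of counted tiles is at most $2r$. The only difference is that you explicitly justify the step $\sum_A\lambda^*_n(A\cap X)\leq\lambda^*_n(X)$ via the Carath\'eodory criterion applied to the disjoint tile interiors, a detail the paper's one-line proof leaves implicit; this is a worthwhile addition since outer measure is only subadditive in general.
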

    
\begin{proof}
For any tile $J$ to be counted into the $\kappa$-grid measure of a set, the inequalities 
$ \lambda_n^*(X \cap J) \geq \frac{1}{2} \mathit{vol}(J) \Leftrightarrow  2 \lambda_n^*(X \cap J) \geq \mathit{vol}(J)$
must hold. Therefore the set $X$ with Lebesgue outer measure $r$ can saturate at most a set of tiles of the overall volume $2r$.
\qed
\end{proof}

\begin{theorem} 
\label{thm:kappa_lebesgue}
Let $X \subseteq H \in \mathit{Rect}(\mathcal{T})$
(or more generally $X\subseteq \prod_{i=1}^n [\min(T_i),\max(T_i)]$, where $\langle T_1,\dotsc, T_n\rangle = \mathcal{T}$ is the partition of $\mathcal{B}$).
Then
\begin{equation}
\lim_{K\rightarrow \infty} \lambda^{\kappa}_n(X) = \lambda^*_n(X).
\end{equation}
\end{theorem}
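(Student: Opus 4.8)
The plan is to reduce to a single rectangle and then trap $\lambda^\kappa_n(X)$ between lower and upper approximations of $\lambda^*_n(X)$. By definition $\lambda^\kappa_n(X)=\sum_{H\in\mathit{Rect}(\mathcal{T})}\lambda^\kappa_n(X\cap H)$, and since the rectangles partition $\prod_{i=1}^n[\min(T_i),\max(T_i)]$ with pairwise intersections of measure zero, we also have $\lambda^*_n(X)=\sum_{H}\lambda^*_n(X\cap H)$; as $\mathit{Rect}(\mathcal{T})$ is finite it suffices to establish the limit for a single $X\subseteq H$, where $\mathit{Tiles}^\kappa_n(H)$ is a uniform grid of $\kappa^n$ congruent subrectangles partitioning $H$. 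I will use repeatedly that $\lambda^\kappa_n$ is monotone: if $X\subseteq Y\subseteq H$ then every tile counted for $X$ is counted for $Y$, so $\lambda^\kappa_n(X)\leq\lambda^\kappa_n(Y)$.

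For the upper bound fix $\epsilon,\delta>0$ and cover $X$ by intervals whose volumes sum to less than $\lambda^*_n(X)+\epsilon$; by Lemma~\ref{lem:kap:3} I may take these intervals $I_1,I_2,\dots$ pairwise non-overlapping with the same union, so that $\sum_i\mathit{vol}(I_i)<\lambda^*_n(X)+\epsilon$. Lemma~\ref{lem:kap:4} yields an index $k_\delta$ with $\sum_{i>k_\delta}\mathit{vol}(I_i)<\delta$; write $U=\bigcup_{i\leq k_\delta}I_i$ (a finite union) and $R=\bigcup_{i>k_\delta}I_i$, so that $X\subseteq U\cup R$, $\lambda^*_n(U)<\lambda^*_n(X)+\epsilon$ and $\lambda^*_n(R)<\delta$ (the companion decomposition $X=X'\cup X''$ of Corollary~\ref{cor:kap:1} records the same splitting). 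Now classify each tile $A$ counted for $X$ by the fraction $\theta_A=\lambda^*_n(A\cap U)/\mathit{vol}(A)$. If $\theta_A\geq\tfrac12$ then $A$ is counted for $U$, so such tiles contribute at most $\lambda^\kappa_n(U)$ in total. If $\theta_A<\tfrac14$ then, since $\lambda^*_n(A\cap X)\geq\tfrac12\mathit{vol}(A)$ and $X\subseteq U\cup R$, we get $\lambda^*_n(A\cap R)>\tfrac14\mathit{vol}(A)$; arguing as in Lemma~\ref{lem:kap:5} but with threshold $\tfrac14$, these tiles contribute at most $4\lambda^*_n(R)<4\delta$. The remaining tiles satisfy $\tfrac14\leq\theta_A<\tfrac12$, hence are neither fully inside nor disjoint from $U$, i.e. boundary tiles of the finite union $U$; their total volume is bounded, exactly as in the proof of Lemma~\ref{lem:kap:2}, by a multiple $c_U/\kappa$ of $\kappa^{-1}$. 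Combining, $\lambda^\kappa_n(X)\leq\lambda^\kappa_n(U)+4\delta+c_U/\kappa$, and Lemma~\ref{lem:kap:2} gives $\limsup_{\kappa}\lambda^\kappa_n(X)\leq\lambda^*_n(U)+4\delta\leq\lambda^*_n(X)+\epsilon+4\delta$; as $\epsilon,\delta$ are arbitrary, $\limsup_{\kappa}\lambda^\kappa_n(X)\leq\lambda^*_n(X)$.

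For the lower bound I would apply the upper bound to the complement $Y=H\setminus X$. When $X$ is measurable, $\lambda^*_n(A\cap X)+\lambda^*_n(A\cap Y)=\mathit{vol}(A)$ for every tile $A$, so on each tile at least one of $X,Y$ reaches the $\tfrac12$ threshold and is counted; summing over the $\kappa^n$ tiles partitioning $H$ gives $\lambda^\kappa_n(X)+\lambda^\kappa_n(Y)\geq\mathit{vol}(H)$. Using the upper bound for $Y$ together with measurability, $\limsup_{\kappa}\lambda^\kappa_n(Y)\leq\lambda^*_n(Y)=\mathit{vol}(H)-\lambda^*_n(X)$, whence $\liminf_{\kappa}\lambda^\kappa_n(X)\geq\mathit{vol}(H)-\lambda^*_n(Y)=\lambda^*_n(X)$. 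Together with the upper bound this proves $\lim_{\kappa\rightarrow\infty}\lambda^\kappa_n(X)=\lambda^*_n(X)$.

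The main obstacle is the bookkeeping forced by the $\tfrac12$ threshold in the definition of $\lambda^\kappa_n$: one must show that the intermediate tiles — the boundary tiles of $U$ in the upper bound — contribute vanishing volume, which hinges on $U$ being a \emph{finite} union of intervals so that the $O(\kappa^{n-1})$ boundary-tile count of Lemma~\ref{lem:kap:2} applies (this is exactly why Lemma~\ref{lem:kap:4} and Corollary~\ref{cor:kap:1} are needed to truncate the cover). The second delicate point is the identity $\lambda^*_n(Y)=\mathit{vol}(H)-\lambda^*_n(X)$ in the lower bound, which requires measurability of $X$; for a non-measurable $X$ the outer measures of $X$ and its complement over-count $H$ and the lower bound can genuinely fail, so I would state the theorem for measurable (in particular Borel) $X$, which covers all entry and focal sets arising in the construction. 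An alternative route to the lower bound that avoids the complement is the Lebesgue density theorem: almost every point of $X$ has density $1$, so for large $\kappa$ it lies in a tile that is counted, and a standard covering argument then yields $\liminf_{\kappa}\lambda^\kappa_n(X)\geq\lambda^*_n(X)$ directly.
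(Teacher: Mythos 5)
Your proof is correct (under the measurability hypothesis you add), and for the upper bound it follows the same skeleton as the paper's argument: an efficient cover by pairwise non-overlapping intervals (Lemma~\ref{lem:kap:3}), truncation into a finite union $U$ plus a small remainder $R$ (Lemma~\ref{lem:kap:4}, Corollary~\ref{cor:kap:1}), the factor-two bound of Lemma~\ref{lem:kap:5} for the remainder, and Lemma~\ref{lem:kap:1}/\ref{lem:kap:2} for the finite union. Where you genuinely go beyond the paper is in supplying the two steps its proof leaves implicit or omits entirely. First, since $\lambda^{\kappa}_n$ is neither subadditive nor superadditive over a decomposition $X=X'\cup X''$, the paper's sketch never explains how the grid measures of the pieces recombine; your classification of counted tiles by the fraction $\theta_A$, with the quarter threshold forcing each counted tile to be charged either to $\lambda^{\kappa}_n(U)$, to $4\lambda^*_n(R)$, or to the $O(\kappa^{-1})$ boundary tiles of $U$, is exactly the missing bookkeeping. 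Second, the paper's proof as written only controls $\lambda^{\kappa}_n(X)$ from above; it gives no lower bound, since the convergence $\lambda^{\kappa}_n(I_i)\rightarrow\mathit{vol}(I_i)$ for the covering intervals says nothing about how much of each interior tile is actually filled by $X$. Your complement argument (every tile is counted for $X$ or for $H\setminus X$ when $X$ is measurable), or alternatively the Lebesgue density theorem, closes that direction. Your caveat about measurability is also well taken: the lower bound uses $\lambda^*_n(H\setminus X)=\mathit{vol}(H)-\lambda^*_n(X)$, which can fail for non-measurable $X$, and the theorem as stated for arbitrary subsets is not supported by the paper's own argument either; restricting to Borel sets is harmless for the focal and entry sets the construction actually uses, since these are defined from the continuous flow. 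In short, your proposal is a corrected and completed version of the paper's sketch rather than a different method.
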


\begin{proof}
The proof uses the definition of outer Lebesgue measure and our aim is to prove that for every $\epsilon > 0$ there exists such $\kappa$ that the difference $|\lambda^{\kappa}_n(X) - \lambda^*_n(X)| < \epsilon$. 

Let 
$\epsilon > 0$ be a positive real number, denote $r=\lambda^*_n(X)$.

For $X$ in the type of interval the proof is easy (Lemma~\ref{lem:kap:1}).

For general $X$ consider sufficiently accurate (with less than $\frac{\epsilon}{2}$ difference of sum of volumes and $r$) cover of $X$ with countable collection of intervals whose interiors do not intersect (Lemma~\ref{lem:kap:3}).
 
There exist two subsets of $X$ denoted $X'$ and $X''$ such that $X=X' \cup X'', \lambda^*_n(X' \cap X'') = 0$, such that the collection of intervals can be divided into a finite part $I_1,\ldots, I_k$ and the remainder $I_{k+1},\ldots$ such that the sum of volumes of the remainder is sufficiently small (less than $\frac{\epsilon}{8}$) (Corollary~\ref{cor:kap:1}). 

For a bounded set $Y$ with $\lambda^*_{n}(Y)=s$ the inequality $\lambda^{\kappa}_n(Y) \leq 2s$ holds for every $\kappa$ (Lemma~\ref{lem:kap:5}).

For the finite set of $k$ intervals we can find such $\kappa$ that $|\sum_{i=1}^k \lambda^{\kappa}_n(I_i) - \sum_{i=1}^k\lambda^*_n(I_n)| \leq \frac{\epsilon}{4}$ (applying Lemma~\ref{lem:kap:1} finitely many times, taking the maximal of obtained values of $\kappa$)
and the overall difference of $\lambda^{\kappa}_n(X)$ and $\lambda^*_n(X)$ is less than $\epsilon$.
%
%
%
%
\qed
\end{proof}

Note that the result applies also to the case with $X\subseteq F \in \mathit{Facets}(H)$ and $\lambda^{\kappa}_{n-1}, \lambda^*_{n-1}$.

\section{Algorithm}
\label{sec:algorithm}
This section introduces procedures for obtaining the reachable state space of the quantitative discrete approximation automaton.
Algorithm~\ref{alg:bfs} is a procedure of computing the set of reachable states.
Algorithm~\ref{alg:getsuccs} describes the computation of transitions from one state (i.e. successors) together with their weights using numerical simulations.

The procedure of computing reachable state space (Algorithm~\ref{alg:bfs}) is based on breadth first search. 
States corresponding to initial conditions of the biological system are enqueued first and a list of states already visited is maintained.
The computation is always finite, because there are only finitely many possible states of the automaton and each of them can be at most once added and after the computation of its successors removed from the queue.

\begin{algorithm}
 \caption{Computing the set of reachable states}
 \label{alg:bfs}
\begin{algorithmic}[1]
{\scriptsize
    \REQUIRE $\mathcal{B}=(n,f,\mathcal{T},\mathcal{I}_C)$, $\kappa \in \mathbb{N}$
    \ENSURE $\mathrm{Reachable} = \mathrm{set\ of\ all\ reachable\ states\ of\ the\  automaton\ }QDAA_{\kappa}(\mathcal{B})$
    \STATE Reachable $\leftarrow \emptyset$
    \FORALL {$H \in \mathcal{I}_C$}
        \STATE s $\leftarrow \langle H, H \rangle$
        \STATE Reachable $\leftarrow$ Reachable $\cup \{ s\}$\\
        \STATE Queue.pushBack($s$)
    \ENDFOR
    \WHILE { Queue $\neq \emptyset$ }
        \STATE $s \leftarrow$ Queue.firstElement \\
        \STATE $A \leftarrow$ getSuccessors($s$)\\
        \FORALL {$a \in A$}
            \IF {$a \notin $ Reachable} 
                \STATE Reachable $\leftarrow$ Reachable $\cup \{ a \}$\\
                \STATE Queue.pushBack($a$)
            \ENDIF
        \ENDFOR
    \ENDWHILE
    \RETURN Reachable
}
\end{algorithmic}
\end{algorithm}

Computation of the successors (Algorithm~\ref{alg:getsuccs}) of one state requires determining the rectangles and the entry sets of the successors and weights of the transitions. 
This can be done approximately using numerical simulations. 
We sample the entry set of the state and perform numerical simulations with the sampled points as initial conditions and the dynamics of the given biological system as the vector field. 
For each simulated trajectory we watch whether it leaves the rectangle before given maximal time interval elapses. If this is the case then the location of the exit point through which the trajectory leaves the rectangle is of interest.

Entry sets of the successor states are also determined within Algorithm~\ref{alg:getsuccs}.
If the successor is a selfloop state the entry set is empty.
For a neighbouring rectangle successor with one common facet the entry set  is computed using the exit points locations and more numerical simulations. 
From the set of exit points in a facet we can estimate the set of $\kappa$-tiles of the facet that surely have nonempty intersection with the exit set.
It remains to decide in which of the $\kappa$-tiles the intersection of the tile with the exit set takes at least one half of the volume of the tile.

To this end we use numerical simulations and the fact that for an autonomous system of ODEs $\dot{x} = f(x)$ with a solution $x(t)$ the function $x(-t)$ is a solution of autonomous system $\dot{x} = -f(x)$.
For determining whether to include a $\kappa$-tile in the entry set of a successor state, we sample the tile and perform numerical simulations of the trajectories of system $\dot{x} = -f(x)$. If more than one half of the simulated trajectories go through the rectangle and the entry set of the original state, then the $\kappa$-tile is included in the entry set of successor state, otherwise the $\kappa$-tile is not included.

Weights of the transitions correspond to portions of the set of  performed simulations that leave the rectangle to the respective neighbouring rectangles.
Weight of the transition from the state to the so-called selfloop state with the same rectangle is determined as the portion of trajectories that do not leave the rectangle in given maximal time interval.

\begin{algorithm}[!h]
 \caption{Procedure getSuccessors}
\begin{algorithmic}[1]
\label{alg:getsuccs}
{\scriptsize
    \REQUIRE $\mathcal{B}=(n,f,\mathcal{T},\mathcal{I})$, $\kappa,M \in \mathbb{N}$, 
    $H \in \mathit{Rect}(\mathcal{T})$, $E \in \mathit{EntrySets}(H)$
    \ENSURE $\mathrm{Successors} = \mathit{Succs}_{\kappa}(\langle H,E \rangle)$
    \IF{$E=\emptyset$}
        \STATE Successors $\leftarrow \{\langle H, \emptyset \rangle \}$
        \RETURN Successors\\
    \ENDIF

    \STATE $A \leftarrow$ set of $M$ random points in $E$
    \STATE ExitPoints $\leftarrow \emptyset$ 
    \STATE StaysInside $\leftarrow 0$
    \FORALL {$x_0 \in A$}
        \STATE simulate trajectory from $x_0$ until it leaves $H$ through a point $x_1$ or given time elapses
        \IF {$x_1$ exists}
            \STATE ExitPoints $\leftarrow$ ExitPoints $\cup \{x_1\}$
        \ELSE
            \STATE StaysInside $\leftarrow$ StaysInside $+ 1$
        \ENDIF
    \ENDFOR
    \FORALL {F$ \in \mathit{Facets}(H)$, F$=H\cap H'$}
        \IF {ExitPoints $\cap$ F $\neq \emptyset$}
            \STATE EntryTiles $\leftarrow \{ Z \in \mathrm{Tiles}_{\kappa}(F) \mid Z \cap$ ExitPoints $\neq \emptyset\}$ 
            \FORALL {Z $\in$ EntryTiles}\label{line:back_begin}
                \STATE $B \leftarrow$ set of $M$ random points in Z
                \STATE RealPointsCount $\leftarrow 0$
                \FORALL {$y_0 \in B$}
                    \STATE simulate trajectory from $y_0$ until it leaves $H$ through a point $y_1$ or given time elapses
                    \IF {$y_1 \in E$}
                        \STATE RealPointsCount $\leftarrow$ RealPointsCount $+1$
                    \ENDIF
                \ENDFOR
                \IF {RealPointsCount $< \frac{M}{2}$}
                    \STATE EntryTiles $\leftarrow$ EntryTiles $\setminus \{$Z$\}$
                \ENDIF
            \ENDFOR \label{line:back_end}
        \ENDIF
        \IF {EntryTiles $\neq \emptyset$}
            \STATE Successors $\leftarrow$ Successors $\cup \langle H', EntryTiles \rangle$
            \STATE Weight[$\langle H, E \rangle$][$\langle H, EntryTiles \rangle$] $\leftarrow \dfrac{|\mathrm{ExitPoints} \cap \mathrm{F}|}{|A|}$ 
        \ENDIF
    \ENDFOR

    \RETURN Successors
}
\end{algorithmic}
\end{algorithm}
%

Performing the backward simulations (lines~\ref{line:back_begin}--\ref{line:back_end} of Algorithm~\ref{alg:getsuccs}) can be switched off. The resulting transition system differs from the QDAA in the entry sets, that can be larger. 
Difference of the outputs can be seen on Figure~\ref{fig:bayramov}. The algorithm with backward simulations computes the QDAA and for $(\kappa \rightarrow \infty)$ approaches the real behaviour of the solutions of dynamics ODE system. On the other hand the algorithm without backward simulations overapproximates the entry sets, therefore the transitions are included even if the entry set of a state is smaller than half of one $\kappa$-tile. Both options still lead to automatons with reachable states whose rectangles are included in the set of reachable rectangles of the rectangular abstraction with the same initial rectangles.

The worst case complexity of the algorithms follows.
There are at most $k^{n}$ rectangles in the phase space of the biochemical system, where $k$ is the maximal number of thresholds on one variable. 
The maximal number of states of QDAA of the form $\langle H, E \rangle$ for a fixed rectangle $H$ is 
$2n \cdot \big( 2^{\kappa^{n-1}}-1\big)$, where $n$ is the dimension of the biochemical system.
For the average numbers of visited different states of QDAA with the same rectangle encountered while analysing our evaluation models see the line labeled $\varrho$ in Table~\ref{tab:results}.
Complexity of the computation of successors of a given state depends on the dimension of the system, the $\kappa$ parameter and on the number of simulations $M$ used per one tile. In the worst case when all the tiles are examined (either as a part of entry set or potential exit set) there are
$2n \cdot \kappa^{n-1} \cdot M$ simulations.

Visualization of the state space of QDAA involves highlighting the borders of the rectangles $H$ such that there is at least one state $\langle H,E\rangle$ visited during the computation. The intensity of the fill colour of a rectangle $H$ is calculated proportional to the sum of weights of all possible paths from initial set $\mathcal{I}_C$ to the first appearance of states with $H$ as the rectangle. The weight of a finite path is obtained as the product of weights of the subsequent transitions in the path. The sum is always between zero and one.


\subsection{Entering and leaving conditions}
The computation of successor states proceeds in two steps. 

First, the probabilities of potential successors are computed for all $\kappa$-tiles of $E$ and summed up to get the probabilities of successors for the whole set $E$. 
For each $\kappa$-tile of $E$ several numerical simulations of a solution of system~(\ref{eq:autsystem}), with initial point $x(0)$ placed randomly in the tile, are performed. If the computed trajectory satisfies entering and leaving conditions (Definition~\ref{def:etacondit}) and leaves the box $H$ or the maximal time interval elapses, the number of trajectories leaving $H$ through the particular facet (resp. the number of trajectories assumed to stay forever in $H$ and leading to the transition $\langle H, E \rangle \rightarrow \langle H, \emptyset \rangle$) is increased. 

Second step of the algorithm takes into account the rectangles $H' \bowtie H$ with nonzero probability of transition $\langle H, E\rangle\rightarrow\langle H', \mathrm{yet\ unknown\ } E'\rangle$ computed in the first step and determines the entry sets $E'$ of this successors. Denote $F=H'\cap H$. For every $\kappa$-tile of $F$ the algorithm decides if the tile is a subset of $E'$. 

Replacing $\mathit{Entry}(F,H)$ with more easily computed set $\mathit{Entry'}(F,H)$ (defined in Definition~\ref{def:etacondit}) of points satisfying the entering condition in the definition of QDAA 
does not lead to a nonzero difference in the values of transitions probabilities due to Theorem~\ref{thm:epsilon_eta}.

\begin{definition}
Let $F \in \mathit{Facets}_i(H)$ be a facet of the form 
$F= [a_1,b_1]\times \ldots \{c\} \ldots \times[a_n,b_n].$ 
Define the \emph{normal vector} $\nu_H(F)$ to the facet $F$ with respect to the rectangle $H$ as the vector
$-e_i$ for $c=a_i$ and $e_i$ for $c=b_i$, where $e_i$ denotes the $i$th vector of the standard basis of $\mathbb{R}^n$ (i.e. the vector orthogonal to $F$ and pointing outside from $H$).
\end{definition}



\begin{definition}
\label{def:etacondit}
Consider rectangle $H, F,F'\in \mathit{Facets}(H)$, a mutiaffine vector field $f$, a solution $x(t)$ of the system~(\ref{eq:autsystem}) and $r>0$ satisfying
$x(0) \in F, x(r) \in F', \forall t \in (0,r): x(t) \in H.$ We say that 
$x(t)$ satisfies the \emph{entering (resp. leaving) condition} with respect to $H$ and $f$, if 
$f(x(0)) \cdot \nu_H(F) < 0)$ (resp. $f(x(r)) \cdot \nu_H(F') > 0$). 

Define $\mathit{Entry}'(F,H)= \{x \in F | \nu_{H}(F) \cdot f(x) < 0\}$.
\end{definition}
Our next step is to show (in Theorem~\ref{thm:epsilon_eta}) that the focal set of $\mathit{Entry'}(F,H) \setminus \mathit{Entry}(F,H)$ is of measure zero, thus the set difference is insignificant for our volume based notions.

Now we will extend our definition of $\mathit{Focal}(H,E,F')$ set of points from which the trajectories leave the box through a facet $F'$  (Definition~\ref{def:successors})  to $\mathit{Focal}(H,E,X)$ for an arbitrary subset $X$ of facet $F'$. 
\begin{definition}
Let $E \in \mathit{EntrySets}_{\kappa}(H), E \neq \emptyset, F'\in Facets(H)$. Let $X\subseteq F'$. 
Define $\mathit{Focal}(H,E,X)$ the set of all points $y_0 \in E$ such that there exist $\epsilon, c, \epsilon' > 0$ and a solution 
$y(t)$ of the system~(\ref{eq:autsystem}) satisfying 
$y(t) \in H$ for $t \in (0,c), y(t) \in \mathit{Inter}(H)$
for $t \in (0,\epsilon)$,
$y(c) \in X$, 
and $y(t) \in \mathit{Inter}(H')$ for $t \in (c, c+\epsilon')$.
\end{definition}

\begin{theorem}
\label{thm:epsilon_eta}
Let $E\in \mathit{EntrySets}_{\kappa}(H), E\neq \emptyset, F' \in \mathit{Facets}(H)$.
Let $V_+= \{x \in \mathit{ExitSet}(H,E,F') | \eta_{F'} \cdot f(x) > 0\}$ and let $V_0 = \mathit{ExitSet}(H,E,F') \setminus V_+$.
Then
$\lambda^*_{n'}\bigl( \mathit{Focal}(H,E,V_0) \bigr) = 0,$ where $n'$ denotes $n$ for $E=H$, and $(n-1)$ otherwise.
\end{theorem}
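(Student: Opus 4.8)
The plan is to show that the set of exit points at which the flow is tangent to (or entering through) the facet $F'$ has $(n-1)$-dimensional outer measure zero, and then to transport this measure-zero property \emph{backwards} along the flow to the focal set of initial points. The key observation is that $V_0 = \{x \in \mathit{ExitSet}(H,E,F') \mid \nu_{H}(F') \cdot f(x) \leq 0\}$ consists of points on the hyperplane $\mathbb{R}^{n-1}_i(c)$ carrying $F'$ at which the $i$th component of the vector field is nonpositive. Since every point of $\mathit{ExitSet}$ is, by Definition~\ref{def:successors}, a genuine exit point (the trajectory enters $\mathit{Inter}(H')$ immediately afterwards), at such a point the field cannot point strictly \emph{inward}, so on $V_0$ the $i$th component $f_i$ must in fact vanish. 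Thus $V_0 \subseteq \{x \in F' \mid f_i(x) = 0\}$, where $f_i = f \cdot e_i$ is a multi-affine function.

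First I would invoke Lemma~\ref{lem:zer:6}, statement~2, applied to the multi-affine function $f_i$ on the facet $F'$: either $f_i \equiv 0$ on the whole hyperplane $\mathbb{R}^{n-1}_i(c)$, or the zero set of $f_i$ within $F'$ has $(n-1)$-dimensional outer measure zero, i.e. $\lambda^*_{n-1}(V_0) = 0$. I would then argue that the first alternative is incompatible with $V_0$ arising as a set of true exit points: if $f_i$ vanished identically on the hyperplane of $F'$, no trajectory could cross $F'$ transversally into $\mathit{Inter}(H')$, contradicting the definitorial requirement in $\mathit{ExitSet}(H,E,F')$ (this is exactly the kind of reasoning already used in the proof of Lemma~\ref{lem:zer:8}). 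Hence $\lambda^*_{n-1}(V_0) = 0$.

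Next I would push this measure-zero set back through the flow to the initial set. By Lemma~\ref{thm:c1_solutions} the solutions of~(\ref{eq:autsystem}) form a $C^1$ flow, and as in the proof of Lemma~\ref{lem:zer:7} we may assume the multi-affine field is complete on a compact neighbourhood of $H$, so that $Fl^f_t$ is a one-parameter group of diffeomorphisms. Each point $y_0 \in \mathit{Focal}(H,E,V_0)$ reaches $V_0$ at some finite exit time; the map sending $y_0$ to its exit point is (on each sheet where the exit time is a smooth function, guaranteed by Theorem~\ref{thm:continuous_dependence} together with transversality away from $V_0$) locally a diffeomorphism, or at worst a smooth map, from the entry facet to $F'$. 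Since $\mathit{Focal}(H,E,V_0)$ is carried \emph{onto} the measure-zero set $V_0$ by such a map whose inverse is smooth, Lemma~\ref{thm:smooth_diffeo_zero} (preservation of zero measure under smooth maps / diffeomorphisms) gives $\lambda^*_{n'}\bigl(\mathit{Focal}(H,E,V_0)\bigr) = 0$, for both $n'=n-1$ (when $E \subseteq F$) and $n'=n$ (when $E=H$).

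The main obstacle I expect is the backward-transport step: the exit-time function and the resulting ``exit map'' from the entry set to $F'$ are only guaranteed to be smooth where the flow crosses $F'$ transversally, which is precisely where $\nu_H(F') \cdot f > 0$ — but our target $V_0$ is the \emph{tangential} locus where this fails. I would handle this by not mapping the focal points directly to $V_0$, but instead stratifying: the focal set of $V_0$ is the preimage under the (smooth) flow-and-projection map of the measure-zero set $V_0 \cup \{\text{tangency locus}\}$, and preimages of measure-zero sets under submersions are measure zero. Concretely, one extends the trajectories slightly past their exit and observes that $\mathit{Focal}(H,E,V_0)$ lies in the backward flow image $\Xi_{(-\infty,0]}(V_0) \cap E$ of a measure-zero subset of the hyperplane $\mathbb{R}^{n-1}_i(c)$; since the backward flow is a diffeomorphism and intersection with the fixed hyperplane $\mathbb{R}^{n-1}_i(c)$ reduces dimension consistently, Lemmas~\ref{thm:smooth_diffeo_zero} and~\ref{thm:compact_zero} (Fubini) together yield the zero-measure conclusion.
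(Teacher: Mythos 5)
Your proposal follows essentially the same route as the paper's (very terse) proof: first observe that $V_0$ lies in the zero set of the multi-affine component $f_i$ on the facet and is therefore of $(n-1)$-dimensional measure zero (via Lemma~\ref{lem:zer:6}), then transport this backward along the flow exactly as in the proof of Lemma~\ref{lem:zer:7}, using Lemmas~\ref{thm:smooth_diffeo_zero} and~\ref{thm:compact_zero}. Your write-up is in fact considerably more detailed than the paper's two-sentence sketch, including the honest flagging of the tangency/transversality issue in the backward-transport step.
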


\begin{proof}
First, we have to observe that trajectories from such initial points have to leave the rectangle through a set of measure zero (in fact it is a zero set of a multi-affine polynomial).

Then the proof is simillar to proof of Lemma~\ref{lem:zer:7}.
\qed
\end{proof}

\begin{remark}
\label{rem:etaeps1}
Theorem~\ref{thm:epsilon_eta} implies that replacing $\mathit{Entry}(F,H)$ with $\mathit{Entry'}(F,H)$ in the definition of QDAA 
(recall that these can be only entry sets of the successor states, not of the initial states $\langle H, H \rangle$), does not lead to a nonzero difference in the values of transitions probabilites.
\end{remark}
\begin{remark}
\label{rem:etaeps2}
Deciding if a point $x$ is an element of $\mathit{Entry}'(F,H)$ is straightforward (compared to checking the $\epsilon$-condition of Definition~\ref{def:entry}), and $\mathit{Entry}'(F,H) \subseteq \mathit{Entry}(F,H)$. 
\end{remark}
\begin{remark}
\label{rem:etaeps3}
Moreover, there is the following symmetry property. 
Let $F,F' \in \mathit{Facets}(H), x_0 \in F, x_1 \in F'$. 
There is a solution $y(t)$ of system~(\ref{eq:autsystem})  satisfying 
$y(0)=x_0, \exists t_1: y(t_1)=x_1, f(y(0))\cdot \nu_H(F) < 0, f(y(t_1))\cdot \nu_H(F')> 0$ and $y(t) \in H$ for $t\in(0,t_1)$,  if and only if 
there is a solution $x(t)$ of system $\dot{x}(t)=-f(x)$  satisfying 
$x(0)=x_1, \exists t_1: x(t_1)=x_0, -f(x(0))\cdot \nu_H(F') < 0, -f(x(t_1))\cdot \nu_H(F) > 0$ and $x(t) \in H$ for $t\in(0,t_1)$.
\end{remark}

\section{Evaluation and Case Study}
\label{sec:case_studies}
In this section the state spaces of several biological models (of dimensions two, four and seven) are explored.
Using our prototype implementation of the algorithms from Section~\ref{sec:algorithm} implemented in C++, we evaluate our approach on two exemplary biochemical systems. 
Additionally, we provide a case study held on a biochemical pathway studied in \emph{E. coli} and compare the reachability results of the case study and one of the smaller models with results obtained using the rectangular abstraction approach.

Before we proceed with the models, let us introduce several terms useful for the evaluation. For a biochemical system $\mathcal{B}=\langle n,f,\mathcal{T},\mathcal{I}_C\rangle$ we denote $\mathcal{R}(\mathcal{I}_C)\subseteq \mathit{Rect}(\mathcal{T})$ the \emph{set of all rectangles reachable from initial set $\mathcal{I}_C$}. 
For each $H\in \mathit{Rect}(\mathcal{T})$ we denote $\mathit{mem}(H)$ the subset of $\mathcal{R}(\mathcal{I}_C)$ consisting of all states reachable from the initial set with $H$ as rectangle, the so-called \emph{memory of the rectangle}~$H$, 
$\mathit{mem}(H)= \{ \langle R,E\rangle \in \mathcal{R}(\mathcal{I}_C) \mid  R=H\}$. Further we denote $\varrho$ the average number of memory states (cardinality of $\mathit{mem}(H)$ averaged over all $H\in\mathcal{R}(\mathcal{I}_C)$).
The number of QDAA states representing the memory of a rectangle is in the worst case equal to the number of all its possible entry sets. 
However, the actual values of $\varrho$ in our examples are much smaller (see Table~\ref{tab:results}). 

Let us focus on the effect of parameter $\kappa$ on cardinality of $\mathcal{R}(\mathcal{I}_C)$ and on $\varrho$. Expected behaviour of the approximation is the following. Every facet is divided into $\kappa^{n-1}$ tiles. A tile is included in the entry set $E$ of some reachable state $\langle H, E \rangle$ if the focal subset $\mathit{Focal}(H,E,A)$ fills at least half of the volume of the tile. For higher values of $\kappa$, the set $\mathit{Tiles}_{n'}^{\kappa}(\mathit{Focal}(H,E,A))$ better approximates the set $\mathit{Focal}(H,E,A)$ because of the higher $\kappa$-grid resolution. Thus with increasing $\kappa$, the quantitative information denoting the probability of reaching states in $\mathcal{R}(\mathcal{I}_C)$ can be computed more precisely. We demonstrate that on models examined below.

\subsection{Oscillatory model}
First, we consider a $2$-dimensional model which is a variant of Lotka-Volterra model with oscillatory behaviour.
The oscillatory model has the form of the following multi-affine system:



$$\begin{array}{l}
 \frac{dX}{dt} =  5\cdot X - 1\cdot X\cdot Y\\[2mm]
 \frac{dY}{dt} =  0.4\cdot X\cdot Y - 5.4\cdot Y
\end{array}
$$


We consider the following partition and initial conditions for this model:

$$\begin{array}{l}
T_X=\{i|i\in\langle 0,30\rangle\subseteq \mathbb{N}_0\}\\[2mm]
T_Y=\{i|i\in\langle 0,12\rangle\subseteq \mathbb{N}_0\}
\end{array}$$

$$\mathcal{I}_C: X\in \langle 20,21\rangle,
Y\in \langle 5,6\rangle
$$

  Results achieved on our implementation are presented in Table~\ref{tab:results} and visualized in Figure~\ref{fig:bayramov}. Black rectangles denote the initial set. 

\subsection{Enzyme kinetics}
Similarly, we examined a $4$-dimensional model of basic enzyme kinetics based on the following set of reactions:

$$\begin{array}{rl}
S + E &\ra{k_1} ES\\[2mm]
ES &\ra{k_2} S + E\\[2mm]
ES &\ra{k_3} P + E
\end{array}$$

The corresponding multi-affine ODE model considered in the paper is the following:

$$\begin{array}{l}
\frac{dS}{dt} = -0.01\cdot S\cdot E + 1\cdot ES\\[2mm]
\frac{dE}{dt} =  1\cdot ES -0.01\cdot S\cdot E + 1\cdot ES\\[2mm]
\frac{dES}{dt} = -1\cdot ES + 0.01\cdot E\cdot S -1\cdot ES\\[2mm]
\frac{dP}{dt} = 1\cdot ES
\end{array}
$$

We consider the following partition and initial conditions for this model:

$$\begin{array}{l}
T_S=\{0.01,5,10,15,25,50,60,85,95,100\}\\[2mm]
T_E=\{0.01,5,10,15,25,50,60,85,95,100\}\\[2mm]
T_{ES}=\{0.01,5,10,15,25,50,60,85,95,100\}\\[2mm]
T_P=\{0.01,5,10,15,25,50,60,85,95,100\}
\end{array}$$

$$\mathcal{I}_C: S\in \langle 25,50\rangle,
E\in \langle 95,100\rangle,
ES\in \langle 0.01,5\rangle,
P\in \langle 0.01,10\rangle
$$

Projection of the approximated phase space to the enzyme/substrate plane is shown in Figure~\ref{fig:enzyme}. 
For both the oscillatory model and the enzyme kinetics model full version of Algorithm~\ref{alg:getsuccs} (with backward simulations) was used.

\begin{figure} 
\begin{center}
\vspace{-5mm}
\includegraphics[scale=.5]{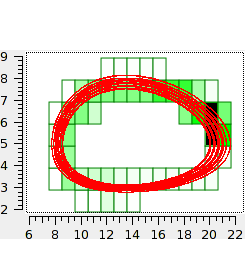}
\hspace{0cm}\includegraphics[scale=.5]{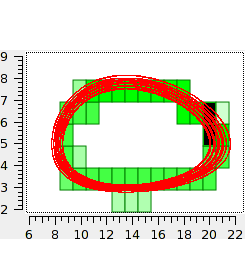}
\hspace{0cm}\includegraphics[scale=.4]{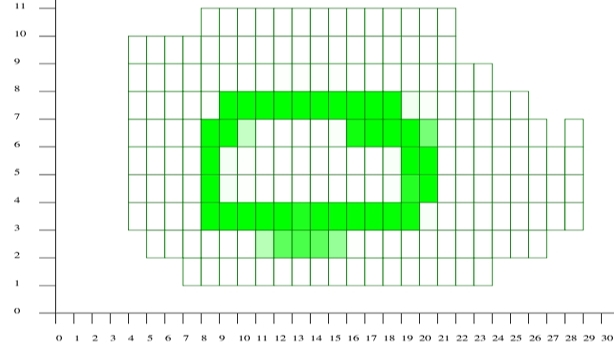}
\\
$\kappa=4$\hspace{2.5cm}$\kappa=16$\hspace{5cm}$\kappa=60$
\vspace{-3mm}
\end{center}
\caption{Reachability in oscillatory model and comparison with numerical simulation, first two figures were obtained using the full version of Algorithm~\ref{alg:getsuccs}, the third one with lines~\ref{line:back_begin}--\ref{line:back_end} omitted.
For comparison: using the rectangular abstraction transition system on this biochemical model, the whole phase space $[0,30]\times[0,12]$ is reachable from the same inital conditions.
}
\vspace{-3mm}
\label{fig:bayramov}
\end{figure}




\begin{table}
\begin{center}
\begin{tabular}{|c||c|c|c|c|c|c||c|c|c|c|}
\hline
& \multicolumn{6}{c||}{Oscillatory} & \multicolumn{4}{c|}{Enzyme}\\\hline
$\kappa$ &
4 & 8 & 16 & 32 & 64 & 128 &
4 & 5 & 6 & 7\\\hline
$|\mathcal{R}(\mathcal{I}_C)|$
& $52$ & $46$ & $40$ & $39$ & $37$ & $35$
& $76$ & $104$ & $123$ & $166$\\
$\varrho$ & 
$1.63$ & $2.2$ & $3.78$ & $2.9$ & $4.57$ & $6$ &
$4.36$ & $10.76$ & $16.8$ & $53.6$\\\hline
\end{tabular}
\end{center}
\caption{Results for the two models and several different settings of the discretization parameter~$\kappa$. 
}
\label{tab:results}
\end{table}

\begin{figure}
\begin{center}
\includegraphics[scale=.3]{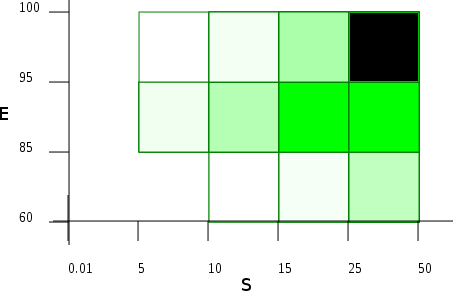}
~\includegraphics[scale=.3]{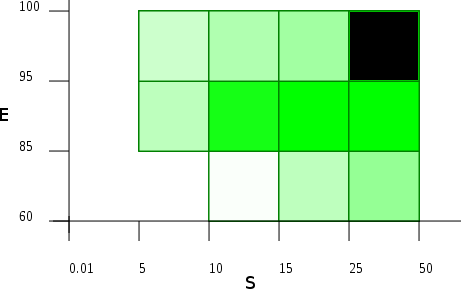}
~\raisebox{5mm}{\includegraphics[scale=.3]{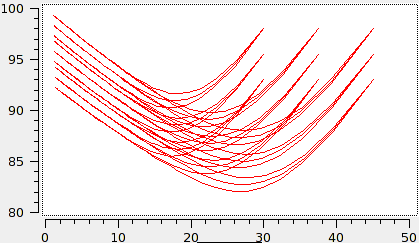}}\\
\hspace*{1.5cm}$\kappa=4$\hspace{3.3cm}$\kappa=6$\hspace{2cm}Numerical simulation
\vspace{-3mm}
\end{center}
\caption{Enzyme kinetics model -- projection of the reachable set to the enzyme/substrate plane and comparison with numerical simulation.}
\label{fig:enzyme}
\end{figure}


\subsection{Case Study on E.Coli Ammonium Assimilation Model}
\label{subsec:ammonium}

We consider a model specifying the ammonium
transport from the external environment into cells of \emph{E.~Coli}~\cite{ECMOANMODEL}. 
The model describes the ammonium transport process that takes effect
at very low external ammonium concentrations. In such conditions, the
transport process complements the deficient ammonium diffusion. The
process is driven by a membrane-located ammonium transport protein $AmtB$ that
binds external ammonium cations $NH_4ex$ and uses their electrical
potential to conduct $NH_3$ into the cytoplasm. In
Figure~\ref{fig:ammoniummodel}, biochemical reactions of this model and the scheme of the transport channel are
shown (left and middle).

 The level of pH and external ammonium concentration are considered constant. 

The system of differential equations:

$$\begin{array}{c@{$\,=\,$}l}
\frac{d[AmtB]}{dt} & -k_1 [AmtB] [NH_4ex] + k_2 [AmtB:NH_4] + k_4 [AmtB:NH_3]\\[1mm]
\frac{d[AmtB:NH_3]}{dt} & k_3 [AmtB:NH_4] - k_4 [AmtB:NH_3]\\[1mm]
\frac{d[AmtB:NH_4]}{dt} & k_1 [AmtB] [NH_4ex] - k_2 [AmtB:NH_4] - k_3 [AmtB:NH_4]\\[1mm]
\frac{d[NH_3in]}{dt} & k_4 [AmtB:NH_3] - k_6 [NH_3in][H_{in}] + k_7 [NH_4in] + k_9 [NH_3ex]\\[1mm]
\frac{d[NH_4in]}{dt} & k_{6} [NH_3in] [H_{in}] - k_5 [NH_4in] - k_7 [NH_4in]
\end{array}
$$

Constant species: $NH_3ex, NH_4ex, H_{in}, H_{ex}.$

Initial conditions and threshold numbers: 
$$
\hspace*{-1cm}\begin{array}{ccl}
T_{NH_3ex} & = &\{0, 28\cdot 10^{-9}, 29\cdot 10^{-9}, 1\cdot 10^{-5}\}\\
T_{NH_4ex}& = &\{0, 49\cdot 10^{-7}, 5\cdot 10^{-6}, 1\cdot 10^{-5}\}\\
T_{AmtB}& = &\{0,1\cdot 10^{-12},1\cdot 10^{-10},5\cdot 10^{-6},9.9\cdot 10^{-6},1\cdot 10^{-5} \}\\
T_{AmtB:NH_3}& = & \{0, 1\cdot 10^{-7}, 1\cdot 10^{-5}\}\\
T_{AmtB:NH_4}& = &\{0, 1\cdot 10^{-7}, 1\cdot 10^{-5}\}\\
T_{NH_3in}& = &\{0,1\cdot 10^{-8},1\cdot 10^{-7},1\cdot 10^{-6},11\cdot 10^{-7},1\cdot 10^{-5},1\cdot 10^{-4},1\cdot 10^{-3}\}\\
T_{NH_4in}& = &\{0,1\cdot 10^{-8},1\cdot 10^{-7},2\cdot 10^{-6},2.1\cdot 10^{-6},1\cdot 10^{-6},1\cdot 10^{-5},1\cdot 10^{-4},1\cdot 10^{-3}\}\\
\end{array}
$$

$$
\begin{array}{cl}
\mathcal{I}_C: & NH_3ex \in \langle 28\cdot 10^{-9}, 29\cdot 10^{-9}\rangle,\\
& NH_4ex \in \langle 49\cdot 10^{-7}, 5\cdot 10^{-6} \rangle, \\
& AmtB \in \langle 0, 1\cdot 10^{-5} \rangle,  \\
& AmtB:NH_3 \in \langle 0,1\cdot 10^{-5} \rangle,\\ 
& AmtB:NH_4 \in \langle 0,1\cdot 10^{-5} \rangle, \\
& NH_3in \in \langle 1\cdot 10^{-6},11\cdot 10^{-7}\rangle,\\ 
& NH_4in \in \langle 2\cdot 10^{-6},21\cdot 10^{-7} \rangle.\\
\end{array}
$$ 

\begin{figure*}
\hspace*{0cm}\scalebox{.8}{
\raisebox{1.8cm}{\scalebox{0.8}{\parbox{10cm}{
$$\begin{array}{cl}
AmtB + NH_4ex  \stackrel{k_1}{\leftarrow}\stackrel{k_2}{\rightarrow} AmtB:NH_4 & k_1=5\cdot 10^8, k_2=5\cdot 10^3\\
AmtB:NH_4 \stackrel{k_3}{\rightarrow} AmtB:NH_3 + H_{ex} & k_3 = 50\\
AmtB:NH_3 \stackrel{k_4}{\rightarrow} AmtB + NH_3in & k_4 = 50\\
NH_4in \stackrel{k_5}{\rightarrow} & k_5 = 80\\
NH_3in + H_{in} \stackrel{k_6}{\leftarrow}\stackrel{k_7}{\rightarrow} NH_4in & k_6=1\cdot 10^{15}, k_7=5.62\cdot 10^{5}\\
NH_3in \stackrel{k_8}{\leftarrow}\stackrel{k_9}{\rightarrow} NH_3ex & k_8=k_9=1.4\cdot 10^{4}
\end{array}
$$
}}}
\hspace{0.2cm}
\includegraphics[height=4.4cm]{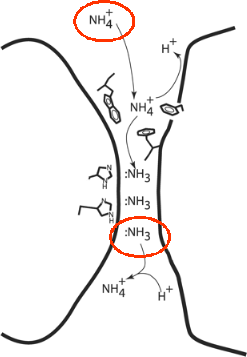}
\hspace{0.2cm}
\includegraphics[height=4.4cm]{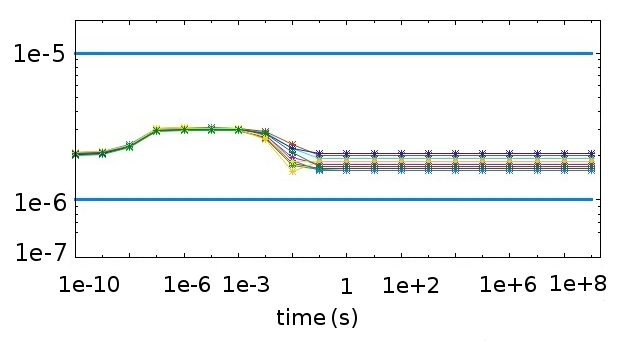}
\hspace*{-5mm}
}
\vspace{-0.5cm}
\caption{Ammonium transport model (left). 
Simulations of the ammonium assimilation model from $20$ randomly sampled points in $\mathcal{I}_C$ projected on the concentration of $NH_4in$, blue lines represent bounds on this concentration found by the QDAA - two subsequent thresholds $10^{-6},10^{-5}$ (right).}
\label{fig:ammoniummodel}
\end{figure*}

The upper bounds on concentrations of $NH_3in$ and $NH_4in$ considering the biological system with given initial conditions were estimated as $1.1 \cdot 10^{-6}$ ($NH_3in$ does not exceed the initial concentration) and $5.4 \cdot 10^{-4}$ by the rectangular abstraction (overapproximation).

Reachable intervals using Algorithm~\ref{alg:getsuccs} without the backward simulations 
were
$[10^{-8},1.1\cdot 10^{-6}]$ for $NH_3in$ ($NH_3in$ does not exceed the initial concentration),
and 
$[ 10^{-6},10^{-5}]$ for $NH_4in$. 
This results are in agreement with simulated data and in the case of the concentration of $NH_4in$ the QDAA results are by one order closer to numerical simulations than the rectangular abstraction results as can be seen in the right part of Figure~\ref{fig:ammoniummodel}. 




\section{Conclusion}
\label{sec_conclusion}

We have presented a new theoretical method for finite discrete approximation of autonomous continuous systems equipped with a measure that indirectly quantifies correspondence of the approximated behaviour with the original continuous behaviour. 
We have provided a computational technique which we implemented in a prototype software. We have examined the implementation on small dimensional models which showed satisfactory results for computing reachability.

The method can be either used as a parameterized simulation technique or employed with rectangular abstraction to quantify the extent of spurious counterexamples. Thus the method can improve the current possibilities of analysis based on model checking techniques. We leave for future work integration of this method into the software for model checking of biochemical dynamical systems~\cite{BBS10}.

At the theoretical side, we leave for future work precise clarification of our method wrt the rectangular abstraction. From the computational viewpoint, we aim to develop a parallel reachability algorithm that would make the method scalable and applicable to systems of larger dimensions.


\bibliographystyle{eptcs} 



\end{document}